\PassOptionsToPackage{table,dvipsnames}{xcolor}
\documentclass[sigconf,nonacm]{aamas} 

\usepackage[T1]{fontenc}
\usepackage{amsthm}
\usepackage{amsmath}
\usepackage{vote_notation}
\usepackage{mathtools}
\usepackage{tikz}
\usetikzlibrary{patterns}

\usepackage{hyperref}
\usepackage[capitalize]{cleveref}
\usepackage{tcolorbox}
\usepackage{booktabs}

\usepackage{caption}
\usepackage{subcaption}
\usepackage{tikz}
\usetikzlibrary{matrix,decorations.pathreplacing,tikzmark}

\newcommand\fauxsc[1]{\fauxschelper#1 \relax\relax}
\def\fauxschelper#1 #2\relax{%
  \fauxschelphelp#1\relax\relax%
  \if\relax#2\relax\else\ \fauxschelper#2\relax\fi%
}
\def\Hscale{.85}\def\Vscale{.72}\def\Cscale{1.10}
\def\fauxschelphelp#1#2\relax{%
  \ifnum`#1>``\ifnum`#1<`\{\scalebox{\Hscale}[\Vscale]{\uppercase{#1}}\else%
    \scalebox{\Cscale}[1]{#1}\fi\else\scalebox{\Cscale}[1]{#1}\fi%
  \ifx\relax#2\relax\else\fauxschelphelp#2\relax\fi}


\DeclareMathOperator{\sgnText}{\textit{sgn}}
\newcommand{\sgn}[1]{\sgnText({#1})}
\newcommand{\innerprod}[2]{\langle{#1}, {#2}\rangle}


\usepackage{graphicx}
\usepackage{color}

\urlstyle{rm}

\RequirePackage{thmtools}

\declaretheorem{falseth}
\declaretheorem[sibling=falseth]{theorem}
\declaretheorem[sibling=falseth]{lemma}
\declaretheorem[sibling=falseth]{corollary}
\declaretheorem[sibling=falseth]{claim}

\title[Condorcet Winners and Anscombe’s Paradox Under Weighted Binary Voting]{Condorcet Winners and Anscombe’s Paradox Under Weighted Binary Voting}

\author{Carmel Baharav}
\affiliation{
  \institution{ETH Zürich}
  \city{Zürich}
  \country{Switzerland}}
\email{cbaharav@ethz.ch}

\author{Andrei Constantinescu}
\affiliation{
  \institution{ETH Zürich}
  \city{Zürich}
  \country{Switzerland}}
\email{aconstantine@ethz.ch}

\author{Roger Wattenhofer}
\affiliation{
  \institution{ETH Zürich}
  \city{Zürich}
  \country{Switzerland}}
\email{wattenhofer@ethz.ch}

\begin{abstract}
We consider voting on multiple independent binary issues. In addition, a weighting vector for each voter defines how important they consider each issue.
The most natural way to aggregate the votes into a single unified proposal is \emph{issue-wise majority} (IWM): taking a majority opinion for each issue. 
However, in a scenario known as \emph{Ostrogorski's Paradox}, an IWM proposal may not be a Condorcet winner, or it may even fail to garner majority support in a special case known as \emph{Anscombe's Paradox}.

We show that it is co-NP-hard to determine whether there exists a Condorcet-winning proposal even without weights.
In contrast, we prove that the \emph{single-switch} condition provides an Ostrogorski-free voting domain under identical weighting vectors. We show that verifying the condition can be achieved in linear time and no-instances admit short, efficiently computable proofs in the form of forbidden substructures. On the way, we give the simplest linear-time test for the \emph{voter/candidate-extremal-interval} condition in approval voting and the simplest and most efficient algorithm for recognizing single-crossing preferences in ordinal voting.

We then tackle Anscombe's Paradox. Under identical weight vectors, we can guarantee a majority-supported proposal agreeing with IWM on strictly more than half of the overall weight, while with two distinct weight vectors, such proposals can get arbitrarily far from IWM. The severity of such examples is controlled by the maximum average topic weight $\tilde{w}_{max}$: a simple bound derived from a partition-based approach is tight on a large portion of the range $\tilde{w}_{max} \in (0,1)$. Finally, we extend Wagner's rule to the weighted setting: an average majority across topics of at least $\frac{3}{4}$'s precludes Anscombe's paradox from occurring.
\end{abstract}

\keywords{Condorcet; Anscombe; Ostrogorski; Multiple Referenda; Complexity; Forbidden Substructures; Restricted Domains; Single-Crossing}


\begin{document}

\pagestyle{fancy}
\fancyhead{}

\maketitle 


\section{Introduction}\label{section:motivation}

There are numerous scenarios in which people must decide on a slate of binary issues and come up with a single outcome for each topic. When political parties form a platform, they must aggregate their base's opinions and provide a unified set of stances on numerous separate issues. Similarly, when voters head to the ballot box for local elections, they typically vote yes or no on a series of initiatives. The election results in one outcome for each individual topic. On a smaller scale, a group of flatmates might decide on a series of unrelated topics and generate a plan for living together. For example: Should the kitchen be cleaned once a week or twice a week? Should we get the red couch or the yellow couch?

The most natural way to decide the final outcome in all of these scenarios is to take the majority opinion on each individual topic and aggregate them into a unified party platform, legislative agenda, or roommate contract. However, this approach can yield a surprisingly undesirable outcome: a majority of the voters may actually be more unhappy with this result than if the opposite decision were made on every issue (known as \emph{Anscombe's Paradox} \cite{anscombe1976frustration}). How can this arise? Consider a setting with 5 voters and 3 independent binary issues. The following table illustrates the preferences of each voter on each of the 3 issues: $+1$ is in favor and $-1$ is against:%
\begin{center}
    \begin{tabular}{ c | c c c }
      & Issue 1 & Issue 2 & Issue 3\\
     \hline
     $v_1 $ & \cellcolor{green!15}\texttt{+1} & \cellcolor{red!15}\texttt{-1} & \cellcolor{red!15}\texttt{-1}\\ 
     $v_2$ & \cellcolor{red!15}\texttt{-1} & \cellcolor{green!15}\texttt{+1} & \cellcolor{red!15}\texttt{-1}\\ 
     $v_3$ & \cellcolor{red!15}\texttt{-1} & \cellcolor{red!15}\texttt{-1} & \cellcolor{green!15}\texttt{+1}\\
     $v_4$ & \cellcolor{green!15}\texttt{+1} & \cellcolor{green!15}\texttt{+1} & \cellcolor{green!15}\texttt{+1}\\
     $v_5$ & \cellcolor{green!15}\texttt{+1} & \cellcolor{green!15}\texttt{+1} & \cellcolor{green!15}\texttt{+1}
    \end{tabular}
\end{center}

Now, assume each voter would only vote in favor of proposals that they agree with on more than half of the issues (in the paper, a voter will abstain when agreeing with a proposal on exactly half of the issues). 
Taking the majority on each topic yields the proposal $(+1, +1, +1)$. However, voters 1, 2, and 3 all disagree with a majority of this proposal. Therefore, if we posed this proposal for a vote, a majority of voters would vote against it. If, instead, we posed the opposite proposal $(-1, -1, -1)$, then voters 1, 2, and 3 would support it, and it would win the majority vote. Hence, in this scenario, the proposal comprising the \emph{minority} opinion on each topic wins the majority vote, whereas the proposal comprising the majority opinions fails to get majority support.

An equivalent view on the previous 
scenario
positions Anscombe's paradox in a broader context: instead of assuming a vote on a single proposal with people voting for/against it, let us assume that the vote happens between two competing proposals $p$ and $p'$ and each voter votes for whichever of $p$ and $p'$ agrees with their views on more topics, abstaining in case of equality.
Seen as such, Anscombe's paradox is the situation where an \emph{issue-wise majority} (IWM) proposal $p$ loses the majority vote against $p' = \overline{p}$, defined as the opposite proposal of $p$. A less extreme variant of the paradox, known as Ostrogorski's paradox \cite{rae1976ostrogorski} happens when an IWM proposal $p$ loses against some
proposal $p'$, not necessarily $\overline{p}$. Settling on the IWM proposal in such cases can lead to daunting situations where one of its opposers calls a final vote between $p$ and $p'$ that ``surprisingly'' unveils general dissatisfaction with what was otherwise a perfectly democratically chosen outcome.

Consequently, 
multi-issue aggregation mechanisms need to balance the tension between two majoritarian processes: majority on the individual topics and majority when proposals are compared to one another. 
In terms of the first, the chosen proposal should ideally stay somewhat close to IWM.  In terms of the second, the chosen proposal should not be easily refuted by calling a vote against some other proposal. 

Even when voters consider the issues to be of equal importance in their decision-making, we get paradoxical situations. 
However, in reality, voters rarely consider all issues to be equally important and often disagree on their importance; e.g., a Pew Research study from June 2023 indicated that in the United States, there were massive differences in perceived issue importance along partisan lines~\cite{pewpoll}. Some voting advice applications already attempt to account for personalized issue-importance, such as Smartvote~\cite{benesch2023voting}. 
Data from these applications can not only help assess how the current parties are aligning with the populace~\cite{mlsmartvote}, it can also suggest potential new party platforms. Such pre-existing infrastructure to get data on both voter opinions and issue importance underscores the pertinence of issue weights to modeling this problem setting.

\subsection{Our Contribution}

We study the aggregation of opinions on multiple independent binary issues with respect to two measures of majoritarianism: agreement with issue-wise majority and success in pairwise proposal comparisons. Our analysis considers  
two weighting models: \emph{external weights} and \emph{internal weights}. In the former, the policy-maker sets a weight to each issue reflecting its relative importance, and voters use weighted agreement when comparing any two proposals. The latter is the same, but each voter is free to choose their own weighting vector. We use the ``\emph{unweighted setting}'' to refer to the edge case where issues are equally-weighted.

\subsubsection{Condorcet winners.}

In the first part of the paper, we focus on the complexity of determining a \emph{Condorcet-winning proposal}: a proposal that does not lose in a direct vote against any other proposal. Under external weights, we find that any Condorcet winner has to be an IWM proposal, while this does not extend to internal weights. However, even in the unweighted setting with an odd number of voters, where the IWM proposal is unambiguous, checking whether this proposal is a Condorcet winner is co-NP-hard (answering an open question in \cite{constantinescu2023computing}).

\textbf{An Ostrogorski-free domain.} To mitigate this hardness result, it would be appealing to identify a large set of instances for which IWM proposals are Condorcet winners (i.e., Ostrogorski's paradox does not occur). If membership to this set could also be efficiently verified, this would allow for practically certifying ``safe instances'' where issue-wise majority is the right choice. We achieve this by the \emph{single-switch} condition of Laffond and Lain{\'e} \cite{laffond2006single}: a preference matrix over $\pm1$ is single-switch if it admits a \emph{single-switch presentation} --- a way to permute and potentially negate some columns such that $+1$ entries on each row form a prefix or a suffix. 
They show that for the unweighted case, this condition implies that Ostrogorski's paradox does not occur. We extend and simplify their analysis to show that the same holds under external weights (but not always for internal weights). We then provide a linear-time algorithm for checking whether the preference matrix is single-switch and prove that no-instances admit short proofs of this fact in the form of small forbidden subinstances (that can also be identified in linear time by a black-box reduction to the recognition problem which we have not encountered before). 

\textbf{Secondary implications.} Along the way, in this part, we make multiple secondary contributions: (i) we uncover an interesting topological connection: the set of single-switch presentations of a single-switch matrix can be compactly represented as the union of two mirror-image Möbius strips; (ii) our recognition algorithm for single-switch matrices proceeds by reducing to checking whether the columns of a matrix can be permuted so that the ones on each row form a prefix or a suffix --- while a linear-time algorithm is known for this \cite{elkind_lackner_dichotomous},\footnote{Under the name of recognizing \emph{voter/candidate-extremal-interval} preferences.} it relies on rather complex machinery --- we instead give a much simpler direct algorithm with the same guarantees; (iii) our simpler algorithm can be adapted to yield the simplest and at the same time most efficient algorithm for checking the \emph{single-crossing condition} in ranked social choice \cite{survey_restricted}. Similarly to the single-switch condition, the latter also admits a characterization in terms of small forbidden substructures \cite{bredereck_sc_forbidden_minors}, and finding such forbidden substructures can be achieved within the same time complexity using our black-box technique, a result which to the best of our knowledge is new.

\subsubsection{Representative majority-supported proposals.} Settling on a Condorcet-winning proposal would be ideal, especially under external weights where such proposals are by default IWM proposals, but in the absence of Condorcet winners, a compromise is needed. In fact, the hardness of checking whether an IWM proposal is a Condorcet winner can be seen positively: it is computationally demanding to find the proposal that defeats it, so we need not fear a vote being called against the defeating proposal. Hence, it is reasonable to relax the demanding Condorcet condition: the chosen proposal should, at the least, not lose against its opposite --- or, in the language of our first formulation of Anscombe's paradox above, should garner majority support. In the second part of the paper, we explore existence guarantees for majority-supported proposals that are as close as possible to an IWM proposal $p_{IWM}$.
So far, this has been studied in the unweighted model \cite{fritsch2022price,constantinescu2023computing}: a weakly majority-supported proposal agreeing in strictly more than half of the issues with $p_{IWM}$ exists and can be found in polynomial time, while achieving better guarantees is NP-hard. The word ``weakly'' can be dropped if majority is strict/unambiguous on at least one issue, i.e., some column of the preference matrix has differing numbers of $+1$'s and $-1$'s. We will be interested in the more complex weighted case.

\textbf{External weights.} 
We provide a matching guarantee to the unweighted case, showing that there always exists a weakly majority-supported proposal with strictly more than half the total weight in topics agreeing with $p_{IWM}$. Under a simple condition on certain higher-weight issues, we can also drop the word ``weakly.''

\textbf{Internal weights.} In sharp contrast, with as few as two different weight vectors, we construct families of instances where the distance between every weakly majority-supported proposal and the unique IWM proposal gets arbitrarily large. The severity of such examples is controlled by the maximum average topic weight $\tilde{w}_{max}$: we give a simple bound derived from a partition-based approach that is tight on a large portion of the range $\tilde{w}_{max} \in (0,1)$.

\textbf{More paradox-free instances.} Finally, we generalize Wagner's Rule of Three-Fourths \cite{wagner1983anscombe} for both external and internal weights: if the average weighted majority on the issues is at least $\frac{3}{4}$, then Anscombe's Paradox cannot occur. Without loss of generality, if $+1$ is a majority opinion on each topic, this translates to the total weight of $+1$'s in the preference matrix being at least $\frac{3}{4}$ of the total weight. A stronger condition precludes Ostrogorski's paradox under external weights: if on each column the relative weight of $+1$'s is at least $\frac{3}{4}$ of that column's total weight. This surprisingly simple check is a counterpart to the single-switch condition, once again giving a convenient characterization for a whole class of instances in which returning an IWM proposal is always a good choice.

\subsection{Further Related Work}\label{section:related}

Variations on the question of how best to reach consensus on a series of issues have been studied thoroughly. We first go over models where all topics are considered equally important.

Approval voting is a popular mechanism that is frequently used for single-winner and multi-winner elections alike~\cite{brams1978approval,fishburn1981analysis}. Here, each participant indicates their approval for a subset of candidates.
In contrast to our setting, not expressing the approval of a candidate does not give the same signal as voting for the ``no'' stance on an issue (which is a vote for the logical negation of the issue)~\cite{lackner2022approval}.

Another related field of study is judgment aggregation, where a series of judges have viewpoints on multiple topics, but there is external logical consistency required between the topics~\cite{list2012theory}. As in our problem, a reasonable method of reaching consensus is to take the majority opinion on each topic. However, the outcome may fail to be logically consistent --- this is the \emph{Discursive Dilemma}, and can occur with as few as 3 judges and 3 topics~\cite{kornhauser1993one}. There has been some investigation into conditions that avoid this paradox, like List's \emph{unidimensional alignment} \cite{list2003possibility},\footnote{The unidimensional alignment condition might appear to closely resemble the single-switch condition, as it essentially requires that the transposed preference matrix be single-switch. However, this is not equivalent, as rows and columns play different roles --- issue-wise majority aggregates along columns, not rows.} and other similar paradoxes under the name of \emph{compound majority paradoxes}~\cite{nurmi_compound_paradoxes}.

Our problem can also be viewed as a special instance of voting in combinatorial domains: multiple referenda with separable topics~\cite{brandt2016handbook}.
Multiple works explored generalizations of Anscombe's paradox 
and gave further impossibility results~\cite{benoit2010only,grandi2014common}, e.g., relating to the Pareto optimality of aggregation rules 
\cite{ozkal2006ensuring}.

Significant work has also been done to characterize when such paradoxes \emph{cannot} occur. 
Wagner proposed the Rule of Three-Fourths \cite{wagner1983anscombe}, preventing Anscombe's paradox,
as well as a generalization \cite{wagner1984avoiding}.
Laffond and Lainé showed that if no two voters disagree on too many issues, then Anscombe's is prevented \cite{laffond2013unanimity}, and for \emph{single-switch} preferences, Ostrogorski's does not occur~\cite{laffond2006single}.

We now survey proposals to augment various voting systems with weights, allowing voters to express their degrees of interest or investment in the topics. Storable voting allows participants to delay using their vote in a given election, and accumulate votes to use in later elections that they have more stake in~\cite{casella2005storable}. Quadratic voting proposes a somewhat similar system in which people are given an allotment of vote credits, and before a given election can buy a certain number of votes~\cite{lalley2018quadratic}. Both of these systems maintain that voters will use more votes for elections in which they feel strongly and believe they are likely to be pivotal in. 
Uckelman introduces a framework using goalbases to express cardinal (numeric) preferences over a combinatorial voting domain~\cite{uckelman2009more}. This, however, loses information by abstracting away the separability of issues: for us, the cardinal preferences are induced by the weighted Hamming distance.
Lang also considers augmenting combinatorial voting with preference weights and provides several computational complexity results \cite{lang2004logical}. Satisfaction approval voting~\cite{brams2015satisfaction} modifies approval voting by spreading a voter's total weight equally over all of the candidates they approve of. 
Finally, there is recent interest in studying how voters have varying stakes in elections and how to accommodate these stakes to limit distortion~\cite{brighouse2008democracy,flaniganaccounting}. 

\section{Model and Notation}\label{section:model}
For any non-negative integer $m$, write $[m] := \{1, \ldots, m\}$. Given a real number $x$, write $\sgn{x} \in \{-1, 0, 1\}$ for its \emph{sign}.
Note that for any two reals $x, y$, we have that $\sgn{x\cdot y} = \sgn{x}\cdot \sgn{y}$.
 
We consider a setting with $n$ voters and $t$ independent, binary issues/topics. The decision space for each issue is $\Bset := \{\pm 1\}$.
Each voter $i \in [n]$ is modeled as a dimension-$t$ vector $v_i \in \Bset^t$ indicating for each issue $j \in [t]$ the opinion/preference $v_{i, j} \in \Bset$ of voter $i$ on issue $j$. We call the matrix $\prefprof = (v_{i,j})_{i \in [n], j \in [t]}$ the \emph{preference profile}. We also write $\prefprof = (c_1, \ldots, c_t)$, where $c_1, \ldots, c_t \in \Bset^n$ are the columns of the matrix.

For each issue $j \in [t]$, we are consistent with previous literature ~\cite{wagner1983anscombe, wagner1984avoiding, fritsch2022price, constantinescu2023computing} and define the \emph{majority} $m_j \in [0, 1]$ on issue $j$ to be the fraction of voters that prefer $+1$ on it; i.e., the number of $+1$'s in $c_j$, divided by $n$. If $m_j > 0.5$, then the \emph{majority opinion} on issue $j$ is $+1$; if $m_j < 0.5$, then it is $-1$, and if $m_j = 0.5$, then both $+1$ and $-1$ are majority opinions on issue $j$. Equivalently, if we write $b_j$ for the sum of the entries in $c_j$ (i.e., the column's $\pm 1$-\emph{balance}), a majority opinion on issue $j$ is any $o \in \Bset$ satisfying $b_j \cdot o \geq 0$.

A \emph{proposal} is a vector $p \in \Bset^t$ that consists of a decision for each issue. We write $\overline{p}$ for the \emph{complement} of proposal $p$, which simply flips each bit of $p$; i.e., $\overline{p} = -p$. An \emph{issue-wise majority (IWM)} is a proposal $p$ where the decision on each topic is a majority opinion for the topic. 

We study two weighting models: \emph{external weights} and \emph{internal weights}. In the former, an externally supplied vector of non-negative weights $w = (w_1, \ldots, w_t)$ summing up to 1 is available, denoting the importance of each issue as seen collectively by the voters. The internal weights model generalizes this by having each voter $i \in [n]$ report an individual vector of weights $w_i = (w_{i, 1}, \ldots, w_{i, t})$; i.e., there need no longer be consensus on the importance of any fixed issue. For internal weights, we write $W$ for the matrix with rows $w_1, \ldots, w_n$. We call the \emph{voting instance} the pair $\votinginstance = (\prefprof, W)$ for internal weights and $\votinginstance = (\prefprof, w)$ for external weights.
We will also talk about the \emph{unweighted} model, which is simply external weights with $w = (1/t, \ldots, 1/t)$, and directly write $\votinginstance = \prefprof$ for it. For the remainder of this section, we assume external weights --- the internal weights model requires substantial additional notation so we postpone it to later on.

For any positive integer $m$, given two vectors $u, v \in \Bset^m$ and a vector of weights $w \in [0, 1]^m$ with unit sum, we write $\hamdist(u, v, w) := \sum_{j = 1}^{m} w_j \cdot \mathbb{I}(u_j \neq v_j)$ for the \emph{$w$-weighted Hamming distance} between $u$ and $v$. We omit the $w$ argument when referring to the unweighted Hamming distance. For convenience, we write $\innerprod{u}{v}_w := \sum_{j = 1}^{m} w_j \cdot u_j \cdot v_j$ for the standard \emph{$w$-weighted inner/dot-product}. One can easily show that $\innerprod{u}{v}_w = 1 - 2 \cdot \hamdist(u, v, w)$. 

Fix an instance $\votinginstance = (\prefprof, w)$ in the external weights model. For each voter $i$ with vote $v_i$ we define their \emph{individual} preference relation $\succcurlyeq_i$ between proposals. In particular, given two proposals $p, p' \in \Bset^t$, voter $i$ weakly prefers $p$ over $p'$, written $p \succcurlyeq_i p'$, iff $\hamdist(v_i, p, w) \leq \hamdist(v_i, p', w)$. Note that this is equivalent to $\innerprod{v_i}{p}_w \geq \innerprod{v_i}{p'}_w \iff \innerprod{v_i}{p - p'}_w \geq 0$. We write $\succ_i$ and $\approx_i$ for the strict and symmetric parts of $\succcurlyeq_i$, respectively. 
We define the \emph{collective} preference relation $\succcurlyeq_\votinginstance$ between proposals: given two proposals $p, p' \in \Bset^t$, the voters collectively weakly prefer $p$ over $p'$, written $p \succcurlyeq_\votinginstance p'$, iff $|\{i \in [n] \colon p \succ_i p'\}| \geq |\{i \in [n] \colon p' \succ_i p\}|$. Note that this is equivalent to $\sum_{i = 1}^{n}\sgn{\innerprod{v_i}{p - p'}_w} \geq 0$. We write $\succ_\votinginstance$ and $\approx_\votinginstance$ for the strict and symmetric parts of $\succcurlyeq_\votinginstance$, respectively. A proposal $p \in \Bset^t$ is a \emph{Condorcet winner} if for any other proposal $p' \in \Bset^t$ we have $p \succcurlyeq_\votinginstance p'$.

For a voting instance $\votinginstance$, Ostrogorski's paradox occurs if some IWM proposal $p_\textit{IWM}$ is not a Condorcet winner, Anscombe's paradox occurs if for some IWM proposal $p_\textit{IWM}$ we have $\overline{p_\textit{IWM}} \succ_\votinginstance p_\textit{IWM}$, 
and the Condorcet paradox happens if there is no Condorcet-winning proposal.

\section{Complexity of Determining a Condorcet Winner}

In this section, we prove that it is co-NP-hard to determine whether an instance $\votinginstance$ admits a Condorcet-winning proposal, even in the unweighted setting with odd $n$:

\begin{restatable}{theorem}{condorcet}\label{th:checking-condorcet-co-np-hard} Deciding whether an instance $\votinginstance = \prefprof$ admits a Condorcet winner is co-NP-hard in the unweighted setting with odd $n$.
\end{restatable}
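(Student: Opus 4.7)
By the result stated earlier that under external weights every Condorcet winner must be an IWM proposal, and since for odd $n$ in the unweighted setting every column has a strict majority (so the IWM proposal $p^*$ is unique), the decision problem of the theorem reduces to deciding whether $p^*$ is a Condorcet winner. I would establish co-NP-hardness by a polynomial-time reduction from 3-SAT to the \emph{complement} problem: does some proposal $p' \neq p^*$ satisfy $p' \succ_\votinginstance p^*$?

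Before describing the reduction, I would reformulate the complement problem into a clean combinatorial question. Negating any column of $\prefprof$ simultaneously flips that coordinate in every voter and every proposal, so all pairwise voter preferences are preserved; hence WLOG $p^* = (+1,\ldots,+1)$. Setting $S = \{j : p'_j = -1\}$, the identity $\sum_{j} v_{i,j}(p'_j - p^*_j) = -2\sum_{j \in S} v_{i,j}$ shows that $p' \succ_i p^*$ iff voter $i$'s entries on $S$ contain strictly more $-1$'s than $+1$'s. It therefore suffices to show NP-hardness of: given a $\pm 1$-matrix with odd $n$ and a strict $+1$-majority on every column, is there a nonempty column set $S$ on which strictly more than $n/2$ rows have a strict $-1$-majority?

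For the 3-SAT reduction, given $\phi$ with variables $x_1,\ldots,x_k$ and clauses $C_1,\ldots,C_m$, I would build a profile whose columns consist of one per variable plus a few per-clause auxiliary columns, along with two classes of voters: \emph{clause-gadget voters}, whose $\pm 1$-patterns on the variable and auxiliary columns are chosen so that, under the correspondence $\tau(x_j) = \mathsf{F} \iff j \in S$, the voter has a strict $-1$-majority on $S$ exactly when $\tau$ satisfies its clause; and \emph{balancing voters}, whose predominantly $+1$-patterns guarantee a strict $+1$ majority on every column (making $p^*$ the unique IWM), force $n$ to be odd, and prevent any $S$ that does not correspond to a ``clean'' assignment (such as one flipping only auxiliary columns) from reaching the required majority. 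The main obstacle is the gadget design: the $-1$-majority condition is a global count depending on all columns of $S$, whereas clause satisfaction is a local Boolean OR, so translating between them typically requires carefully duplicated clause voters whose auxiliary-column preferences cancel out contributions from flips outside the clause's scope, together with enough padding to suppress every dishonest choice of $S$. Carrying out this combinatorial accounting while simultaneously controlling column-majority strictness and the parity of $n$ is where the reduction's difficulty lies.
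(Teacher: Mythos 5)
Your setup is sound and matches the paper's: restricting attention (via the Condorcet-implies-IWM lemma) to the unique IWM proposal $\mathbf{1}$ for odd $n$, passing to the complement problem of deciding whether some $p'$ defeats $\mathbf{1}$, and recasting that as choosing a set $S$ of columns to flip so that "enough" rows are negative-heavy on $S$. (One small inaccuracy in the recast: the correct condition is that the number of rows with a strict $-1$-majority on $S$ exceeds the number with a strict $+1$-majority on $S$ — rows that are tied on $S$, possible when $|S|$ is even, count for neither side — not that more than $n/2$ rows have a strict $-1$-majority; your version is sufficient but not equivalent.) However, the proof has a genuine gap precisely where the work lies: the NP-hardness of the reformulated problem is never established. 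Your 3-SAT reduction is described only at the level of intended gadget classes ("clause-gadget voters" and "balancing voters"), and you yourself flag that translating a local clause-OR into a global majority count over all of $S$, while controlling column majorities and parity, is the unresolved difficulty. Without a concrete construction and a proof of both directions of the equivalence, this is a plan rather than a proof.

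The paper sidesteps exactly this difficulty by not reducing from 3-SAT at all. It reduces the complement problem (\textsc{Major}) from an intermediate problem \textsc{Unanim} — does some proposal defeat $\mathbf{1}$ \emph{unanimously}, i.e., is there a subset of columns whose sum is strictly negative in every coordinate — via a very simple padding argument: add $n-1$ voters approving every issue, which forces any majority-defeating proposal to be preferred by all of the original $n$ voters, and conversely. The hardness of \textsc{Unanim} is either cited from prior work or re-proved cleanly by a reduction from Exact Cover by 3-Sets, where the "negative in every coordinate" target is much easier to engineer with set vectors and dummy vectors than your global "more negative rows than positive rows" threshold. If you want to salvage your outline, the most promising fix is to interpose a unanimity-style intermediate problem as the paper does, rather than attempting the majority-threshold gadgetry directly from 3-SAT.
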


This could be surprising given the following observation of \cite{laffond2006single} for the unweighted model, which we extend to external weights:

\begin{lemma}\label{lemma:condorcet-implies-iwm} Consider an external-weights instance $\votinginstance$ such that $p \in \Bset^t$ is a Condorcet winner for $\votinginstance$. Then, $p$ is an IWM for $\votinginstance$.
\end{lemma}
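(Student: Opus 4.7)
My plan is proof by contraposition: if $p$ is not an IWM, I want to exhibit a proposal $p'$ with $p' \succ_\votinginstance p$, so that $p$ cannot be a Condorcet winner.

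Since $p$ is not IWM, I can pick some coordinate $j^*$ on which $p_{j^*}$ is not a majority opinion, which by the column-balance characterization in the preliminaries means $b_{j^*} \cdot p_{j^*} < 0$. The natural candidate to compare against is $p'$ obtained from $p$ by flipping only bit $j^*$. A direct computation using the weighted inner-product form of the preference relation gives $\innerprod{v_i}{p - p'}_w = 2\, w_{j^*}\, v_{i,j^*}\, p_{j^*}$ for every voter $i$, since $p - p'$ vanishes outside coordinate $j^*$. The key step is then to take $\sgn$ and sum over $i$: using multiplicativity of $\sgn$ together with $v_{i,j^*} \in \{\pm 1\}$ (so that $\sgn{v_{i,j^*}} = v_{i,j^*}$), the sum collapses to $\sgn{w_{j^*}} \cdot b_{j^*} \cdot p_{j^*}$. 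Provided $w_{j^*} > 0$, this is strictly negative, hence $p' \succ_\votinginstance p$, contradicting the Condorcet hypothesis on $p$.

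The only step to watch is the degenerate case $w_{j^*} = 0$: flipping a zero-weight coordinate leaves every voter indifferent between $p$ and $p'$, so the sign sum above is zero rather than strictly negative, and no strict collective preference is produced. I would handle this by assuming weights are strictly positive without loss of generality, since zero-weight issues have no influence on $\succcurlyeq_\votinginstance$ and the decision on such an issue can be replaced by any majority opinion without disturbing the Condorcet property. With that assumption the main obstacle disappears, and the contrapositive argument above is complete.
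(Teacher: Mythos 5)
Your argument is essentially the paper's own proof: assuming $p$ is not IWM, pick $j^*$ with $b_{j^*}\cdot p_{j^*}<0$, flip that single coordinate, and the sign computation $\sum_i \sgn{\innerprod{v_i}{p-p'}_w} = \sgn{w_{j^*}}\cdot b_{j^*}\cdot p_{j^*} < 0$ shows the flipped proposal strictly beats $p$ collectively, contradicting the Condorcet hypothesis. Your attention to the degenerate case $w_{j^*}=0$ is in fact more careful than the paper's one-line proof (which tacitly needs $w_{j^*}>0$ for the claimed strict defeat); just be aware that your ``without loss of generality'' is really an added positivity assumption rather than a genuine reduction, since a Condorcet winner can take a minority stance on a zero-weight issue, so the literal statement requires that caveat --- a caveat the paper's proof silently shares.
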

\begin{proof} Assume the contrary, then there is an issue $j \in [t]$ such that $p_j \cdot b_j < 0$. Consider the proposal $p^*$ obtained from $p$ by flipping $p_j$. Then, $p^* \succ_\votinginstance p,$ a contradiction. 
\end{proof}

\cref{lemma:condorcet-implies-iwm} shows that one can restrict the search space for Condorcet winners to IWM proposals. In the unweighted setting with odd $n$, there is a single such proposal, which we can assume without loss of generality to be $\mathbf{1} \in \Bset^t$. Nevertheless, even under these conditions, we will show that checking whether $\mathbf{1}$ is a Condorcet winner is co-NP-hard, or, equivalently, checking whether $\mathbf{1}$ is \emph{not} a Condorcet winner is NP-hard. The latter occurs if and only if there is a proposal $p \in \Bset^t$ such that $p \succ_\votinginstance \mathbf{1}$, which, recall, means that strictly more voters $i \in [n]$ prefer $p \succ_i \mathbf{1}$ than $\mathbf{1} \succ_i p$. Hence, it suffices to prove that the following problem is NP-hard:

\begin{tcolorbox}[boxsep=0mm]
\textbf{Problem ``\fauxsc{Major}''}\\
\textbf{Input}: Instance $\votinginstance = \prefprof$ in the unweighted setting with odd $n$ such that $\textbf{1}$ is the issue-wise majority. \\
\textbf{Output}: Does there exist a proposal $p \in \Bset^t$ s.t.~$p \succ_\votinginstance \mathbf{1}$?
\end{tcolorbox}

To show its hardness, we need the following auxiliary problem:

\begin{tcolorbox}[boxsep=0mm]
\textbf{Problem ``\fauxsc{Unanim}''}\\
\textbf{Input}: Voting instance $\votinginstance = \prefprof$ in the unweighted setting. \\
\textbf{Output}: Does there exist a proposal $p \in \Bset^t$ s.t.~$p \succ_i \mathbf{1}$ for all $i \in [n]$ (to be read ``$p$ \emph{unanimously} defeats $\mathbf{1}$'')?
\end{tcolorbox}

\fauxsc{Unanim} is NP-hard \cite[{Theorem 2}]{complexity_deliberative_coalition}, but the proof in \cite{complexity_deliberative_coalition} is relatively complicated: we give a simpler one in \cref{app:condorcet} by noting the equivalence to choosing a subset of columns of $\prefprof$ that
sum up to a negative amount on each row (we also give a similar reformulation of \fauxsc{Major} for the interested reader).

\begin{lemma}\label{lemma:major-is-np-hard}
\fauxsc{Major} is NP-hard.
\end{lemma}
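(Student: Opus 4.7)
The plan is a polynomial-time reduction from \fauxsc{Unanim}, which is NP-hard by the result quoted from \cite{complexity_deliberative_coalition}. Given a \fauxsc{Unanim} instance with profile $\prefprof$ on $n$ voters and $t$ issues, I first preprocess: if some column of $\prefprof$ is entirely $-1$, then flipping only that column in $\mathbf{1}$ yields a proposal that unanimously defeats $\mathbf{1}$, so the instance is trivially a yes-instance, and I may output any fixed yes-instance of \fauxsc{Major}. Otherwise, I construct the \fauxsc{Major} instance $\prefprof'$ by appending exactly $n - 1$ dummy voters, each equal to $\mathbf{1}$.

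The new profile has $n' = 2n - 1$ voters, which is always odd, as required. Since preprocessing ensures every column of $\prefprof$ contains at least one $+1$, every column of $\prefprof'$ contains at least $(n-1) + 1 = n$ entries equal to $+1$ out of $2n-1$ votes, so $\mathbf{1}$ is the unique IWM, satisfying \fauxsc{Major}'s input requirement.

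For correctness, I argue both directions. If $p$ unanimously defeats $\mathbf{1}$ in $\prefprof$, then in $\prefprof'$ all $n$ original voters satisfy $p \succ_i \mathbf{1}$ while each of the $n - 1$ dummies satisfies $\mathbf{1} \succ_i p$, yielding a signed count of $n - (n-1) = 1 > 0$, so $p \succ_{\votinginstance'} \mathbf{1}$. Conversely, suppose some $p \neq \mathbf{1}$ witnesses $p \succ_{\votinginstance'} \mathbf{1}$, and let $a$ and $b$ be the numbers of original voters with $p \succ_i \mathbf{1}$ and $\mathbf{1} \succ_i p$, respectively. Since the $n - 1$ dummies all contribute to the ``$\mathbf{1}$'' side, the majority condition reads $a > b + (n - 1)$; combined with the obvious bound $a + b \leq n$, this forces $a = n$, $b = 0$ and no indifferent original voter, i.e., $p$ unanimously defeats $\mathbf{1}$ in $\prefprof$.

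The main delicate point is calibrating the dummy count: exactly $n - 1$ dummies convert ``strict majority in the new instance'' into ``unanimity in the old'' --- any fewer would weaken the requirement to a plurality, any more would preclude $\mathbf{1}$ from being defeatable at all. A secondary subtlety is the all-$-1$-column edge case, where the natural dummy padding would violate the IWM requirement; this is handled cleanly by the preprocessing step, which exploits that such instances are automatically positive for \fauxsc{Unanim}.
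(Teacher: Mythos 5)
Your proof is correct and follows essentially the same route as the paper: a reduction from \fauxsc{Unanim} that pads the profile with $n-1$ all-$+1$ dummy voters, handles the all-$-1$-column edge case separately, and verifies both directions of the equivalence (your counting argument $a > b + (n-1)$, $a+b \leq n$ is just a more explicit phrasing of the paper's converse step).
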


\begin{proof} We reduce from the NP-hard problem \fauxsc{Unanim}. Consider an instance $\votinginstance = \prefprof$ of \fauxsc{Unanim} with $n$ voters. If there is an issue $j \in [t]$ disapproved by all voters in $\prefprof$, then $\prefprof$ is a yes-instance of \fauxsc{Unanim}: all voters prefer the proposal with $+1$ in all coordinates except the $j$-th to proposal $\mathbf{1}$. This case can be easily detected in polynomial time, so we henceforth assume the contrary.

We build an instance $\votinginstance' = \prefprof'$ of \fauxsc{Major} from $\prefprof$ by adding $n - 1$ voters approving all issues.
For $\prefprof'$ to be a valid instance for \fauxsc{Major} we need that $2n - 1$ is odd (which it is) and that $\mathbf{1}$ is the issue-wise majority. The latter holds because at least $n - 1 + 1 = n$ voters approve of each issue: the $n - 1$ added ones and at least one from the first $n$ by our assumption.
It remains to show that a proposal $p \in \Bset^t$ unanimously defeats $\mathbf{1}$ in $\prefprof$ iff it majority-defeats $\mathbf{1}$ in $\prefprof'$.

Assume $p \in \Bset^t$ unanimously defeats $\mathbf{1}$ in $\prefprof$. Then, each of the first $n$ voters in $\prefprof'$ prefers $p$ to $\mathbf{1}$. Since there are only $n - 1 < n$ other voters in $\prefprof'$, a majority of the voters in $\prefprof'$ prefer $p$ to $\mathbf{1}$.

Conversely, assume $p \in \Bset^t$ majority-defeats $\mathbf{1}$ in $\prefprof'$. Clearly, $p \neq \mathbf{1}$ has to hold, so all of the $n - 1$ added voters prefer $\mathbf{1}$ to $p$. To counteract this, since $p \succ_{\votinginstance'} \mathbf{1},$ the first $n$ voters in $\prefprof'$ must prefer $p$ to $\mathbf{1}$, meaning that $p$ unanimously defeats $\mathbf{1}$ in $\prefprof$. 
\end{proof}

For completeness, we put the pieces together to give a self-contained proof of \cref{th:checking-condorcet-co-np-hard} in \cref{app:condorcet}.

\section{An Ostrogorski-free Domain}

As we have seen, at least for external weights, a Condorcet-winning proposal has to be an issue-wise majority proposal. Yet, we proved that determining whether one of them is actually Condorcet-winning is co-NP-hard, even in the unweighted case with odd $n$, where there is only one such proposal to check. To mitigate this hardness result, it would be useful if we could identify a large set of instances for which IWM proposals are guaranteed to be Condorcet-winning, i.e., Ostrogorski's paradox does not occur. Laffond and Lain{\'e} \cite{laffond2006single} introduced the \emph{single-switch} condition, which achieves exactly this goal for the unweighted setting. Furthermore, they showed that it is the most general condition preventing Ostrogorski's paradox among conditions that do not consider the multiplicities of the votes (i.e., conditions defining a \emph{domain}) or whether a vote is negated or not (i.e., they only look at the set $\{\{v_i, \overline{v_i}\} \mid i \in [n]\}$ and not at how many times each $v_i$ or $\overline{v_i}$ is repeated). In particular, if an instance in the unweighted model is not single-switch, then it is possible to add copies of some of the votes $v_i$ (or their negations $\overline{v_i}$) so that some issue-wise majority proposal is not a Condorcet winner. Two important questions underpinning their condition are: (i) Does it still guarantee the existence of a Condorcet winner in the (at least externally) weighted setting? (ii) Is it possible to check whether it applies in polynomial time? If not, are there short proofs of this fact? In this section, we answer all these questions in the affirmative.

A preference profile (matrix) $\prefprof = (c_1, \ldots, c_t)$ is \emph{single-switch} (SSW) if we can flip (multiply by $-1$ all entries in) some columns and then permute the columns to get a new profile $\prefprof'$ such that $+1$ entries on every row form either a prefix or a suffix, in which case we say that $\prefprof'$ is an SSW presentation of $\prefprof$. We allow flipping no columns or leaving all columns in their original place. Intuitively, issues are arranged along a left-right axis. Left-wing voters approve a prefix of issues, with the length depending on their tolerance, while right-wing voters similarly approve a suffix of issues.\footnote{This shares similarities with several related concepts, such as single-peaked and single-crossing preferences. However, unlike most other notions, we allow issues to be flipped before ordering them, as they can be logically negated without changing meaning.} See \cref{fig:single-switch} for an illustration of the notion. A voting instance $\votinginstance$ is single-switch if its preference profile $\prefprof$ is single-switch.

\newcommand{\drawcell}[2]{\draw[line width = 2.0mm, opacity = 0.15, color = {#2}] ({#1}.west) -- ({#1}.east);}
\newcommand{\drawline}[3]{\draw[line width = 2.0mm, opacity = 0.15, color = {#3}] ({#1}.west) -- ({#2}.east);}
\newcommand{\drawweight}[2]{\draw[line width = 2.0mm, opacity = 0.15, color = {#2}, text = purple] ({#1}.west) -- ({#1}.east);}
\newcommand{\czero}{red}
\newcommand{\cone}{green}
\setlength{\tabcolsep}{4pt}

\begin{figure}[t]
\centering
\begin{subfigure}{0.45\linewidth}
\centering
\begin{tabular}{cccccc}
      1 & 2 & 3 & 4 & 5 & 6  \\
      \hline
       \tikzmarknode{ns_1_1}{\texttt{+1}} & \tikzmarknode{ns_1_2}{\texttt{+1}} & \tikzmarknode{ns_1_3}{\texttt{+1}} & \tikzmarknode{ns_1_4}{\texttt{-1}} & \tikzmarknode{ns_1_5}{\texttt{+1}} & \tikzmarknode{ns_1_6}{\texttt{-1}} \\
       \tikzmarknode{ns_2_1}{\texttt{+1}} & \tikzmarknode{ns_2_2}{\texttt{-1}} & \tikzmarknode{ns_2_3}{\texttt{+1}} & \tikzmarknode{ns_2_4}{\texttt{+1}} & \tikzmarknode{ns_2_5}{\texttt{-1}} & \tikzmarknode{ns_2_6}{\texttt{+1}} \\
       \tikzmarknode{ns_3_1}{\texttt{+1}} & \tikzmarknode{ns_3_2}{\texttt{+1}} & \tikzmarknode{ns_3_3}{\texttt{-1}} & \tikzmarknode{ns_3_4}{\texttt{-1}} & \tikzmarknode{ns_3_5}{\texttt{+1}} & \tikzmarknode{ns_3_6}{\texttt{-1}} \\
\end{tabular}
\caption{Profile $\prefprof$.}
\label{fig:ss_a}
\end{subfigure} 
\begin{subfigure}{0.53\linewidth}
\centering
\begin{tabular}{cccccc}
      2 & 1 & 3 & 4 & $\overline{5}$ & 6  \\
      \hline
       \tikzmarknode{ss_1_1}{\texttt{+1}} & \tikzmarknode{ss_1_2}{\texttt{+1}} & \tikzmarknode{ss_1_3}{\texttt{+1}} & \tikzmarknode{ss_1_4}{\texttt{-1}} & \tikzmarknode{ss_1_5}{\texttt{-1}} & \tikzmarknode{ss_1_6}{\texttt{-1}} \\
       \tikzmarknode{ss_2_1}{\texttt{-1}} & \tikzmarknode{ss_2_2}{\texttt{+1}} & \tikzmarknode{ss_2_3}{\texttt{+1}} & \tikzmarknode{ss_2_4}{\texttt{+1}} & \tikzmarknode{ss_2_5}{\texttt{+1}} & \tikzmarknode{ss_2_6}{\texttt{+1}} \\
       \tikzmarknode{ss_3_1}{\texttt{+1}} & \tikzmarknode{ss_3_2}{\texttt{+1}} & \tikzmarknode{ss_3_3}{\texttt{-1}} & \tikzmarknode{ss_3_4}{\texttt{-1}} & \tikzmarknode{ss_3_5}{\texttt{-1}} & \tikzmarknode{ss_3_6}{\texttt{-1}} \\
\end{tabular}%
\caption{Single-switch presentation of $\prefprof$.}
\label{fig:ss_b}
\end{subfigure}
\caption{The profile $\prefprof$ in \cref{fig:ss_a} is single-switch because its columns can be permuted and flipped as in \cref{fig:ss_b} to ensure that ones on each row form a prefix or a suffix.}
\label{fig:single-switch}
\begin{tikzpicture}[overlay,remember picture, shorten >=-3pt, shorten <= -3pt]
\drawcell{ns_1_1}{\cone}
\drawcell{ns_1_2}{\cone}
\drawcell{ns_1_3}{\cone}
\drawcell{ns_2_1}{\cone}
\drawcell{ns_2_2}{\czero}
\drawcell{ns_2_3}{\cone}
\drawcell{ns_2_4}{\cone}
\drawcell{ns_2_6}{\cone}
\drawcell{ns_3_1}{\cone}
\drawcell{ns_3_2}{\cone}
\drawcell{ns_1_4}{\czero}
\drawcell{ns_1_6}{\czero}
\drawcell{ns_3_3}{\czero}
\drawcell{ns_3_4}{\czero}
\drawcell{ns_3_6}{\czero}
\drawcell{ns_1_5}{\cone}
\drawcell{ns_2_5}{\czero}
\drawcell{ns_3_5}{\cone}
\drawline{ss_1_1}{ss_1_3}{\cone}
\drawline{ss_2_2}{ss_2_6}{\cone}
\drawline{ss_3_1}{ss_3_2}{\cone}
\drawline{ss_1_4}{ss_1_6}{\czero}
\drawcell{ss_2_1}{\czero}
\drawline{ss_3_3}{ss_3_6}{\czero}
\end{tikzpicture}
\end{figure}

\subsection{For External Weights Single-Switch Prevents Ostrogorski's Paradox}

We find that, assuming external-weights, the single-switch condition guarantees that all IWM proposals are Condorcet winners. To show this, we first show that every issue-wise majority proposal does not lose against its opposite, i.e., Anscombe's paradox does not occur. We do this by streamlining and adapting the argument in \cite{laffond2006single} (which was only for the unweighted model). Because the single-switch condition is closed under removing issues, the general statement then follows easily by noting that, under external weights, Ostrogorski's paradox happens if and only if there is a subset of issues inducing an instance where Anscombe's paradox happens. The details are deferred to \cref{app:external-weights-single-switch}.

\begin{restatable}{theorem}{externalswitchcondorcet}\label{th:external-single-switch-no-condorcet} In the external-weights model, every issue-wise majority proposal of a single-switch instance is a Condorcet winner.
\end{restatable}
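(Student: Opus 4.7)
The plan is the two-step reduction hinted at in the excerpt: first reduce to an Anscombe-freeness statement on sub-instances, then prove that statement by adapting the unweighted argument of Laffond--Lain\'e.

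For the reduction, suppose for contradiction that an IWM proposal $p^*$ of a single-switch instance $\votinginstance = (\prefprof, w)$ is beaten by some proposal $p'$. Let $S = \{j \in [t] : p^*_j \neq p'_j\}$ and form the sub-instance $\votinginstance_S$ whose preference profile is $\prefprof$ restricted to the columns in $S$, with weights inherited from $w$ (renormalizing to unit sum only rescales inner products, hence preserves the sign of each voter preference). Since $p^* - p'$ vanishes off $S$ and equals $2p^*_j$ on each $j \in S$, for every voter $i$ the quantity $\innerprod{v_i}{p^*-p'}_w$ has the same sign as the restricted inner product $\innerprod{v_i|_S}{p^*|_S - \overline{p^*|_S}}_{w|_S}$, so the collective preference transfers: the restriction of $p^*$ to $S$ loses against its complement in $\votinginstance_S$. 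Per-issue majorities depend only on their own column, hence $p^*|_S$ is still an IWM of $\votinginstance_S$; and the SSW property is closed under dropping columns, since the restriction of any SSW presentation of $\prefprof$ to $S$ is itself an SSW presentation. Thus $\votinginstance_S$ is a single-switch external-weights instance exhibiting Anscombe's paradox, so it suffices to rule this out.

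For the Anscombe-freeness claim, I would WLOG assume $\prefprof$ is already in SSW form, since column flips and permutations are bijections on $\Bset^t$ that are equivariant with voter preferences, with the IWM property, and with complementation. Each row of $\prefprof$ is then a prefix or a suffix of $+1$'s with a switch index $k_i$. I would follow the skeleton of Laffond--Lain\'e's argument, now phrased in the $w$-metric: partition voters into left (prefix) and right (suffix) groups, sort each group by $k_i$, and use the fact that a voter's $w$-weighted agreement with $p^*$ is controlled by how $k_i$ sits relative to the switch points of $p^*$. This one-dimensional structure should let me pair each voter whose $w$-agreement with $p^*$ falls strictly below $1/2$ with a distinct voter whose agreement strictly exceeds $1/2$, so that $\sum_i \sgn{\innerprod{v_i}{p^*}_w} \geq 0$; voters at exact equality contribute $0$ and can be ignored.

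The main obstacle I anticipate is precisely this pairing step. In Laffond--Lain\'e's unweighted argument the per-column majorities defining $p^*$ and the per-voter comparisons between $p^*$ and $\overline{p^*}$ are both counts, so they live in the same counting world and align automatically; under external weights, however, the IWM is still defined by unweighted column majorities while voter comparisons take place in the $w$-metric, so one must carry extra bookkeeping to keep the pairing valid. Making this interplay precise — and verifying that edge cases (zero-weight columns, voters with empty or full approval sets, columns where the majority is tied) do not disturb the pairing — is where the bulk of the work lies. Crucially, the argument will only work because all voters share the same weight vector $w$, exactly the hypothesis that fails for internal weights and is consistent with the paper's later claim that the result does not extend there.
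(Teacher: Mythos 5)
Your first step is correct and matches the paper: the restriction to the disagreement set $S$, the observation that $\innerprod{v_i}{p^*-p'}_w$ has the same sign as the restricted comparison of $p^*|_S$ against its complement (renormalization being a positive rescaling), that $p^*|_S$ remains an IWM of the sub-instance, and that the single-switch property survives column deletion — this is exactly the paper's reduction of Ostrogorski's paradox to Anscombe's paradox on a sub-instance.

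The gap is the second half, which is the actual content of the theorem: you do not prove that Anscombe's paradox cannot occur in a single-switch external-weights instance, you only announce a plan and then flag its decisive step as unresolved. The ``pairing'' strategy is circular as stated: the claim to be proved is precisely $\sum_{i} \sgn{\innerprod{v_i}{p^*}_w} \geq 0$, i.e.\ that the voters with negative sign are at most as numerous as those with positive sign, and asserting that an injection from the former into the latter ``should'' exist is just a restatement of that claim; the one-dimensional switch-point structure does not hand you such an injection, because a single heavy column can flip the sign of an entire contiguous block of voter types on one side, and the IWM condition is an unweighted, per-column count that does not interact with the $w$-metric in any obvious voter-by-voter way. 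The paper's proof avoids pairing altogether. Writing $x_i$ for the (signed) number of voters of prefix type $e_i$ minus those of type $\overline{e_i}$, the goal becomes $\sum_{i=1}^{t} x_i \cdot \sgn{\innerprod{e_i}{p}_w} \geq 0$. Substituting $x_i = (b_i - b_{i+1})/2$, where $b_j$ is the column balance, and telescoping (an Abel-summation step using $b_{t+1}\cdot\sgn{\innerprod{e_t}{p}_w} = b_1\cdot\sgn{\innerprod{e_0}{p}_w}$) turns this into $\sum_{i=1}^{t} b_i\bigl(\sgn{\innerprod{e_i}{p}_w} - \sgn{\innerprod{e_{i-1}}{p}_w}\bigr) \geq 0$, and each term is nonnegative individually: since $e_i - e_{i-1}$ is supported on coordinate $i$ alone, $\innerprod{e_i - e_{i-1}}{p}_w = 2 w_i p_i$, so by monotonicity of the sign function the $i$-th term is nonnegative as soon as $b_i w_i p_i \geq 0$, which is exactly the IWM condition $b_i p_i \geq 0$ together with $w_i \geq 0$. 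This is where the unweighted column majorities and the $w$-weighted voter comparisons are reconciled — termwise, not through any matching of dissatisfied with satisfied voters — and it is the idea missing from your proposal.
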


\subsection{Recognizing Single-Switch Profiles}

The result in the previous section is particularly appealing: in the external-weights model, if the preferences are single-switch, any issue-wise majority proposal is a Condorcet winner. This bypasses our previous hardness result in the case of single-switch preferences. However, this is only useful provided one can quickly tell whether a given profile $\prefprof$ is single-switch or not. In this section, we show that this can be determined in linear time, i.e., $O(nt)$. For yes-instances, our algorithm also determines an SSW presentation $\prefprof'$ (implicitly also the permutation and flips used to obtain it). Given $\prefprof'$, we also characterize the set of all SSW presentations as the union of two ``orbits'' around $\prefprof'$ and its column-reversal. These orbits can be attractively interpreted topologically as two mirror-image Möbius strips. To begin, we need the following observation following easily from the case $n = 1$. See \cref{app:reco-ssw} for the proof.

\begin{restatable}{lemma}{lemmacyclicshift}\label{lemma:cyclic-shift-and-negate} Consider a profile $\prefprof$ admitting an SSW presentation $\prefprof' = (c_1, \ldots, c_t)$. Then, $\prefprof'_r := (c_2, \ldots, c_t, \overline{c_1})$ is also a SSW presentation of $\prefprof$. Furthermore, any $t$ (circularly) consecutive columns in $\prefprof'' := (c_1, \ldots, c_t, \overline{c_1}, \ldots, \overline{c_t})$ form an SSW presentation of $\prefprof$.
\end{restatable}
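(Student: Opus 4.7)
The plan is to split the lemma into two parts. Since the SSW presentation property is defined row-by-row, I would first reduce the first claim to the case $n = 1$: it suffices to check that the transformation $(c_1, \ldots, c_t) \mapsto (c_2, \ldots, c_t, \overline{c_1})$ sends a single row whose $+1$'s form a prefix or suffix to a row with the same property. This reduces to a short case analysis. Suppose the $+1$'s form a prefix of length $k \in \{0, 1, \ldots, t\}$. For $1 \le k \le t-1$, the transformation drops a leading $+1$ and appends a $-1$, yielding a prefix of length $k-1$. For $k = 0$ the row is all $-1$'s and becomes $(-1,\ldots,-1,+1)$, a suffix of length $1$; for $k = t$ the row is all $+1$'s and becomes $(+1,\ldots,+1,-1)$, a prefix of length $t-1$. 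The symmetric case where the $+1$'s form a suffix is handled analogously.

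I also need to argue that $\prefprof'_r$ is an SSW presentation of $\prefprof$ itself, not merely of $\prefprof'$. This is automatic: since the columns of $\prefprof'$ are obtained from those of $\prefprof$ by a permutation composed with some flips, cyclically shifting by one position and flipping the displaced column preserves this, so the same is true of $\prefprof'_r$.

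For the second claim, I would proceed by induction on the number of times the transformation is applied to $\prefprof'$. A straightforward induction shows that $j$ applications produce exactly the $t$ circularly-consecutive window of $\prefprof''$ starting at position $j+1$: for $0 \le j \le t-1$ this window equals $(c_{j+1}, \ldots, c_t, \overline{c_1}, \ldots, \overline{c_j})$, and for $t \le j \le 2t-1$ it equals $(\overline{c_{j-t+1}}, \ldots, \overline{c_t}, c_1, \ldots, c_{j-t})$; after $2t$ applications one cycles back to $\prefprof'$. Since each application preserves the SSW presentation property by the first claim, every such window is an SSW presentation of $\prefprof$, proving the second part.

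The main obstacle will be stating the single-row case analysis cleanly: one must explicitly handle the degenerate ends (the all-$+1$ and all-$-1$ rows), where prefixes and suffixes coincide for lengths $0$ and $t$ and where the transformation can toggle the row between the prefix and suffix regime. Once this bookkeeping is pinned down, the inductive step for the second claim follows by simply re-applying the first claim.
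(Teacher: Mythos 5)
Your proposal is correct and takes essentially the same route as the paper: reduce to a single row, do the short case analysis on whether the $+1$'s form a prefix or suffix (the paper phrases this as a split on the value of $c_1$), and then obtain all $2t$ circular windows of $\prefprof''$ by iterating the one-step shift-and-negate transformation. Your explicit handling of the degenerate all-$+1$/all-$-1$ rows and the inductive window formula are just slightly more detailed write-ups of the paper's argument.
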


Hence, any SSW presentation $\prefprof'$ of a profile $\prefprof$ corresponds to a set of $2t$ such presentations that we call the \emph{orbit} $O_{\prefprof'}$ of $\prefprof'$. Formally, these are the $2t$ profiles that can be obtained by taking $t$ (circularly) consecutive columns in $\prefprof''$ in the above. 
Note that the orbits of any two SSW presentations either coincide or are disjoint, so the set of all orbits partitions the set of SSW presentations of $\prefprof$. Also, the $2t$ profiles in $O_{\prefprof'}$ are pairwise distinct, which can be easily seen by considering the case $n = 1$, under which $\prefprof''$ is circularly equivalent to a list of $t$ minus ones followed by $t$ ones. This reasoning additionally allows us to assign to each orbit a \emph{representative}, namely the profile with all $-1$'s on the first row: 

\begin{corollary}\label{coro:orbit-minus-one} Every orbit contains exactly one profile where the first row is all $-1$'s.
\end{corollary}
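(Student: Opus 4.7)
The plan is to reduce to the structure of the first row alone. Let $\prefprof'=(c_1,\ldots,c_t)$ be any SSW presentation in the orbit, and let $r=(r_1,\ldots,r_t)$ be the first row of $\prefprof'$. Then the first row of $\prefprof''=(c_1,\ldots,c_t,\overline{c_1},\ldots,\overline{c_t})$ is $R=(r_1,\ldots,r_t,-r_1,\ldots,-r_t)\in \Bset^{2t}$.

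Because $\prefprof'$ is an SSW presentation, the $+1$ entries of $r$ form either a prefix or a suffix, so $r=(+1^k,-1^{t-k})$ or $r=(-1^{t-k},+1^k)$ for some $0\le k\le t$. A direct check in both cases shows that $R$, viewed circularly (i.e., identifying indices $2t$ and $0$), consists of exactly one maximal run of $+1$'s of length $t$ immediately followed by one maximal run of $-1$'s of length $t$ — in other words, $R$ is a circular rotation of the string $(-1^t,+1^t)$. This is exactly the $n=1$ observation already invoked in the paragraph preceding the corollary to argue that the $2t$ members of $O_{\prefprof'}$ are pairwise distinct.

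Each profile in $O_{\prefprof'}$ corresponds uniquely to a starting index $i\in\{1,\ldots,2t\}$, namely the circular window $(d_i,d_{i+1},\ldots,d_{i+t-1})$, where $d_j=c_j$ for $j\le t$ and $d_j=\overline{c_{j-t}}$ for $j>t$ (indices modulo $2t$). The first row of this window equals the circular length-$t$ substring of $R$ starting at $i$, and so is all $-1$'s iff that window coincides with the unique length-$t$ circular run of $-1$'s in $R$. Since that run is contiguous and has length exactly $t$, there is exactly one such $i$, which gives both existence and uniqueness.

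The argument is essentially routine once the previous $n=1$ remark is invoked; the only thing to be careful about is bookkeeping on the prefix-vs-suffix case split and modular indexing, but both cases collapse to counting length-$t$ windows in a circular bitstring of the form $-1^t\,+1^t$, which trivially admits a single all-$-1$'s window.
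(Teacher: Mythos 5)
Your proof is correct and follows the same route as the paper: the paper justifies the corollary by the very same $n=1$ observation that the doubled sequence $\prefprof''$ restricted to the first row is circularly a run of $t$ minus ones followed by $t$ ones, so exactly one of the $2t$ length-$t$ windows is all $-1$'s. You have simply spelled out the prefix/suffix case check and the window-counting that the paper leaves implicit.
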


Orbits can be understood through a topological lens: For the orbit $O_{\prefprof'}$ of $\prefprof' = (c_1, \ldots, c_t)$ take an $n \times t$ rectangular piece of paper and write the columns $c_1, \ldots, c_t$ on the front and $\overline{c_1}, \ldots, \overline{c_t}$ on the back, such that for each $i \in [t]$, column $c_i$ on the front aligns with column $\overline{c_i}$ on the back. Then, give the paper a length-wise half-twist and glue the left and right sides to form a surface known as a Möbius strip: see~\cref{fig:moebius}. Cutting along the width of the strip between \emph{any} two columns recovers an $n \times t$ piece of paper with one SSW presentation on one side and its opposite on the other side. 
In high-level terms, each orbit is topologically a Möbius strip.

\begin{figure}[t]
    \centering
    \begin{tikzpicture}[scale=0.8]
        \fill[green!20] (9:2) arc[start angle=9, end angle=216-0.5, radius = 2] -- (216-0.5:3) arc[start angle=216-0.5, end angle=9, radius=3] .. controls (0:{3/cos(4.5)}) and (0:{2.5/cos(4.5)}) .. (0:2.5) (0:2.5) .. controls (0:{2.5/cos(4.5)}) and (0:{2/cos(4.5)}) .. (9:2);
        
        \fill[red!20] (360-9:2) arc[start angle=360-9, end angle=216+0.5, radius = 2] -- (216+0.5:3) arc[start angle=216+0.5, end angle=360-9, radius=3] .. controls (0:{3/cos(4.5)}) and (0:{2.5/cos(4.5)}) .. (0:2.5) (0:2.5) .. controls (0:{2.5/cos(4.5)}) and (0:{2/cos(4.5)}) .. (360-9:2);
    
        \draw[thick] (9:2) arc[start angle=9, end angle=216-0.5, radius = 2];
        \draw[thick] (216+0.5:2) arc[start angle=216+0.5, end angle=360-9, radius = 2];
        
        \draw[thick] (9:3) arc[start angle=9, end angle=216-0.5, radius = 3];
        \draw[thick] (216+0.5:3) arc[start angle=216+0.5, end angle=360-9, radius = 3];

        \draw[thick] (9:2) .. controls (0:{2/cos(4.5)}) and (0:{2.5/cos(4.5)}) .. (0:2.5);
        \draw[thick] (0:2.5) .. controls (0:{2.5/cos(4.5)}) and (0:{3/cos(4.5)}) .. (360-9:3);
        \draw[thick, dashed] (9:3) .. controls (0:{3/cos(4.5)}) and (0:{2.5/cos(4.5)}) .. (0:2.5);
        \draw[thick, dashed] (0:2.5) .. controls (0:{2.5/cos(4.5)}) and (0:{2/cos(4.5)}) .. (360-9:2);
        
        \foreach \i in {1,2,3,4,5,7,8,9} \draw[thick] (\i * 36:2) -- (\i * 36:3);
        \draw[thick] (216-0.5:2) -- (216-0.5:3);
        \draw[thick] (216+0.5:2) -- (216+0.5:3);
    
        \foreach [count=\i] \chara in {a,...,f}{
            \node[rotate={36 * 7-36 * \i + 22 - 36 - 90}] at (7 * 36 -\i * 36 - 36 + 18:2.5) {\Large${c_{\i}}$};
        };
        \foreach \i in {7,8,9,10}{
            \node[rotate={17 * 36 - 36 * \i + 14 - 36 + 90}] at (17 * 36 - \i * 36 - 36 + 18:2.5) 
            {\Large$\overline{c_{\i}}$};
        };
        
        \fill[fill=black!95, rounded corners=1.5pt] (216-1.6:1.94) -- (216-1.4:3.06) -- (216+1.4:3.06) -- (216+1.6:1.94) -- cycle;
    \end{tikzpicture}
    \caption{Möbius strip of orbit $O_{\prefprof'}$ for $\prefprof' = (c_1, \ldots, c_{10})$. We start with a rectangular piece of paper of length 10 and write $(c_1, \ldots, c_{10})$ on the (green) front side and $(\overline{c_1}, \ldots, \overline{c_{10}})$ on the (red) backside. We then give the paper a length-wise half-turn and glue the endpoints (bold strip). This gives raise to a surface with a single continuous side. 
    }
    \label{fig:moebius}
\end{figure}
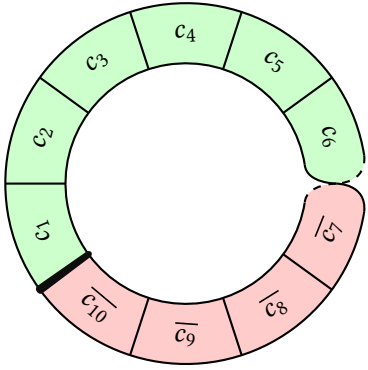

To check whether a profile $\prefprof$ is single-switch, by \cref{coro:orbit-minus-one}, it suffices to check for presentations with all $-1$'s in the first row: all other presentations are generated by the orbits of such presentations. There is a simple strategy to achieve this: flip columns in $\prefprof$ to make the first row all $-1$'s, and then check whether columns in the resulting profile can be permuted to ensure that ones on each row form a prefix or a suffix. This amounts to recognizing single-switch-no-flips profiles:
A profile $\prefprof$ is \emph{single-switch-no-flips} (SSWNF) if its columns can be permuted to get a new profile $\prefprof'$ such that $+1$ entries on every row form either a prefix or a suffix, in which case we say that $\prefprof'$ is an SSWNF presentation of $\prefprof$.

\textbf{Recognizing single-switch-no-flips profiles.} Telling whether a profile $\prefprof = (c_1, \ldots, c_t)$ is single-switch-no-flips can be achieved by appending a negated copy of $\prefprof$ underneath \cite{elkind_lackner_dichotomous} and running a solver for the  \emph{Consecutive Ones Problem} (C1P), which can be solved in $O(n t)$ time \cite{booth_lueker}, implying the same about our problem. However, such solvers are complicated and notoriously error-prone: most available implementations fail on at least some edge cases \cite{experimental_pq_trees}. Moreover, reducing to C1P does not utilize the additional structure present in our problem and hence does not shed light on the structure of all solutions, as we set out to do. We give a much simpler algorithm achieving the $O(n t)$ time-bound: Find an index $x$ maximizing $d_H(c_1, c_x)$. Then, sort (using Counting Sort) the columns based on their Hamming distance from $c_x$ to get a profile $\prefprof' = (c_1', \ldots, c_t')$ where $d_H(c_x, c_i') \leq d_H(c_x, c_{i + 1}')$ for $i \in [t - 1]$ (i.e., ties in Hamming distance can be broken arbitrarily). We claim that either $\prefprof'$ is the unique SSWNF presentation of $\prefprof$ (up to reversing the order of the columns), or there is no such presentation, so we can easily check in additional $O(n t)$ time whether the candidate solution works. All required claims shown formally in \cref{app:reco-ssw}:

\begin{restatable}{theorem}{recosswnf}\label{th:sswnf-reco} There is a \emph{simple} $O(nt)$ algorithm computing (or deciding the inexistence of) an SSWNF presentation of a profile $\prefprof$. Moreover, if it exists, this presentation is unique up to reversing column order.
\end{restatable}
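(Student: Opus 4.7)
My plan is to reduce everything to a single structural observation about SSWNF presentations, which I will call the \emph{threshold formula}. In any such presentation $\prefprof' = (c_1', \ldots, c_t')$, each row $r$ admits a threshold $\tau_r$ separating its $+1$ side from its $-1$ side (with the side label determined by whether row $r$ is a prefix row or a suffix row). A direct calculation then yields
\[
    d_H(c_i', c_j') \;=\; |\{ r : \tau_r \text{ lies strictly between positions } i \text{ and } j \}|.
\]
Fixing any reference column, this distance is monotone along the SSWNF order as one moves away from the reference, so it attains its maximum at position $1$ or $t$. Crucially, if two positions $j_1 < j_2$ are at equal distance from such an extreme reference, then no threshold separates them, which by the prefix/suffix shape forces columns $c_{j_1}'$ and $c_{j_2}'$ to be \emph{identical}.

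From this, correctness of the algorithm follows. Suppose an SSWNF presentation exists and let $k$ be the SSWNF-position of the input's first column $c_1$. By monotonicity, any maximizer $c_x$ of $d_H(c_1, \cdot)$ sits at an extreme SSWNF position, or is identical to a column that does; up to reversing the SSWNF order, I may assume $c_x$ plays the role of the position-$1$ column. Then $d_H(c_x, c_j)$ is monotone non-decreasing in the SSWNF-position of $c_j$, so counting-sorting by $d_H(c_x, \cdot)$ produces an ordering consistent with the SSWNF order; ties occur only between identical columns and may be broken arbitrarily. A final linear pass verifies whether the candidate ordering makes every row's $+1$'s a prefix or suffix --- if $\prefprof$ is SSWNF, verification succeeds; otherwise it correctly rejects, since any successful verification would by definition exhibit $\prefprof$ as SSWNF.

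Uniqueness is immediate from the threshold formula: any SSWNF ordering is pinned down up to reversing and permuting identical (hence interchangeable) columns. For the time bound, both distance vectors $d_H(c_1, \cdot)$ and $d_H(c_x, \cdot)$ take $O(nt)$ to compute, the argmax takes $O(t)$, counting sort over integer distances in $[0, n]$ takes $O(n + t)$, and verification is another $O(nt)$ scan. The main obstacle I anticipate is formalizing the threshold formula and the tie-handling step --- specifically arguing that two columns at equal Hamming distance from an extreme column $c_x$ must be literally identical, so that arbitrary tie-breaking in counting sort cannot spoil an otherwise valid ordering when an SSWNF presentation exists.
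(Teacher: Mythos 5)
Your proposal is correct and follows essentially the same route as the paper's proof: the same algorithm (maximize $d_H(c_1,\cdot)$, counting-sort by distance from the maximizer, verify in a final pass), with your ``threshold formula'' being a repackaging of the paper's key claim that in any SSWNF presentation the Hamming distances from an extreme column are monotone along the order and tie only for identical columns. The paper proves that claim row-by-row (ruling out $+1,-1,+1$ patterns) rather than by counting switch points globally, but this is a cosmetic difference; your anticipated formalization of the threshold/tie-handling step is exactly the content of the paper's Claim and its equality case.
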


\cref{app:reco-ssw} also provides a much ampler discussion of related work for this sub-problem, including the relation between our algorithm and previous algorithms for recognizing \emph{single-crossing preferences}. As a bonus, it gives a similar simpler, more efficient algorithm for recognizing single-crossing preferences, running in time $O(nt\sqrt{\log{n}})$, improving state of the art \cite[Algorithm 4]{survey_restricted}.

\textbf{Putting it together.} To decide whether a profile $\prefprof$ is single-switch, we flip columns in $\prefprof$ to get a profile $\prefprof'$ with only $-1$'s in the first row and then use the algorithm in \cref{th:sswnf-reco} to find an SSWNF presentation $\prefprof''$ of $\prefprof'$ (and hence also $\prefprof$). If it exists, this presentation is unique up to column reversal, so we can also characterize the set of all SSW presentations of $\prefprof$ by unioning the orbits of $\prefprof''$ and its column-reversal. Note that these two orbits may coincide for pathological input profiles $\prefprof$. 

\begin{theorem} There is an $O(nt)$ algorithm computing (or deciding the inexistence of) an SSW presentation of a profile $\prefprof$. If the algorithm returns a presentation $\prefprof''$, let $\prefprof''_r$  be $\prefprof''$ with the order of the columns reversed, then the set of all SSW presentations of $\prefprof$ is $O_{\prefprof''} \cup O_{\prefprof''_r}$.
\end{theorem}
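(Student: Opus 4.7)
The plan is to formalize the two-step procedure described immediately before the statement. First I would, for each column of $\prefprof$ whose top entry is $+1$, multiply that column by $-1$, yielding a profile $\prefprof'$ whose first row is identically $-1$. Second I would invoke the algorithm of \cref{th:sswnf-reco} on $\prefprof'$; in $O(nt)$ time this either returns an SSWNF presentation $\prefprof''$ of $\prefprof'$ (unique up to column-reversal) or certifies that none exists. Both steps run in $O(nt)$, so the complexity claim is immediate, and it only remains to verify correctness and the characterization.

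For correctness I would argue both directions. If the algorithm returns $\prefprof''$, then by construction $\prefprof''$ is obtained from $\prefprof$ by column flips followed by a permutation, and its rows have $+1$-entries forming a prefix or suffix; hence $\prefprof''$ is an SSW presentation of $\prefprof$. Conversely, suppose $\prefprof$ has \emph{some} SSW presentation $\prefprof^*$. By \cref{coro:orbit-minus-one}, the orbit $O_{\prefprof^*}$ contains a presentation $\prefprof^{**}$ whose first row is $-\mathbf{1}$. The key observation is that the only way to turn the first row of $\prefprof$ into $-\mathbf{1}$ by column-flips is the one my algorithm performs; so $\prefprof^{**}$ differs from $\prefprof'$ only by a column permutation, making $\prefprof'$ SSWNF and ensuring \cref{th:sswnf-reco} succeeds.

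For the final characterization, I would combine \cref{lemma:cyclic-shift-and-negate}, which says the SSW presentations of $\prefprof$ partition into orbits, with \cref{coro:orbit-minus-one}, which says each orbit has a unique representative with first row $-\mathbf{1}$. The argument above identifies these orbit representatives with exactly the SSWNF presentations of $\prefprof'$ whose first row is $-\mathbf{1}$; by \cref{th:sswnf-reco} (and since reversal preserves the first row being constant) there are at most two such profiles, namely $\prefprof''$ and $\prefprof''_r$. Hence every SSW presentation lies in $O_{\prefprof''} \cup O_{\prefprof''_r}$, and the reverse inclusion is immediate from \cref{lemma:cyclic-shift-and-negate}.

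The main subtle point, and the one I would treat most carefully, is the ``only way'' observation inside the converse direction: a priori two distinct flip patterns could both produce a first row of $-\mathbf{1}$, but for a fixed $\prefprof$ the required flip pattern is determined column-by-column by the top entry, so uniqueness is automatic. A secondary point worth flagging in the statement is that $\prefprof''$ and $\prefprof''_r$ may lie in the same orbit for pathological inputs (e.g., profiles with strong symmetry), in which case the union $O_{\prefprof''} \cup O_{\prefprof''_r}$ collapses to a single orbit --- the theorem's phrasing accommodates this.
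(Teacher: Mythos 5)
Your proposal is correct and follows essentially the same route as the paper: flip columns so the first row is all $-1$'s, invoke the SSWNF algorithm of \cref{th:sswnf-reco}, and combine \cref{coro:orbit-minus-one} with \cref{lemma:cyclic-shift-and-negate} (plus the uniqueness-up-to-reversal of the SSWNF presentation) to get both correctness and the orbit characterization $O_{\prefprof''} \cup O_{\prefprof''_r}$. Your explicit treatment of the forced flip pattern and of the possible coincidence of the two orbits matches the paper's remarks, so nothing is missing.
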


\subsection{Forbidden Subprofiles Characterization of Single-Switch Preferences}

Whenever the single-switch condition is not satisfied, it would be useful if there were a short proof of this fact: a small subprofile that is not single-switch. Formally, a profile/matrix $\prefprof$ contains a profile/matrix $\prefprof'$ as a \emph{subprofile}/\emph{submatrix} if we can remove (possibly zero) rows and columns from $\prefprof$ to get $\prefprof'$ up to permuting rows and columns. Note that existence is not immediate: 
there could exist arbitrarily large matrices not satisfying the condition but all of whose proper submatrices do.
We show that this is not the case: either the condition holds, or there is a $3 \times 4$ or $4 \times 3$ submatrix witnessing that this is not the case, as in the following:

\begin{restatable}{theorem}{forbiddensswmain}\label{th:forbidden-ssw-main} A profile $\prefprof$ is single-switch if and only if it does not contain as a subprofile $\prefprof_1^a, \prefprof_2^a$ and any profile that can be obtained from them by flipping rows and columns: 
\[
\prefprof_1^a =\begin{bmatrix}
\tikzmarknode{p1a11}{\texttt{-1}} & \tikzmarknode{p1a12}{\texttt{-1}} & \tikzmarknode{p1a13}{\texttt{-1}} & \tikzmarknode{p1a14}{\texttt{-1}} \\
\tikzmarknode{p1a21}{\texttt{+1}} & \tikzmarknode{p1a22}{\texttt{+1}} & \tikzmarknode{p1a23}{\texttt{-1}} & \tikzmarknode{p1a24}{\texttt{-1}} \\
\tikzmarknode{p1a31}{\texttt{+1}} & \tikzmarknode{p1a32}{\texttt{-1}} & \tikzmarknode{p1a33}{\texttt{+1}} & \tikzmarknode{p1a34}{\texttt{-1}}
\end{bmatrix}\quad
\prefprof_2^a = \begin{bmatrix}
\tikzmarknode{p2a11}{\texttt{-1}} & \tikzmarknode{p2a12}{\texttt{-1}} & \tikzmarknode{p2a13}{\texttt{-1}} \\
\tikzmarknode{p2a21}{\texttt{+1}} & \tikzmarknode{p2a22}{\texttt{-1}} & \tikzmarknode{p2a23}{\texttt{-1}} \\
\tikzmarknode{p2a31}{\texttt{-1}} & \tikzmarknode{p2a32}{\texttt{+1}} & \tikzmarknode{p2a33}{\texttt{-1}} \\
\tikzmarknode{p2a41}{\texttt{-1}} & \tikzmarknode{p2a42}{\texttt{-1}} & \tikzmarknode{p2a43}{\texttt{+1}} 
\end{bmatrix}\quad
\]%
\begin{tikzpicture}[overlay,remember picture, shorten >=-3pt, shorten <= -3pt]
\drawcell{p1a11}{\czero}
\drawcell{p1a12}{\czero}
\drawcell{p1a13}{\czero}
\drawcell{p1a14}{\czero}
\drawcell{p1a21}{\cone}
\drawcell{p1a22}{\cone}
\drawcell{p1a23}{\czero}
\drawcell{p1a24}{\czero}
\drawcell{p1a31}{\cone}
\drawcell{p1a32}{\czero}
\drawcell{p1a33}{\cone}
\drawcell{p1a34}{\czero}
\drawcell{p2a11}{\czero}
\drawcell{p2a12}{\czero}
\drawcell{p2a13}{\czero}
\drawcell{p2a21}{\cone}
\drawcell{p2a22}{\czero}
\drawcell{p2a23}{\czero}
\drawcell{p2a31}{\czero}
\drawcell{p2a32}{\cone}
\drawcell{p2a33}{\czero}
\drawcell{p2a41}{\czero}
\drawcell{p2a42}{\czero}
\drawcell{p2a43}{\cone}
\end{tikzpicture}%
\end{restatable}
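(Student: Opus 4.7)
The plan is to prove necessity and sufficiency separately.

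For \textbf{necessity}, the plan is to show $\prefprof_1^a$ and $\prefprof_2^a$ themselves are not single-switch; the general case follows because single-switch is invariant under row and column flipping and closed under deleting rows or columns. To verify this, I would first establish the equivalent formulation (by converting any suffix-type row into a prefix-type row via row-flipping): $\prefprof$ is single-switch iff one can flip some subset of rows and some subset of columns so that the resulting columns' $+1$-sets form a chain (totally ordered by inclusion) in $2^{[n]}$. For $\prefprof_1^a$, its four columns exhaust all four antipodal classes of $\{\pm 1\}^3$; row-flipping permutes these classes while preserving the multiset of size pairs $\{\{0,3\}, \{1,2\}, \{1,2\}, \{1,2\}\}$ available per column. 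A chain of four distinct subsets of $[3]$ requires the four distinct sizes $0, 1, 2, 3$, so the three $\{1,2\}$-columns would have to cover sizes $1$ and $2$, forcing a size-repetition; but distinct size-$1$ or size-$2$ subsets of $[3]$ are pairwise incomparable, a contradiction. The corresponding claim for $\prefprof_2^a$ then follows from transpose-invariance of single-switch (a nice presentation has both rows and columns forming chains), since $\prefprof_2^a$ equals the transpose of $\prefprof_1^a$ up to row/column flipping and permutation.

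For \textbf{sufficiency}, the plan is to argue by contrapositive via a minimal counterexample: let $\prefprof$ be a non-single-switch profile such that every proper sub-profile (obtained by deleting one row or one column) is single-switch, and show $\prefprof$ equals $\prefprof_1^a$ or $\prefprof_2^a$ up to flipping and permutation. First I would normalize by column-flipping row $1$ to all $-1$'s, reducing to single-switch-no-flips recognition. Then I would eliminate duplicate columns (and duplicate rows, by transpose-symmetry): a duplicate can be removed without losing non-single-switchness, contradicting minimality. Next, for each row $i$, I would use the SSW certificate of the sub-profile with row $i$ removed (a row/column flip making its columns' $+1$-sets a chain) and cross-compare the resulting certificates, showing that they either patch into a global certificate for $\prefprof$ (contradiction) or pin down the structure to be very small. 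A finishing case analysis over matrices of size up to $4 \times 4$ would confirm the allowed shapes.

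The hard part will be the size bound in sufficiency, namely showing a minimal counterexample satisfies $\min(n, t) = 3$. The key combinatorial facts are that chains in $2^{[n]}$ have at most $n + 1$ distinct elements, and that distinct subsets of $[n]$ of the same size $k \in \{1, \ldots, n-1\}$ are pairwise incomparable. Combined with a pigeonhole on column antipodal classes (at most $2^{n-1}$ in number), a minimal counterexample with $n = 3$ should have to exhaust all four column classes of $\{\pm 1\}^3$ (yielding $\prefprof_1^a$); the $n \ge 4$ case should reduce to $t = 3$ via transpose-invariance (yielding $\prefprof_2^a$). The delicate part is the compatibility argument reconciling the row-deletion certificates, which I expect to require careful case analysis and constitutes the main technical content of the proof.
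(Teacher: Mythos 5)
Your necessity direction is fine and is even a bit more self-contained than the paper's: the reformulation ``single-switch $\iff$ some row/column flips make the columns' $+1$-sets a chain,'' the size-pair argument showing $\prefprof_1^a$ exhausts all four antipodal classes of $\Bset^3$ and hence cannot yield four comparable distinct subsets of $[3]$, and the transpose-invariance observation (suffix rows $\iff$ chain columns, which is transpose-symmetric) reducing $\prefprof_2^a$ to $\prefprof_1^a$ are all correct. The paper does not argue this way; it gets both directions at once from an existing characterization.

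The genuine gap is the sufficiency direction (``no forbidden subprofile $\implies$ single-switch''), which is the entire content of the theorem beyond routine closure properties, and which your proposal only plans rather than proves. Your minimal-counterexample scheme leaves exactly the hard step open: showing that a minimal non-single-switch profile has $\min(n,t)=3$ and in fact coincides, up to flips and permutations, with $\prefprof_1^a$ or $\prefprof_2^a$. The combinatorial facts you list (chains in $2^{[n]}$ have at most $n+1$ elements, same-size distinct subsets are incomparable, at most $2^{n-1}$ antipodal column classes) bound the number of \emph{distinct} columns in terms of $n$, but give no control on $n$ itself, so they do not rule out minimal obstructions with both dimensions large; and the remark that ``the $n\ge 4$ case should reduce to $t=3$ via transpose-invariance'' is not a reduction --- transposing a putative minimal counterexample with $n,t\ge 4$ just gives another one with $n,t\ge 4$. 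The ``cross-compare the row-deletion certificates and either patch them or pin down the structure'' step is precisely where the work lies, and you explicitly defer it. By contrast, the paper does not reprove this: it normalizes the first row to all $-1$'s (so that single-switch becomes single-switch-no-flips), invokes the known forbidden-subprofile characterization of single-switch-no-flips (voter/candidate-extremal-interval) preferences from \cite{Terzopoulou_Karpov_Obraztsova_2021} --- a $2\times 4$ and a $3\times 3$ pattern up to row flips --- and then lifts each such pattern by appending the all-$(-1)$ first row and undoing the column flips, yielding $\prefprof_1^a$, $\prefprof_2^a$ up to row/column flips. If you want a self-contained proof along your lines, you must actually carry out the certificate-compatibility/size-bounding analysis (essentially re-deriving the result of \cite{Terzopoulou_Karpov_Obraztsova_2021}); as it stands, the proposal does not establish the theorem.
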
%

We prove \cref{th:forbidden-ssw-main} in \cref{app:forbidden} by combining a similar characterization for single-switch-no-flips profiles given in \cite{Terzopoulou_Karpov_Obraztsova_2021} (under the name \emph{voter/candidate-extremal-interval} preferences) with our insight that to go to the no-flips version it suffices to make one row all $-1$'s. Henceforth, we call the $3 \times 4$ and $4 \times 3$ preference profiles in the theorem above \emph{forbidden subprofiles}. Then, the theorem says that $\prefprof$ is single-switch if and only if it contains no forbidden subprofiles. Note how this implies that single-switch profiles are relatively rare: the probability that a random binary $n \times t$ matrix is single-switch tends to zero as $n$ and $t$ tend to infinity.

\textbf{Finding forbidden subprofiles.} So far, we have seen that non-membership to the class of single-switch preferences admits short proofs, but can such proofs also be constructed efficiently? Given some no-instance, it is straightforward to determine which forbidden subprofiles occur in it in time $O(n^3t^4 + n^4 t^3)$.
In contrast, our recognition algorithm runs in time $O(nt)$, but does not identify a forbidden subprofile. We will now assume our $O(nt)$ recognition algorithm as a black box and show how to identify a forbidden subprofile for a given no-instance $\prefprof$ in time $O(nt)$.

Let us first describe an $O(n^2t + nt^2)$ approach: one at a time, try to remove each row and each column of $\prefprof$, i.e., $n + t$ removal attempts; if doing so makes the resulting profile a yes-instance, undo the removal, and otherwise let it persist. At the end, the ensuing no-instance $\prefprof'$ is a subprofile of $\prefprof$ whose proper subprofiles are yes-instances, so $\prefprof'$ is a forbidden subprofile, completing the argument.

We now modify the previous idea to run in time $O(nt)$ by removing multiple rows/columns at a time. We will first only remove rows, and then, starting from the resulting profile, only columns. The reasoning for columns is entirely analogous, so we only describe the procedure for rows: partition the rows into 5 groups $G_1, \ldots, G_5$, each of size roughly $n / 5$. Because all forbidden subprofiles are of size $3 \times 4$ or $4 \times 3$, any occurrence of a forbidden subprofile in $\prefprof$ only uses rows from at most 4 of the 5 groups. Consequently, we can find a group $G_i$ such that removing all rows in $G_i$ from $\prefprof$ keeps the property that $\prefprof$ is a no-instance. Doing so requires at most 5 calls to the recognition algorithm, so it can be done in overall time $O(nt)$. Ignoring for brevity the cases where $n$ is not divisible by 5, this reasoning shows how to reduce $n$ to $4n/5$ in time $O(nt)$. Applying the same reasoning iteratively until $n$ goes below 5 takes total time $O(nt)$ because the geometric series $\sum_{i = 0}^\infty (4/5)^i$ converges. 

\begin{theorem}\label{th:fast-forbidden-subprofiles-ssw} Given a non-single-switch profile $\prefprof$, a forbidden subprofile of $\prefprof$ can be determined in time $O(nt)$.
\end{theorem}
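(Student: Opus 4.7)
The plan is to bootstrap the linear-time recognition algorithm for single-switch profiles as a black box and iteratively shrink $\prefprof$ while maintaining the invariant that it is still a no-instance, until the remaining profile has to be a forbidden subprofile. The naive version of this idea would attempt to drop each single row and each single column in turn, committing to the removal exactly when the resulting profile is still non-single-switch. The end state of this greedy process is a subprofile of $\prefprof$ whose every proper subprofile is single-switch, so by \cref{th:forbidden-ssw-main} it must itself be one of the $3\times 4$ or $4\times 3$ forbidden subprofiles. However, this costs $n+t$ recognition calls of $O(nt)$ time each, totaling $O(n^2 t + n t^2)$, which is too slow.

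To speed this up to $O(nt)$, the plan is to run two separate phases — first stripping rows, then columns — and in each phase to discard a constant fraction of the current lines rather than a single one at a time. For the row phase, I would partition the current set of rows into $5$ groups $G_1, \dots, G_5$ of roughly equal size. Since every forbidden subprofile uses at most $4$ rows, any occurrence inside $\prefprof$ must miss at least one $G_i$ entirely. Hence there exists an index $i \in [5]$ such that deleting all rows of $G_i$ yields a profile that is still non-single-switch, and this $i$ can be located by invoking the recognition algorithm at most $5$ times at a cost of $O(nt)$ in total. I then recurse on the resulting profile of at most $\lceil 4n/5\rceil$ rows.

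The row-phase time satisfies $T(n,t) \le c\,nt + T(\lceil 4n/5\rceil, t)$ and, since $\sum_{i\ge 0}(4/5)^i$ is a convergent geometric series, telescopes to $O(nt)$; the base case $n \le 4$ is trivial. When the row phase ends, the profile has $O(1)$ rows but possibly still $\Theta(t)$ columns, and I apply the entirely symmetric procedure to columns, partitioning them into $5$ groups (safe because forbidden subprofiles have at most $4$ columns either), again paying a convergent geometric sum that contributes only $O(nt)$ across the whole run. What remains at the very end is a subprofile of bounded size that is still a no-instance, which by \cref{th:forbidden-ssw-main} is a forbidden subprofile.

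The only conceptually delicate point, and the one whose correctness is essential, is the pigeonhole invariant: that after each iteration of the row (or column) phase the shrunk profile still contains a forbidden subprofile. This is precisely where the constant $5$ is forced — it is dictated by the fact that the forbidden subprofiles of \cref{th:forbidden-ssw-main} have at most $4$ rows and at most $4$ columns — and the rest of the argument is just careful bookkeeping of the recursion cost.
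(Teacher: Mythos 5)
Your proposal is correct and follows essentially the same route as the paper: bootstrap the $O(nt)$ recognition algorithm as a black box, strip rows and then columns in groups of roughly one fifth, using the pigeonhole fact that every forbidden subprofile occupies at most $4$ rows and $4$ columns, and charge the recursion to a convergent geometric series. The only cosmetic difference is at the very end: the constant-size no-instance you are left with need not itself be one of the $3\times 4$ or $4\times 3$ forbidden subprofiles, but extracting one from it takes $O(1)$ time (e.g., via your naive one-line-at-a-time procedure), a step the paper likewise leaves implicit.
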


We note that the previous idea applies more broadly; e.g., for single-crossing preferences, which admit a characterization in terms of two small \emph{forbidden subinstances} \cite{bredereck_sc_forbidden_minors}, our $O(nt\sqrt{\log{n}})$ recognition algorithm can be bootstrapped to also produce a forbidden subinstance for no-instances within the same time bound. A formal statement and more details can be found in \cref{app:forbidden}.

\section{Anscombe's Paradox}\label{sec:anscombe}
When preferences are not single-switch, determining whether an IWM proposal is a Condorcet winner is co-NP hard. In light of this, we focus on the most diabolical subset of Ostrogorski paradox instances: those inducing Anscombe's paradox (where an IWM proposal is defeated by its complement, or, equivalently, an IWM proposal fails to get majority support). If Anscombe's paradox occurs, a natural question is: ``How close can we get to any given IWM while still requiring that the proposal gets majority support?'' 

We first explore this question under external weights, i.e., in instances $\mathcal{I} = (\mathcal{P}, w)$ where all voters share the same, unit-sum weights vector $w$. Then, we introduce the necessary notation and study it for internal weights. Finally, we give a simple characterization of a broad swath of instances that avoid Anscombe's paradox entirely for internal weights. We assume throughout that $t > 1$, as Anscombe's paradox does not occur with one topic, and without loss of generality that $m_j \geq 0.5$ for all $j \in [t]$ (i.e., that +1 is a majority opinion on all topics).

Formally, some voter $i$ supports (approves of) a proposal $p$ if $\hamdist(v_i, p, w) < 1/2$, opposes (disapproves of) $p$ if $\hamdist(v_i, p, w) > 1/2$, and is indifferent to $p$ if $\hamdist(v_i, p, w) = 1/2$. A proposal is \emph{strictly majority-supported} if more people support it than oppose it and \emph{weakly majority-supported} if no more people oppose it than support it. Our definition of majority support matches 
\cite{constantinescu2023computing} but differs from \cite{fritsch2022price} (where indifferent voters count towards the proposal's support).

\subsection{External Weights}
In the unweighted case, it is straightforward to argue that for any IWM, there exists a weakly majority-supported proposal within distance $\leq \frac{1}{2} + \frac{1}{2t}$ because at least one proposal in every complementary pair $(p, \overline{p})$ gets weak majority support (and at least one pair satisfies the distance bound for both proposals).
A slightly better guarantee of distance $< \frac{1}{2}$ holds by a more difficult proof~\cite{constantinescu2023computing,fritsch2022price}. 
For external weights, the complementary pairs argument no longer gives a bound close to $\frac{1}{2}$ if no subset of topic weights sum up close to $\frac{1}{2}$. One may hope to reduce to the unweighted case by splitting topics into multiple equal-weight topics and use the $< \frac{1}{2}$ bound there, but the resulting majority-supported proposals may have different values for an original topic's clones, making it hard to translate to proposals in the original instance. Despite these setbacks, we surprisingly find that the $< \frac{1}{2}$ guarantee still holds for external weights. Our proof, deferred to~\cref{app:existence_external_iwm}, simplifies and adapts the argument in \cite{constantinescu2023computing}. 
We also guarantee \emph{strict} majority support if there is a strict majority in at least one \emph{relevant} topic, roughly meaning topics with high enough weight to be the tipping point in a vote (see \cref{app:relevant_topics} for a formal definition). 

\begin{restatable}{theorem}{externaliwm}\label{thm:external_iwm}
    For any $\mathcal{I} = (\mathcal{P}, w)$ and $p_{IWM}$, there is a weakly majority supported
    proposal $p$ with $d_{H}(p, p_{IWM}, w) < 1/2$. If majority is strict in any relevant topic, ``weak'' can be replaced with ``strict''. 
\end{restatable}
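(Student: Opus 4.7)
My plan is to normalize the instance by flipping columns of $\prefprof$ so that $p_{IWM} = \mathbf{1}$, reducing the task to exhibiting a subset $S \subseteq [t]$ with $w(S) := \sum_{j \in S} w_j < 1/2$ such that the proposal $p^S$ (defined by $p^S_j = -1$ iff $j \in S$) is weakly majority-supported. Writing $D_i := \{j : v_{i,j} = -1\}$ for the dissent set of voter~$i$, voter~$i$ supports $p^S$ iff $w(D_i \triangle S) < 1/2$ and opposes it iff $w(D_i \triangle S) > 1/2$. The IWM property rewrites as $\sum_i \mathbb{I}(j \in D_i) \leq n/2$ for every~$j$, so $\sum_i w(D_i) \leq n/2$ --- this is the source of the key averaging identity I would drive the argument with.

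Among sets $S$ with $w(S) \leq 1/2$, I would select $S^\star$ maximizing the support differential $f(p^S) := |\text{supporters}| - |\text{opposers}|$, breaking ties in favour of smaller $w(S^\star)$. A bare complementary-pairs argument is insufficient: pairing $S$ with $[t]\setminus S$ yields $f(p^S) + f(p^{[t]\setminus S}) = 0$, but the member of weight $\leq 1/2$ need not be the one with $f \geq 0$. Instead, I would adapt the swap-style argument of \cite{constantinescu2023computing} to weights. To show $f(p^{S^\star}) \geq 0$, suppose for contradiction $f(p^{S^\star}) < 0$; then $[t] \setminus S^\star$ has $f > 0$ but weight $\geq 1/2$. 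Iteratively peel off elements of $[t]\setminus S^\star$ in order of least damage to $f$, obtaining a decreasing chain of sets. The IWM budget $\sum_i w(D_i) \leq n/2$ bounds the aggregate damage; a pigeonhole charge shows that some set in the chain reaches weight $\leq 1/2$ while still having $f \geq 0$, contradicting the maximality of $S^\star$.

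Finally, to show $w(S^\star) < 1/2$ strictly, suppose instead $w(S^\star) = 1/2$; then $[t] \setminus S^\star$ also has weight $1/2$ and, by the tie-breaker, both satisfy $f = 0$. I would analyse the signed sum $\sum_{j \in S^\star} \bigl(f(p^{S^\star \setminus \{j\}}) - f(p^{S^\star})\bigr)$: voter~$i$ contributes non-trivially only if $\innerprod{v_i}{p^{S^\star}}_w$ lies in a narrow ``switching window'' around $0$. Using the IWM bound to control the total magnitude of this sum, at least one $j \in S^\star$ must yield $f(p^{S^\star \setminus \{j\}}) \geq 0$; since $w(S^\star \setminus \{j\}) < 1/2$, this contradicts the tie-breaker on $w$. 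For the strictification under the relevant-topic hypothesis: a strict majority on a relevant topic $j^\star$ leaves slack in the IWM identity for column $j^\star$, which I would use either to sharpen the averaging above to a strict inequality throughout or to perturb a single coordinate aligned with $j^\star$ that lifts every indifferent voter to a strict supporter without leaving the weight ball. The hardest step is handling the boundary case $w(S) = 1/2$ in the third paragraph --- this is precisely where the unweighted complementary-pairs proof of \cite{fritsch2022price} breaks down under external weights, and correctly charging the IWM budget across a greedy sequence of single-element swaps is where the simplification over \cite{constantinescu2023computing} needs to pay off.
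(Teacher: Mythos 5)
There is a genuine gap: the two steps that carry all the weight of your argument are asserted rather than proved, and the mechanism you propose for them does not obviously work. In the second paragraph you claim that, starting from $[t]\setminus S^\star$ and greedily peeling off columns ``in order of least damage to $f$,'' the budget $\sum_i w(D_i) \leq n/2$ bounds the aggregate damage and a ``pigeonhole charge'' produces a set of weight $\leq 1/2$ with $f \geq 0$. But the IWM identity lives on the \emph{weight} scale (average weighted disagreement of a voter with $p_{IWM}$ is at most $1/2$), whereas $f$ lives on the \emph{count} scale: it is a sum of signs of margins $\innerprod{v_i}{p^S}_w$, and these margins can be arbitrarily close to $0$. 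Removing a single column of tiny weight can flip the sign for unboundedly many voters, so there is no local Lipschitz-type control of the damage to $f$ by the weight removed, and no charging of that damage against $\sum_i w(D_i) \leq n/2$ is given or apparent. Bridging exactly this weight-versus-count gap is the hard content of the theorem; your sketch presupposes it. The same issue afflicts the boundary paragraph: the claim that when $w(S^\star)=1/2$ some single-element removal keeps $f \geq 0$ is again unsupported (the ``switching window'' bound is not derived from anything), and the strictification via ``lifting every indifferent voter to a strict supporter'' by perturbing one coordinate is not a well-defined operation. Finally, you attribute a ``swap-style'' greedy argument to \cite{constantinescu2023computing}, but that is not what that argument is.

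For contrast, the paper's proof (and the one in \cite{constantinescu2023computing} that it streamlines) is global and probabilistic rather than local and greedy: it samples $p$ uniformly from the open ball $B_m = \{p : \innerprod{p}{p_{IWM}}_w > 0\}$ and compares two expectations, $\Ex[X] = \sum_i \Ex[\innerprod{v_i}{p}_w]$ and $\Ex[Y]$, where $Y$ adds $\pm\innerprod{p}{p_{IWM}}_w$ per supporting/opposing voter. The conversion from the weight scale to the support-count scale --- precisely the step your sketch leaves open --- is done by a per-voter self-inverse map $f_{v_i}$ (built from Hadamard products with $v_i \odot p_{IWM}$) that swaps a proposal's weighted agreement with $v_i$ and with $p_{IWM}$ while preserving the relevant sign quadrants; this yields $\Ex[X]=\Ex[Y]$, and $\Ex[X]\geq 0$ follows from a column-wise injection showing $\Pr(p_j=(p_{IWM})_j)\geq\Pr(p_j\neq(p_{IWM})_j)$ on $B_m$. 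Because $B_m$ is defined by a strict inequality, no boundary case at weight exactly $1/2$ arises, and the strict-support refinement drops out of the characterization that $\Pr(p_j=(p_{IWM})_j)>\Pr(p_j\neq(p_{IWM})_j)$ holds exactly for relevant topics. If you want to salvage a non-probabilistic route, you would need to supply an actual charging lemma relating the number of near-threshold voters to the IWM budget, which is not currently in your write-up and which I do not believe holds in the local form you use it.
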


\subsection{Internal Weights}\label{subsec:anscombe_internal}
We now explore a model where individuals can have unique weight vectors, expressing not only diverse preferences on issue outcomes but also differing opinions on relative topic importance.

\textbf{Internal Weights Model.} 
In the internal weights model, an instance $\mathcal{I} = (\mathcal{P}, W)$ consists of a preference profile $\mathcal{P}$ and a weight profile $W$ with rows $w_1, \dots, w_n$ where each weight vector $w_i$ corresponds to voter $i$, is non-negative, and sums to 1. The \emph{average weight vector} is defined as $\tilde{w} := \frac{1}{n}\sum_{i=1}^n w_i$. Zero entries in the average weight vector correspond to issues that no voters placed any weight on (and hence can be ignored). We assume no such topics exist without loss of generality.
We define the \emph{majority} for a given topic $j$ to be $m_j := \frac{1}{n \avgweight_j}\sum_{i=1}^n w_{i,j} \cdot \mathbb{I}(v_{i,j} = +1)$. This is the fraction of voter weight placed on that issue that prefers $+1$. Note that this agrees with our previous definition for external weights (where it was just the fraction of voters that prefer $+1$ on that topic). 
The average majority for a given preference profile is defined as $\avgmaj := \sum_{j=1}^t \avgweight_j m_j$. This naturally weights consensus on issues proportionally to how important those issues are to the population.

Under external weights, we could give a constant upper bound (\cref{thm:external_iwm}) on the minimum distance of some majority-supported proposal from an IWM, independent of the weight profile. As we will see in \cref{thm:indiv_weights_positive,thm:indiv_weights_negative}, the severity of Anscombe's Paradox under \emph{internal} weights is closely related to the maximum average topic weight $\tilde{w}_{max}$ (the maximum entry in $\tilde{w}$). 
Formally, we will upper bound the worst-case \emph{IWM distance} $g_{\ell}$ for instances with maximum average topic weight $\tilde{w}_{max} = \ell \in (0, 1)$ and selections of $p_{IWM}$ for the instance:
\begin{align*}
    g_{\ell} := \max_{\substack{\votinginstance = (\prefprof, W),\ \ p_{IWM}\\ s.t. \tilde{w}_{max} = \ell}} \left(\min_{p \text{ weakly majority-supported}} d_H(p, p_{IWM}, \tilde{w})\right)
\end{align*}

We first give a simple upper bound on $g_{\ell}$ for $\ell \in (0,1)$ derived from a partition-based algorithm. Surprisingly, we then show that this seemingly weak upper bound is tight for a large portion of the range $\tilde{w}_{max} \in (0,1)$. Our lower-bound constructions more strongly imply the existence of instances where \emph{all} weakly majority-supported proposals are
far from \emph{all} IWM's. \cref{fig:gwmax} provides a summary of the bounds we give on $g_{\tilde{w}_{max}}$.

\begin{figure}[t]
    \centering
    \includegraphics[width=0.8\columnwidth]{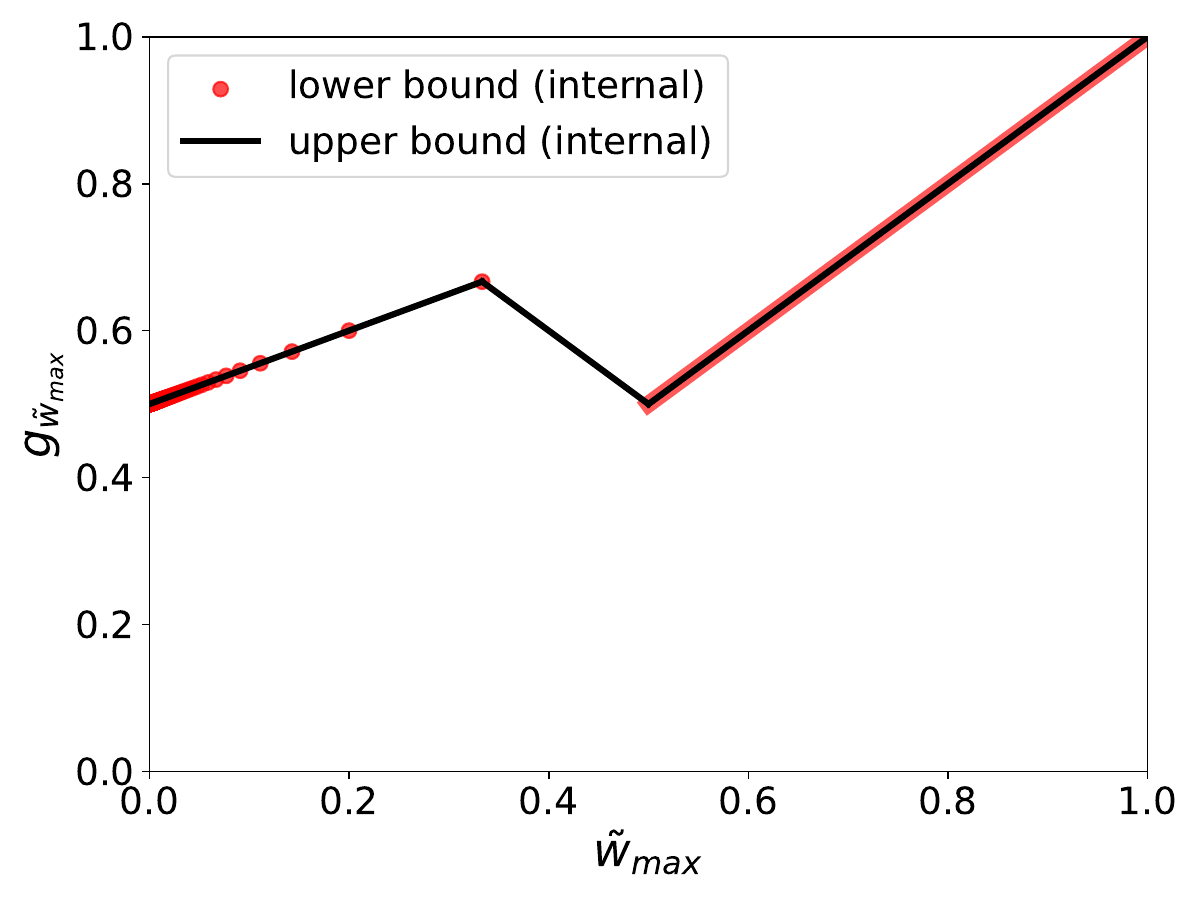}
    \caption{A summary of our bounds on $g_{\tilde{w}_{max}}$.}
    \label{fig:gwmax}
\end{figure}

\textbf{Partition-based upper bounds.} \cref{thm:indiv_weights_positive} guarantees both the existence of reasonable majority-supported proposals and provides an algorithm to efficiently recover them. 
\begin{restatable}{theorem}{internaliwmpos}\label{thm:indiv_weights_positive}
 We have the following upper bounds on $g_{\ell}$:\begin{itemize}
\item If $\ell \in (0, 1/3)$, then $g_{\ell} \leq 1/2 + \ell/2$;
\item If $\ell \in [1/3, 1/2]$, then $g_{\ell} \leq 1- \ell$;
\item If $\ell \in (1/2, 1)$, then $g_{\ell} \leq \ell$.
\end{itemize} 
In each case, we can compute a weakly majority-supported proposal $p$ with $d_H(p, p_{IWM}, \tilde{w})$ at most the given bound in polynomial time.
\end{restatable}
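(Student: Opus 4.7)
The plan is a partition/complementary-pair argument. For each $A \subseteq [t]$, write $\tilde w(A) := \sum_{j \in A} \tilde w_j$ and define the proposal $p^A \in \Bset^t$ by $p^A_j = p_{IWM, j}$ for $j \in A$ and $p^A_j = -p_{IWM, j}$ for $j \notin A$, so that $d_H(p^A, p_{IWM}, \tilde w) = 1 - \tilde w(A)$ and $\overline{p^A} = p^{[t] \setminus A}$ has $d_H(\overline{p^A}, p_{IWM}, \tilde w) = \tilde w(A)$. The key observation is that for every voter $i$ and every proposal $p$, $d_H(v_i, p, w_i) + d_H(v_i, \overline{p}, w_i) = 1$ (each topic contributes to exactly one of the two distances), so voter $i$ supports $p^A$ iff they oppose $\overline{p^A}$, and is indifferent to one iff to the other. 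Hence the sets of supporters and opposers simply swap under complementation, and at least one of $p^A, \overline{p^A}$ is weakly majority-supported. This gives a weakly majority-supported proposal within distance $\max(\tilde w(A), 1 - \tilde w(A))$ from $p_{IWM}$, so the task reduces to choosing $A$ to match each of the three claimed bounds.

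Let $j^* \in [t]$ achieve $\tilde w_{j^*} = \ell$. For Cases 2 and 3 ($\ell \geq 1/3$), the singleton choice $A = \{j^*\}$ already suffices: the bound becomes $\max(\ell, 1 - \ell)$, which equals $1 - \ell$ for $\ell \in [1/3, 1/2]$ and $\ell$ for $\ell \in (1/2, 1)$, matching the desired inequalities.

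For Case 1 ($\ell < 1/3$) the singleton choice only yields $1 - \ell > 2/3$, which is too weak, so a finer partition is needed. Grow $A$ greedily from $\emptyset$ by adding topics one at a time (in arbitrary order), stopping the first time $\tilde w(A) \geq 1/2 - \ell/2$. Since $\tilde w(A)$ starts at $0 < 1/2 - \ell/2$ (using $\ell < 1$), eventually reaches $1$, and grows by at most $\ell$ per step, on the first step crossing the threshold we have $\tilde w(A) \in [1/2 - \ell/2, 1/2 + \ell/2]$, so $\max(\tilde w(A), 1 - \tilde w(A)) \leq 1/2 + \ell/2$.

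In each case, selecting $A$ takes $O(t)$ time and computing the weighted agreement of each voter with $p^A$ and $\overline{p^A}$ (to see which of the pair is weakly majority-supported) takes $O(nt)$ time, so the algorithm is polynomial. I do not anticipate any serious obstacle; the main subtlety worth flagging is the complementary-pair observation itself, which crucially depends on the convention (adopted in this paper, as opposed to \cite{fritsch2022price}) that indifferent voters are counted neither as supporters nor as opposers, so that ``supports $p$'' and ``opposes $\overline{p}$'' truly coincide.
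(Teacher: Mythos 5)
Your proof is correct and follows essentially the same route as the paper: the complementary-pair observation (at least one of $p^A, \overline{p^A}$ is weakly majority-supported, which you justify correctly under the paper's indifference convention) combined with a partition of the topics whose $\tilde w$-weight lands near $1/2$, using a singleton for $\ell \geq 1/3$ and a greedy accumulation for $\ell < 1/3$. The only difference from the paper's argument is cosmetic --- you grow the set from $\emptyset$ until crossing $1/2 - \ell/2$, whereas the paper removes topics from $[t]$ until just above $1/2$ --- and your runtime analysis matches as well.
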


The full proof is deferred to ~\cref{app:positive_internal_iwm}, but the intuition is as follows: for any proposal, either it or its complement will get weak majority support (potentially both), and for any $p_{IWM}$, $d_H(\overline{p}, p_{IWM}, \tilde{w}) = 1 - d_H(p, p_{IWM},\tilde{w})$. Therefore, we construct $p$ that keeps $\max\{d_H(\overline{p}, p_{IWM}, \tilde{w}), d_H(p, p_{IWM},\tilde{w})\}$ small. This is ultimately equivalent to the partition optimization problem with the $t$ entries in the average weight vector as inputs.
Our bounds are constructive and give the pair $(p, \overline{p})$ achieving the bound.

 \textbf{Lower bounds.}
 By definition, $g_{\ell} \leq 1$, so the upper bounds in \cref{thm:indiv_weights_positive} might seem fairly weak. However, in \cref{thm:indiv_weights_negative}, we show that they are actually tight for many values of $\ell$. This implies that when $\tilde{w}_{max}$ is large, $g_{\tilde{w}_{max}}$ can get arbitrarily close to 1. 
 \begin{restatable}{theorem}{internalnegative}\label{thm:indiv_weights_negative}
The following lower bounds for $g_{\ell}$ hold:\begin{itemize}
 \item If $\ell = 1/(2k+1)$ with $k \in \mathbb{Z}_{\geq 0}$, then $g_{\ell} \geq 1/2 + \ell/2$;
 \item If $\ell \in (1/2, 1)$, then $g_{\ell} \geq \ell$.
 \end{itemize}
 \end{restatable}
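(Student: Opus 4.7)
Both lower bounds are established by explicit constructions.

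\emph{For the bound $g_\ell \geq \ell$ with $\ell \in (1/2, 1)$:} I construct a two-topic instance with a heavy topic ($\tilde{w}_1 = \ell$) and a light topic ($\tilde{w}_2 = 1 - \ell$). The voters split into a majority group $A$ of size $n_A$ with preferences $(-1, +1)$ and weights $(a, 1-a)$, and a minority group $B$ of size $n_B = n - n_A$ with preferences $(+1, +1)$ and weights $(b, 1-b)$. I pick parameters so that $n_A > n_B$, $a > 1/2$, $\tilde{w}_1 = (n_A a + n_B b)/n = \ell$, and $m_1 = n_B b/(n\ell) > 1/2$; a short calculation shows valid integer choices exist for every $\ell \in (1/2, 1)$ (with $n$ growing as $\ell$ nears the endpoints, e.g.\ $b = 1$ works for $\ell > 2/3$ and $b < 1$ handles $\ell \in (1/2, 2/3]$). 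Then $p_{IWM} = (+1, +1)$, and since $a > 1/2$ and $n_A > n_B$, group $A$ opposes $p_{IWM}$, triggering Anscombe. A case analysis of the four proposals in $\Bset^2$ shows that the only weakly majority-supported ones are $(-1, +1)$ and $(-1, -1)$, both at $\tilde{w}$-weighted distance at least $\tilde{w}_1 = \ell$ from $p_{IWM}$.

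\emph{For the bound $g_\ell \geq 1/2 + \ell/2$ with $\ell = 1/(2k+1)$, $k \geq 1$:} I aim for an instance with $t = 2k+1$ topics and uniform average weights $\tilde{w}_j = 1/(2k+1) = \ell$ such that every weakly majority-supported proposal differs from $p_{IWM}$ on at least $k+1$ of the $2k+1$ topics. The starting point is the classical cyclic Anscombe structure: $2k+1$ ``minority'' voters, where voter $i$ disagrees with $p_{IWM} = \mathbf{1}$ precisely on the cyclic window $\{i, i+1, \ldots, i+k\}$, together with additional ``conformist'' voters to ensure $p_{IWM}$ is really the weighted IWM. Internal weights are then tuned cyclically so that each minority voter's weight is concentrated on its disagreement window, amplifying their opposition not only to $p_{IWM}$ but also to every proposal within Hamming distance $k/(2k+1)$ of it. The cyclic symmetry lets one reduce the verification to a single representative ``close'' proposal, and the uniform $\tilde{w}$ converts the ``at least $k+1$ disagreements'' condition into the distance bound $(k+1)/(2k+1) = 1/2 + \ell/2$. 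The case $k = 0$ gives $\ell = 1$ and is essentially trivial.

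The main obstacle is the second construction: under the rigid constraint of uniform average weights, simultaneously achieving (i) that $p_{IWM}$ is the weighted IWM, (ii) Anscombe's paradox, and (iii) strict majority opposition to every proposal at distance $\leq k/(2k+1)$ from $p_{IWM}$ requires a careful balance between the number of conformist voters and the concentration of minority voters' internal weights. Cyclic symmetry across both voters and topics is essential for keeping the verification tractable across arbitrary $k$. By contrast, the first construction is comparatively direct because a single heavy topic drives all the relevant constraints.
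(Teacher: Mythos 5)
Your construction for the range $\ell \in (1/2,1)$ is correct and is essentially the paper's own: the paper takes $x$ voters with votes $(+1,+1)$ and weight $\frac{x+1}{x}\ell$ on the heavy topic and $x+1$ voters with votes $(-1,+1)$ and weight $\frac{x}{x+1}\ell$ there, i.e., your groups $B$ and $A$ with one particular parameter choice; the verification (everyone is a single-issue voter on topic~1, $+1$ carries the weighted majority there, but the numerically larger group defeats every proposal with $+1$ on topic~1) is the same, so that bullet is fine modulo writing out the parameter existence you only sketch.

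The gap is in the case $\ell = 1/(2k+1)$: you never actually exhibit an instance. The weights of the ``minority'' voters, the number of ``conformist'' voters, and the verification that every proposal at $\tilde{w}$-distance less than $1/2+\ell/2$ from $\mathbf{1}$ is strictly majority-opposed are all deferred to ``a careful balance,'' which is exactly the content of the theorem on this range. Moreover, the structure you describe --- each minority voter disagreeing with $\mathbf{1}$ on a cyclic window of $k+1$ topics, with weight concentrated on that window --- fails as stated if ``concentrated on the window'' means spread over the window. Take $s>1/2$ to be the window weight, spread (near-)uniformly within it, and flip a single topic $j_0$: the resulting proposal is at distance only $\ell$ from $\mathbf{1}$, yet each of the $k+1$ minority types whose window contains $j_0$ now has disagreement weight about $s\,\frac{k}{k+1}$, which is below $1/2$ unless $s>\frac{k+1}{2k}$, so they support it; only the other $k$ types oppose, and together with the conformists this proposal is strictly majority-supported. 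On the other hand, keeping $+1$ as the weighted majority on every topic while the minority voters still outnumber the conformists forces $s<\frac{k+1}{2k+1}<\frac{k+1}{2k}$, so no uniform-window choice of $s$ can work. The missing idea is to concentrate strictly more than half of each minority voter's weight on a \emph{single} topic, at which point the extra $-1$'s in the window are superfluous and you arrive at the paper's much simpler construction: $t=2k+1$ topics, $2t-1$ copies of each single-issue type $i$ (vote $-1$ only on topic $i$, weight $\frac{t}{2t-1}>\frac12$ there and $\frac{1}{2t-1}$ elsewhere) plus $t+1$ all-$+1$ voters with uniform weights; then $\tilde{w}$ is uniform, $\mathbf{1}$ is the unique weighted IWM, and any $p$ agreeing with $\mathbf{1}$ on at least $k+1$ topics is opposed by $(2t-1)(k+1)$ voters against at most $(2t-1)k+(t+1)$ others, a strict majority since $t\geq 3$. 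Until such an explicit instance and verification are supplied, the first bullet is not proved.
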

 We conjecture that the upper bounds in \cref{thm:indiv_weights_positive} are tight for the remaining values of $\ell$, but leave this to future work. The proof of \cref{thm:indiv_weights_negative} is deferred to \cref{app:negative_internal_iwm}, but we provide the construction for $\ell \in (1/2, 1)$ and some intuition here. In the instance below, we choose $x$ large enough such that $\tilde{w}_{max} = \ell$ and the first issue holds a strict majority of the weight for all voters. There are $x$ copies of the first voter, and $x+1$ copies of the second.
 \[
\prefprof =\begin{matrix}(x)\  \times \\ (x+1) \ \times  \end{matrix}\begin{bmatrix}
\tikzmarknode{p311}{\texttt{+1}} & \tikzmarknode{p312}{\texttt{+1}} \\
\tikzmarknode{p321}{\texttt{-1}} & \tikzmarknode{p322}{\texttt{+1}} 
\end{bmatrix} \quad
\mathcal{W} =\begin{matrix}(x) \ \times \\ (x+1) \  \times \end{matrix}\begin{bmatrix}
 \tikzmarknode{p311w}{\frac{x+1}{x} \cdot \ell} & \tikzmarknode{p312w}{1 - \frac{x+1}{x} \cdot \ell}\\
 \tikzmarknode{p321w}{\frac{x}{x+1} \cdot \ell} & \tikzmarknode{p322w}{1 - \frac{x}{x+1} \cdot \ell} 
\end{bmatrix}
\]%
\begin{tikzpicture}[overlay,remember picture, shorten >=-3pt, shorten <= -3pt]
\drawcell{p311}{\cone}
\drawcell{p312}{\cone}
\drawcell{p321}{\czero}
\drawcell{p322}{\cone}
\end{tikzpicture}%

 In this instance, all voters are essentially ``single-issue voters'' on the first topic, but the second type of voters split their weight slightly more evenly between the two topics. $+1$ is the weighted majority opinion on the first topic, but any proposal with $+1$ for that topic will not get majority support because voters of the second type will oppose it. Notably, $\mathbf{1}$ is the unique IWM in our constructions, implying there is no majority-supported proposal close to \textit{any} IWM.

\cref{thm:indiv_weights_negative} quashes any hope of improving on \cref{thm:indiv_weights_positive} and proving a similar result to the external weights setting (where $g_\ell < 1/2$ held for any weights profile). Once voters can have distinct weight vectors, increasing $\tilde{w}_{max}$ can make the distance between all majority-supported proposals and IWM proposals arbitrarily large. We conclude this section by characterizing a group of voting instances in which Anscombe's Paradox will not occur.

\textbf{Condition precluding Anscombe's Paradox.} We find that generalizations of Wagner's Rule of Three-Fourths hold in both the external and internal weights settings:

\begin{restatable}{theorem}{threefourths}\label{thm:three-fourths}
    If $\avgmaj \geq 3/4$ then Anscombe's paradox will not occur. Additionally, if $m_j \geq 3/4$ for all $j \in [t]$ in the external weights setting, then Ostrogorski's paradox will not occur.
\end{restatable}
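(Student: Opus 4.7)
The plan is to assume without loss of generality (by negating columns) that $+1$ is a majority opinion on every topic, so that $\mathbf{1}$ is an IWM proposal. Both parts then reduce to elementary counting inequalities.

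For the Anscombe claim (valid in either weight model), for each voter $i$ I would define their weighted agreement with $\mathbf{1}$ by $A_i := \sum_{j=1}^{t} w_{i,j}\,\mathbb{I}(v_{i,j}=+1)$ (setting $w_i = w$ in the external model). Voter $i$ supports $\mathbf{1}$ iff $A_i > 1/2$, is indifferent iff $A_i = 1/2$, and opposes iff $A_i < 1/2$. Exchanging the order of summation and using the definitions of $\tilde{w}_j$, $m_j$, and $\avgmaj$, one gets the key identity $\sum_{i} A_i = n \sum_{j} \tilde{w}_j m_j = n \avgmaj \geq \tfrac{3n}{4}$. On the other hand, writing $s$ for the number of supporters, the crude bounds $A_i \leq 1$ for supporters and $A_i \leq 1/2$ for non-supporters yield $\sum_i A_i \leq s + \tfrac{1}{2}(n-s) = \tfrac{n+s}{2}$. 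Combining, $s \geq n/2$, so the number of opposers is at most $n - s \leq s$, ruling out Anscombe's paradox.

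For the Ostrogorski claim under external weights, fix any proposal $p \neq \mathbf{1}$ and let $T := \{j : p_j = -1\}$ and $w_T := \sum_{j \in T} w_j > 0$. Define $S_i := \sum_{j \in T} w_j v_{i,j}$. A direct computation gives $\innerprod{v_i}{\mathbf{1}-p}_w = 2 S_i$, so voter $i$ strictly prefers $\mathbf{1}$ over $p$ iff $S_i > 0$ and prefers $p$ over $\mathbf{1}$ iff $S_i < 0$. Using $m_j \geq 3/4$, I would compute $\sum_i S_i = \sum_{j \in T} w_j \cdot n (2m_j - 1) \geq \tfrac{n w_T}{2}$; conversely, $S_i \leq w_T$ always and $S_i \leq 0$ unless $i$ prefers $\mathbf{1}$ over $p$, yielding $\sum_i S_i \leq s \cdot w_T$, where $s := |\{i : S_i > 0\}|$. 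Hence $s \geq n/2$, so at least as many voters prefer $\mathbf{1}$ over $p$ as prefer $p$ over $\mathbf{1}$, i.e., $\mathbf{1} \succcurlyeq_{\votinginstance} p$. As $p$ was arbitrary, $\mathbf{1}$ is a Condorcet winner, and Ostrogorski's paradox is averted.

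The argument is quite direct, so there is no real ``hard part.'' The main thing to be careful about is bookkeeping: the Anscombe argument works uniformly in both weight models thanks to the identity $\sum_i A_i = n \avgmaj$ (the internal weights carry extra structure, but it aggregates cleanly into $\avgmaj$); the strict-versus-weak distinctions at ties (indifferent voters with $A_i = 1/2$ or $S_i = 0$) must be tracked to distinguish ``no Anscombe'' (a strict majority loss must be avoided) from ``$\mathbf{1}$ weakly wins every pairwise contest'' (needed for the Condorcet conclusion).
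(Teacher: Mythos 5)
Your proof is correct. For the Anscombe claim you follow essentially the paper's own argument: your $\sum_i A_i$ is exactly the paper's quantity $\wones$ (the total weight placed on $+1$), the identity $\sum_i A_i = n\avgmaj$ is the same double count, and bounding supporters by $1$ and non-supporters by $1/2$ is the same estimate --- if anything, your direct derivation of $s \geq n/2$ is cleaner than the paper's contradiction with floor functions and tracks indifferent voters more explicitly. For the Ostrogorski claim you take a genuinely different route: the paper invokes its lemma that, under external weights, Ostrogorski's paradox occurs iff Anscombe's paradox occurs on a sub-instance obtained by deleting issues and renormalizing the weights, observes that deletion preserves the per-topic majorities $m_j \geq 3/4$, and then applies the first part; you instead fix an arbitrary challenger $p$, restrict the counting to the disagreement set $T$, and show directly that $\mathbf{1} \succcurlyeq_\votinginstance p$. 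The two are the same computation in disguise --- your $S_i$ is precisely voter $i$'s agreement balance on the restricted instance --- but yours is self-contained and avoids the reduction lemma, whereas the paper's factorization reuses a lemma it needs anyway (also for the single-switch results) and makes the ``Ostrogorski $=$ Anscombe on subinstances'' structure explicit.

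Two bookkeeping points, neither fatal. First, when some $m_j = 1/2$ there can be several IWM proposals, and Anscombe's paradox is defined with respect to \emph{any} of them; your column-negation normalization must be applied to whichever IWM is under consideration. This works because the columns you flip for a given IWM have $m_j = 1/2$, so $\avgmaj$ is unchanged --- the paper spells this out, and you should too. Under $m_j \geq 3/4$ the IWM is unique, so proving $\mathbf{1}$ is a Condorcet winner does preclude Ostrogorski. Second, your assertion $w_T > 0$ can fail if $p$ differs from $\mathbf{1}$ only on zero-weight topics; in that case all $S_i = 0$, giving $\mathbf{1} \approx_\votinginstance p$, so the conclusion is unaffected, but the division by $w_T$ should be guarded by this trivial case.
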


Our proof (deferred to \cref{app:wagner}) follows Wagner's original proof strategy of counting agreement with an IWM in an instance in two ways: column-wise and row-wise, but is modified to account for weights. We get the second part of our claim by using the fact that, under external weights, Ostrogorski's paradox occurs if and only if there is a subset of issues inducing an instance where Anscombe's paradox occurs.

\section{Conclusion and Future Work}
 
We explored how best to represent the will of voters on multiple, separable issues when optimizing for two potentially conflicting ideals: agreement with issue-wise majority and success in pairwise proposal comparisons. Additionally, we augmented previous multi-issue voting models to account for non-uniform and individualized issue importance. We demonstrated that determining whether an IWM is a Condorcet winner is co-NP hard, but provided an efficiently checkable condition under which Ostrogorski's paradox does not occur. We then examined instances where an IWM loses to the opposing proposal (i.e., Anscombe's paradox occurs) and showed how our two weighting models alter our ability to reconcile the two objectives. While we now have a rich understanding of the interaction of these two majoritarian ideals, one could optimize for different notions of representation in the proposal selection. It would be interesting to study variants of maximizing total voter ``satisfaction'' --- the total weight voters have on topics that they agree with the final proposal on (a weighted version of an objective proposed in \cite{fritsch2022price}). On the technical side, our work leaves open a number of interesting questions and gaps: (i) Our \cref{thm:external_iwm} for external weights is only existential. In contrast, in the unweighted setting, \cite{constantinescu2023computing} also provide a polynomial-time method to derandomize the probabilistic argument. Extending this approach to the weighted setting appears generally more challenging but likely feasible in pseudo-polynomial time with slightly more involved techniques. 
(ii) Paper \cite{constantinescu2023computing} also shows a hardness result for the unweighted case: telling whether a proposal achieves more agreement with an IWM than guaranteed by the probabilistic argument is NP-hard. It would be interesting to get a similar result for every fixed weights vector $w$. (iii) We have only succeeded in proving that our bounds in \cref{thm:indiv_weights_positive} are tight for some portion of the range $\tilde{w}_{max} \in (0,1)$.

\begin{acks}
Carmel Baharav was supported by an ETH Excellence Scholarship. We thank Edith Elkind for the simpler proof of \cite[{Theorem 2}]{complexity_deliberative_coalition} that we include in the appendix. We thank Tamio-Vesa Nakajima for helping typeset \cref{fig:moebius} and the many productive discussions surrounding the linear-time recognition algorithm for single-switch preferences (including noting \cref{lemma:cyclic-shift-and-negate}). Finally, we are grateful to the anonymous reviewers for their insightful comments and suggestions that helped improve the quality of the paper.
\end{acks}

\bibliographystyle{ACM-Reference-Format} 
\bibliography{references}


\newpage

\appendix

\section{Complexity of Determining a Condorcet Winner}\label{app:condorcet}

In this section, we show that \fauxsc{Unanim} is NP-hard. This was previously known \cite[{Theorem 2}]{complexity_deliberative_coalition}, but the proof there is a more complicated reduction from Independent Set. In contrast, our proof proceeds by recasting the problem in terms of selecting a subset of column vectors whose sum is negative in all coordinates. This alternate view enables a more natural reduction from Exact Cover By 3-Sets. Afterward, we put all the pieces together to give a self-contained formal proof of \cref{th:checking-condorcet-co-np-hard}. We conclude the section by providing a similar ``choosing a subset of vectors'' formulation for \fauxsc{Major}, which we believe could be interesting more broadly.

To prove that \fauxsc{Unanim} is NP-hard, we note that \fauxsc{Unanim} admits an elegant reformulation: write $\prefprof = (c_1, \ldots, c_t)$ in terms of its issue/column vectors, then selecting a proposal $p \in \Bset^t$ amounts to choosing the subset of issues where $p$ differs from $\mathbf{1}$. A voter $i \in [n]$ prefers $p \succ_i \mathbf{1}$ iff the sum of the selected vectors is strictly negative in coordinate $i$, so \fauxsc{Unanim} asks to select vectors such that the sum is negative in all $n$ coordinates.
Hence, \fauxsc{Unanim} is equivalent to the following problem, which we find of independent interest:

\begin{tcolorbox}[boxsep=0mm]
\textbf{Problem ``\fauxsc{Negative-Sum-Subset}'' (\fauxsc{NSS})}\\
\textbf{Input}: Collection $C$ of $t$ dimension-$n$ vectors over $\pm 1$. \\
\textbf{Output}: Does there exist 
$C' \subseteq C$ s.t~$\textbf{v} := \sum_{c \in C'} c$ is negative in all coordinates: $\textbf{v}_i < 0$ for all $i \in [n]$? 
\end{tcolorbox}

\begin{lemma}\label{lemma:unanim-is-np-hard} \fauxsc{UNANIM} = \fauxsc{NSS} is NP-hard.
\end{lemma}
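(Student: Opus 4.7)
The plan is to reduce from \fauxsc{Exact Cover by 3-Sets} (X3C): given a universe $U$ with $|U| = 3q$ and a collection $\mathcal{S}$ of $3$-element subsets of $U$, decide whether $U$ admits an exact cover from $\mathcal{S}$. Without loss of generality we may take $q \geq 3$, since X3C is polynomial-time solvable for smaller $q$.

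First I would set up the natural encoding: create one coordinate per $u \in U$, and, for each $S \in \mathcal{S}$, a vector $c_S$ with $(c_S)_u = +1$ iff $u \in S$. For any chosen $\mathcal{S}' \subseteq \mathcal{S}$ with $|\mathcal{S}'| = k$, the sum in coordinate $u$ is $2 a_u - k$, where $a_u = |\{S \in \mathcal{S}' \colon u \in S\}|$; this is strictly negative iff $a_u < k/2$. An exact cover gives $a_u = 1$ and $k = q$, hence sum $2 - q < 0$ for $q \geq 3$, establishing the forward direction. The next step is to rule out ``spurious'' \fauxsc{NSS} solutions: without further structure, simply taking $\mathcal{S}' = \mathcal{S}$ may already satisfy $a_u < |\mathcal{S}|/2$ in reasonably balanced instances. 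To prevent these, I would augment the construction with auxiliary gadget vectors and extra coordinates engineered to force $|\mathcal{S}'| = q$ and $a_u \geq 1$ for every $u \in U$. Once both hold, the identity $\sum_u a_u = 3|\mathcal{S}'| = 3q = |U|$ combined with the lower bound forces $a_u = 1$ for every $u$, i.e., an exact cover.

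The main obstacle is engineering those gadgets with $\pm 1$-valued entries only. Naive attempts fail: a counting coordinate with all $+1$'s in the $c_S$ rows can never be negative, while padding with a vector of all $-1$'s immediately yields a trivial negative-sum solution detached from the X3C structure. A successful gadget must interlock with the element coordinates so that the only way to drive every coordinate strictly negative is to choose exactly $q$ of the $c_S$ vectors, collectively covering each $u$ at least once. Once such a gadget is in place, the reverse direction reduces to a case analysis over how the chosen subset distributes between gadget and original vectors, and the overall reduction is clearly polynomial-time.
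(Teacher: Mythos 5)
You have chosen the same source problem as the paper (X3C) and the easy direction of your encoding is fine, but the proposal stops exactly where the real work begins: the gadget that rules out spurious \fauxsc{NSS} solutions is never constructed, only postulated. You correctly observe that with membership encoded as $+1$, taking all of $\mathcal{S}$ (or any sufficiently balanced large subfamily) can already make every coordinate negative, and you correctly note that a naive counting coordinate (all $+1$'s on the $c_S$ rows) or an all-$(-1)$ padding vector does not work. But identifying the obstacle is not the same as overcoming it, and the statement ``a successful gadget must interlock with the element coordinates so that the only way\ldots'' is a specification, not a construction. As written, the reverse direction of your reduction is unproven, so the argument has a genuine gap.

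For comparison, the paper resolves precisely this difficulty by three interlocking choices. First, it flips the encoding: a set vector carries $-1$ on the coordinates of its three elements (and $+1$ elsewhere), so an element coordinate can only be driven negative if it is actually covered. Second, it adds one extra ``counting'' coordinate on which every set vector has $-1$ and introduces $s-1$ dummy vectors that have $-1$ on all $3s$ element coordinates but $+1$ on the counting coordinate; negativity of the counting coordinate then forces the number of chosen dummies to be strictly smaller than the number of chosen set vectors, while the dummies' $-1$'s supply just enough slack to make each covered element coordinate negative when the cover is exact ($s$ set vectors plus all $s-1$ dummies give sum $-1$ everywhere). Third, to certify \emph{exactness} rather than mere coverage, the instance is preprocessed (without loss of generality) so that some triple contains three elements occurring in no other triple; if the chosen set vectors covered some element twice, at least $s+1$ set vectors would be needed, and the coordinate of a uniquely occurring element would then sum to at least $0$ even with all dummies selected, a contradiction. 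These are exactly the ideas your plan is missing — in particular the asymmetric use of the counting coordinate ($-1$ on set vectors, $+1$ on dummies) and the uniquely-covered-element trick — and without some equivalent of them your reduction does not go through.
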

\begin{proof}
    We reduce from the problem Exact Cover By 3-Sets (\fauxsc{X3C}), which is well-known to be NP-hard \cite{garey1979computers}. An instance of \fauxsc{X3C} is given by a ground set $S$ of size $3s$ and a set $X$ of size-3 subsets of S (also called \emph{triples}); it is a yes-instance if and only if there exists a subset $X' \subseteq X$ that forms an exact cover of $S$ (i.e., the triples in $X'$ cover each element of $S$ exactly once); note that $|X'| = s$ must hold if so.

    Consider an instance of \fauxsc{X3C}, we want to construct an instance of \fauxsc{NSS} such that the former is a yes-instance if and only if the latter one is.
    
    We can assume without loss of generality that 
    there is a triple $\{a, b, c\} \in X$ such that no other element of $X$ contains any of $a$, $b$, or $c$. Indeed, if not, we can add three new elements to $S$ and a set comprising them to $X$. The modified instance is a yes-instance if and only if the original instance is.

    Now, to create an instance of \fauxsc{NSS}, we construct $|X|+s-1$ dimension-$(3s+1)$ vectors over $\pm1.$ The first $3s$ coordinates correspond to the elements of the ground set. For each set $x \in X$, we construct a vector with $-1$ in coordinates that correspond to elements of $x$ as well as in the last coordinate (in other coordinates, we have $+1$). We refer to these vectors as \emph{set vectors}. In addition, we construct $s-1$ vectors with $-1$ in the first $3s$ coordinates and $+1$ in the last coordinate; we refer to these vectors as \emph{dummy vectors}. 

    We claim that this is a yes-instance of \fauxsc{NSS} if and only if the original instance of \fauxsc{X3C} was a yes-instance.

    $(\Leftarrow)$ Suppose we started with a yes-instance of \fauxsc{X3C}, so let $X' \subseteq X$ be an exact cover. Recall that this implies $|X'| = s$. Take the $s$ set vectors corresponding to $X'$ as well as all $s-1$ dummy vectors. Summing up these vectors, in the last coordinate, we get $s \cdot (-1) + (s-1) \cdot (+1) = -1$.
    In every other coordinate, we have $s-1$ entries $-1$ from the dummy vectors, one entry $-1$ from the vector corresponding to the triple that covers the respective element, and $s-1$ entries $+1$ from the other selected set vectors, making for a total sum of $(s-1) \cdot (-1) + 1 \cdot (-1) + (s - 1) \cdot (+1) = -1$. Hence, the resulting instance is a yes-instance of \fauxsc{NSS}.

    $(\Rightarrow)$ Conversely, suppose the resulting instance is a yes-instance of \fauxsc{NSS}. Hence, pick some of the vectors so that the sum in each coordinate is negative. We will show that, among picked vectors, the set vectors correspond to an exact cover. Suppose we picked $t$ set vectors. Then, we picked at most $t - 1$ dummy vectors (as otherwise, the sum in the last coordinate would be non-negative). Note that this implies $t > 0$.

    Assume for a contradiction that the set vectors among selected vectors do not correspond to a cover. Then, there is some coordinate among the first $3s$ in which all selected set vectors have $+1$. However, in that case, the sum of the vectors cannot be negative in that coordinate, as we have $t$ entries $+1$ from set vectors and at most $t-1$ entries $-1$ from dummy vectors, a contradiction.
    
    On the other hand, assume for a contradiction that the selected set vectors do correspond to a cover, but it is not an exact cover, so some element is covered more than once. In particular, each element is covered at least once, and at least one element element is covered more than once, so the sum of the sizes of the triples that the selected set vectors correspond to has to be at least $3s + 1,$ meaning that we had to pick at least $s+1 = \lceil \frac{3s + 1}{3} \rceil$ set vectors.
    Now, recall that there is some element $a$ (without loss of generality, the element corresponding to the first coordinate) that is only contained in one triple in X.\footnote{In fact, there are three such elements, but we do not need this fact.} Then, in our chosen collection of vectors, we have at most one set vector with $-1$ in the first coordinate and, hence, at least $s$ set vectors with $+1$ in this coordinate. Hence, considering only set vectors, the sum in this coordinate is at least $1 \cdot (-1) + s \cdot (+1) = s - 1$. Hence, even if all $s-1$ dummy vectors were picked, the sum in this coordinate is at least $s - 1 + (s - 1) \cdot (-1) = 0$, and hence non-negative, a contradiction. 
    
    Hence, the set vectors among picked vectors correspond to an exact cover, so the original instance of \fauxsc{X3C} is a yes-instance, completing the proof. 
\end{proof}

We now give a self-contained formal proof of \cref{th:checking-condorcet-co-np-hard}, restated below for convenience.

\condorcet*

\begin{proof}%
In the unweighted setting with odd $n$, checking for the existence of a Condorcet winner is equivalent to checking whether the (unique) issue-wise majority proposal is a Condorcet winner (by \cref{lemma:condorcet-implies-iwm}). It is enough to show hardness for the case where the issue-wise majority proposal is $\mathbf{1}$.\footnote{This is, in fact, equivalent to the general case by negating/flipping issues where $-1$ is the majority opinion, but we do not need this distinction to prove hardness.} Hence, we want to show that the following problem is co-NP-hard: ``Given a voting instance $\votinginstance = \prefprof$ in the unweighted setting with odd $n$ where $\mathbf{1}$ is the issue-wise majority proposal, determine whether $\mathbf{1}$ is a Condorcet winner.'' This is equivalent to showing that the `negated' problem is NP-hard: ``Given a voting instance $\votinginstance = \prefprof$ in the unweighted setting with odd $n$ where $\mathbf{1}$ is the issue-wise majority proposal, determine whether $\mathbf{1}$ is \emph{not} a Condorcet winner.'' Not being a Condorcet winner is equivalent to there existing a proposal $p \in \Bset^t$ such that $p \succ_\votinginstance \mathbf{1}$. Hence, the negated problem is precisely \fauxsc{Major}, which is NP-hard by \cref{lemma:major-is-np-hard}, completing the proof.
\end{proof}

We end by pointing out that it is also possible to formulate \fauxsc{Major} as a vector problem. We found it clearest not to do so in our proofs, but the problem resulting in doing so is elegant and might see other applications. Namely, the following problem is equivalent to \fauxsc{Major}, and hence NP-hard:

\begin{tcolorbox}[boxsep=0mm]
\textbf{Problem \\ ``\fauxsc{More-Negative-Than-Positive-Sum-Subset}''}\\ 
\textbf{Input}: Collection $C$ of $t$ dimension-$n$ vectors over $\pm 1$, s.t.~$n$ is odd and each vector's entries sum up to a positive amount. \\
\textbf{Output}: Does there exist $C' \subseteq C$ s.t.~$\textbf{v} := \sum_{c \in C'} c$ is negative in more coordinates than it is positive: $\sum_{i \in [n]}\sgn{\textbf{v}_i} < 0$?
\end{tcolorbox}

\section{An Ostrogorski-free Domain}
\subsection{For External Weights Single-Switch Prevents Ostrogorski's Paradox}\label{app:external-weights-single-switch}

In this section, we prove that, under external weights, the single-switch condition guarantees that all IWM proposals are Condorcet winners, i.e., Ostrogorski's paradox does not occur. To this end, we first show that every issue-wise majority proposal does not lose against its opposite, i.e., Anscombe's paradox does not occur. This was already known for the unweighted model \cite{laffond2006single}. We, however, found the proof in \cite{laffond2006single} relatively tricky to parse: in part because of certain missing/unclear details and in part because of the relatively involved case distinctions in the second part of the proof. We adapt this proof to external weights and also simplify and streamline it, removing the need for case distinctions by noting symmetries and wisely manipulating $\sgnText$ functions. This yields a shorter and clearer argument that highlights better where the assumption about our proposal being an IWM comes into play, which was not immediate from the original presentation.

\begin{restatable}{lemma}{anscombesingleswitch}\label{lemma:anscombe-does-not-occur-for-single-switch} Consider an external-weights single-switch instance $\votinginstance = (\prefprof, w)$. Then, for any issue-wise majority proposal $p$, we have $p \succcurlyeq_\votinginstance \overline{p}$.
\end{restatable}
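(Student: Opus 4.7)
The plan is to carry out a careful adaptation of the unweighted argument of \cite{laffond2006single}, with the external weights entering only as an extra bookkeeping layer. First, since permuting and flipping columns of $\prefprof$ are bijections on $\Bset^t$ that send IWMs to IWMs and preserve $\succcurlyeq_\votinginstance$ (flipping a column negates both $v_{i,j}$ and $p_j$, leaving $\innerprod{v_i}{p}_w$ unchanged), I reduce to the case where $\prefprof$ is in single-switch-no-flips form. Every voter is then either a \emph{prefix voter} with threshold $k_i \in \{0, \ldots, t\}$ (voting $+1$ exactly on $\{1,\ldots,k_i\}$) or a \emph{suffix voter} with threshold $k_i$ (voting $+1$ exactly on $\{k_i+1,\ldots,t\}$). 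Introduce the weighted prefix sums $\Phi(k) := \sum_{j\leq k} w_j p_j$, with $\Phi := \Phi(t)$.

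A direct calculation gives $\innerprod{v_i}{p}_w = 2\Phi(k_i) - \Phi$ for prefix voters and $\innerprod{v_i}{p}_w = \Phi - 2\Phi(k_i)$ for suffix voters. Setting $\psi_i := \Phi(k_i)$ for prefix voters and $\psi_i := \Phi - \Phi(k_i)$ for suffix voters, both cases collapse into $\innerprod{v_i}{p}_w = 2\psi_i - \Phi$, so the target $p \succcurlyeq_\votinginstance \overline{p}$, namely $\sum_i \sgn{\innerprod{v_i}{p}_w} \geq 0$, becomes: at least as many voters $i$ satisfy $\psi_i > \Phi/2$ as satisfy $\psi_i < \Phi/2$. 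An easy first bound uses IWM alone: letting $n^+_j$ be the number of $+1$'s in column $j$, double-counting gives $\sum_i \psi_i = \sum_j w_j p_j n^+_j$, and since $p_j b_j = p_j (2 n^+_j - n) \geq 0$ for every $j$ by the IWM property, summing with weights $w_j$ yields $\sum_i \psi_i \geq n\Phi/2$. This only bounds the average of the $\psi_i$'s, however, not the count on each side of $\Phi/2$.

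Upgrading this average into the desired count statement is the main obstacle, and where the single-switch structure must be used essentially. My plan is a pairing argument: after WLOG reversing the presentation if needed so that prefix voters are at least as numerous as suffix voters, I pair each suffix voter with a prefix voter so that every pair $(P,S)$ contributes $\sgn{y^P} + \sgn{y^S} \geq 0$ (equivalently, $\sgn{\Phi(k^P) - \Phi/2} \geq \sgn{\Phi(k^S) - \Phi/2}$) while every unpaired prefix voter individually has $y_i \geq 0$. The existence of such a pairing is established by a cardinality argument that unfolds the IWM condition through the identity $b_j = (n_P - n_S) - 2(A_j - B_j)$, where $A_j$ and $B_j$ count prefix and suffix voters with threshold strictly below $j$: since both $A_j, B_j$ are monotone in $j$, the sign constraints that IWM imposes on the $p_j$'s, together with how $\Phi(k)$ sits relative to $\Phi/2$ across thresholds, align coherently with the prefix/suffix threshold sequences to yield the needed counts of ``good'' and ``bad'' voters on each side. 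An alternative route reduces the weighted case to the unweighted one by cloning each column $j$ into a number of copies proportional to $w_j$ (for rational $w$) and then invoking \cite{laffond2006single} directly; this works but handling irrational $w$ demands an additional perturbation/pigeonhole argument to deal with tied voters ($y_i = 0$), a technicality the direct pairing approach sidesteps.
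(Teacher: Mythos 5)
Your setup is fine and matches the paper's: the reduction to a single-switch-no-flips presentation, the classification into prefix/suffix voters with thresholds, the identity $\innerprod{v_i}{p}_w = \pm(2\Phi(k_i) - \Phi)$, and the reformulation of $p \succcurlyeq_\votinginstance \overline{p}$ as a sign-counting statement are all correct, as is the observation that IWM only directly yields the average bound $\sum_i \psi_i \geq n\Phi/2$. But the entire content of the lemma lies in the step you leave as a plan: the existence of the pairing. What you assert there is a strictly \emph{stronger} combinatorial statement than the lemma (it requires, e.g., that the number of suffix voters with $\Phi(k_i) > \Phi/2$ be at most the number of prefix voters with $\Phi(k_i) > \Phi/2$, and symmetrically $P^- \leq S^+$, plus correct handling of the tied voters and of the ambiguous all-$\mathbf{1}$/all-$(-\mathbf{1})$ rows, which can be read as either prefix or suffix). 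None of this follows "coherently" from the monotonicity of $A_j, B_j$ and the identity $b_j = (n_P - n_S) - 2(A_j - B_j)$: the IWM hypothesis is a family of per-column sign conditions $p_j b_j \geq 0$, while your Hall-type conditions are statements about the region $\{k : \Phi(k) > \Phi/2\}$, which is \emph{not} an interval of thresholds when $p$ has $-1$ coordinates (since $\Phi(\cdot)$ is then non-monotone), so there is no single column whose balance certifies them. As written, the pairing's existence is unproven, and proving it would require at least as much work as the lemma itself; this is a genuine gap, not a bookkeeping detail.

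For comparison, the paper's proof avoids any per-side counting: writing $x_i$ for the number of $e_i$-voters minus $\overline{e_i}$-voters, the target $\sum_i x_i \sgn{\innerprod{e_i}{p}_w} \geq 0$ is transformed by summation by parts (using $x_i = (b_i - b_{i+1})/2$ and a wrap-around convention) into $\sum_{i=1}^{t} b_i\bigl(\sgn{\innerprod{e_i}{p}_w} - \sgn{\innerprod{e_{i-1}}{p}_w}\bigr) \geq 0$, and each term is shown non-negative individually because $b_i \cdot \innerprod{e_i - e_{i-1}}{p}_w = 2 b_i w_i p_i \geq 0$ by IWM together with monotonicity of the sign function. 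Your alternative route (cloning each column proportionally to a rational weight and invoking the unweighted result of Laffond and Lain\'e) is actually the more promising of your two suggestions: cloning preserves the single-switch property, the column balances, and every voter's sign of $\innerprod{v_i}{p}_w$, so the rational case does go through. But the irrational case is not a throwaway technicality: a naive limit of rational approximations can turn tied voters into opponents, so you would need to choose the approximation inside the rational affine subspace defined by the tie equations $\innerprod{v_i}{p}_w = 0$ (which is possible since these have $\pm 1$ coefficients, but must be argued). As submitted, neither route is carried to completion.
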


\begin{proof} Without loss of generality, assume that ones form a prefix or a suffix on every row of $\prefprof$.\footnote{Note that assuming this and, at the same time, that $\mathbf{1}$ is an issue-wise majority proposal would lose generality.} For each $1 \leq i \leq t$ define $e_i \in \Bset^t$ such that $e_{i, j} = +1$ if $j \leq i$, and $e_{i, j} = -1$ otherwise. It follows that every row in $\prefprof$ belongs to the set $\cup_{i = 1}^{t} \{e_i, \overline{e_i}\}$. Given a vector $\mathbf{u} \in \Bset^t$ write $\#_\prefprof(\mathbf{u}) := |\{i \in [n] \mid v_i = \mathbf{u} \}|$ for the number of voters in $\prefprof$ whose vote is $\mathbf{u}$, in which case we say that those voters are of \emph{type} $\mathbf{u}$. For $1 \leq i \leq t$, define $x_i := \#_\prefprof(e_i) - \#_\prefprof(\overline{e_i})$ to be the number of voters in $\prefprof$ with vote $e_i$ minus the number of voters in $\prefprof$ with vote $\overline{e_i}$. We want to show that $p \succcurlyeq_\votinginstance \overline{p}$, which amounts to $\sum_{i = 1}^n \sgn{\innerprod{v_i}{p - \overline{p}}_w} \geq 0$, which is equivalent to $\sum_{i = 1}^n \sgn{\innerprod{v_i}{p}_w} \geq 0$. Since voters can only be of the $2t$ types, this is equivalent to:
\[ 
\sum_{i = 1}^t \left(\#_\prefprof(e_i) \cdot \sgn{\innerprod{e_i}{p}_w} + \#_\prefprof(\overline{e_i}) \cdot \sgn{\innerprod{\overline{e_i}}{p}_w}\right) \geq 0
\]
Since $\innerprod{\overline{e_i}}{p}_w = - \innerprod{e_i}{p}_w$, this is the same as:
\begin{equation}\label{eq:1}
\sum_{i = 1}^t x_i \cdot \sgn{\innerprod{e_i}{p}_w} \geq 0 
\end{equation}
To prove this, we are given the fact that $p$ is an issue-wise majority proposal. Recall that for $j \in [t]$ we defined $b_j := \sum_{i = 1}^n v_{i, j} = \sum_{i = 1}^{t} x_i \cdot e_{i, j}$, in which case what we know amounts to $b_j \cdot p_j \geq 0$ for all $j \in [t]$.

\begin{figure}[t!]
\centering
\begin{tabular}{c | ccc@{\hskip 0.2in}c}
      & $e_1'$ & $e_2'$ & $e_3'$ & $e_4'$  \\
      \hline
       \tikzmarknode{cs_1_1}{$e_0$} & 
       \tikzmarknode{cs_1_2}{\texttt{-1}} & \tikzmarknode{cs_1_3}{\texttt{-1}} & \tikzmarknode{cs_1_4}{\texttt{-1}} & \tikzmarknode{cs_1_5}{\texttt{  }} \\[.06in]
       \tikzmarknode{cs_2_1}{$e_1$} & 
       \tikzmarknode{cs_2_2}{\texttt{+1}} & \tikzmarknode{cs_2_3}{\texttt{-1}} & \tikzmarknode{cs_2_4}{\texttt{-1}} & \tikzmarknode{cs_2_5}{\texttt{-1}} \\
       \tikzmarknode{cs_3_1}{$e_2$} & 
       \tikzmarknode{cs_3_2}{\texttt{+1}} & \tikzmarknode{cs_3_3}{\texttt{+1}} & \tikzmarknode{cs_3_4}{\texttt{-1}} & \tikzmarknode{cs_3_5}{\texttt{-1}} \\
       \tikzmarknode{cs_4_1}{$e_3$} & 
       \tikzmarknode{cs_4_2}{\texttt{+1}} & \tikzmarknode{cs_4_3}{\texttt{+1}} & \tikzmarknode{cs_4_4}{\texttt{+1}} & \tikzmarknode{cs_4_5}{\texttt{-1}} \\
\end{tabular}
\caption{Matrix in the proof of \cref{lemma:anscombe-does-not-occur-for-single-switch} for $t = 3$. The helper $e_0$ and $e_4'$ are depicted above/to the right.}
\label{fig:proof-matrix}
\begin{tikzpicture}[overlay,remember picture, shorten >=-3pt, shorten <= -3pt]
\drawcell{cs_1_2}{\czero}
\drawcell{cs_1_3}{\czero}
\drawcell{cs_1_4}{\czero}
\drawcell{cs_2_2}{\cone}
\drawcell{cs_2_3}{\czero}
\drawcell{cs_2_4}{\czero}
\drawcell{cs_2_5}{\czero}
\drawcell{cs_3_2}{\cone}
\drawcell{cs_3_3}{\cone}
\drawcell{cs_3_4}{\czero}
\drawcell{cs_3_5}{\czero}
\drawcell{cs_4_2}{\cone}
\drawcell{cs_4_3}{\cone}
\drawcell{cs_4_4}{\cone}
\drawcell{cs_4_5}{\czero}
\end{tikzpicture}
\end{figure}

To begin the proof, note that the type votes $e_1, \ldots, e_t$ form a $t \times t$ matrix with one row for each vote. Let $e_1', \ldots, e_t'$ be the columns of this matrix. 
For uniformity in the reasoning that follows, it will be helpful to define $e_{t + 1}' := \overline{e_1'} = -\mathbf{1}$ and $e_0 := \overline{e_t} = -\mathbf{1}$. See \cref{fig:proof-matrix} for an illustration. With these added conventions, for any $j \in [t]$ we have that $e_j'$ and $e_{j + 1}'$ differ only in coordinate $j$, which is $+1$ in $e_j'$ and $-1$ in $e_{j + 1}'$, and, moreover,
$e_j$ and $e_{j - 1}$ differ only in coordinate $j$, which is $+1$ in $e_j$ and $-1$ in $e_{j - 1}$. Let us also extend the definition of $b$ to make $b_{t + 1}$ well-defined. 

As a result, for all $i \in [t]$ we have that $b_i - b_{i + 1} = 2 \cdot x_i$ and $\innerprod{e_i - e_{i - 1}}{p}_w = 2 \cdot w_i \cdot p_i$. Moreover, by definition, $b_{t + 1} = -b_{1}$ and $\innerprod{e_t}{p}_w = -\innerprod{e_0}{p}_w$, which together imply that $b_{t + 1} \cdot \sgn{\innerprod{e_t}{p}_w} = b_1 \cdot \sgn{\innerprod{e_{0}}{p}_w}$.

Armed as such, we substitute in \cref{eq:1} to get that we need to show that:
\begin{gather*}
\sum_{i = 1}^t \frac{b_i - b_{i + 1}}{2} \cdot \sgn{\innerprod{e_i}{p}_w} \geq 0 \iff \\ 
\sum_{i = 1}^t (b_i - b_{i + 1}) \cdot \sgn{\innerprod{e_i}{p}_w} \geq 0 \iff \\
\sum_{i = 1}^{t} b_i \cdot \sgn{\innerprod{e_i}{p}_w} -
\sum_{i = 1}^{t} b_{i + 1} \cdot \sgn{\innerprod{e_i}{p}_w} \geq 0 
\end{gather*}
The last term in the second sum is $b_{t + 1} \cdot \sgn{\innerprod{e_t}{p}_w} = b_1 \cdot \sgn{\innerprod{e_{0}}{p}_w},$ so the second sum stays the same if we change its summation bounds from $(1, t)$ to $(0, t - 1)$. Doing so and then rewriting in terms of $i + 1$ instead of $i$, the second sum equals $\sum_{i = 1}^{t} b_{i} \cdot \sgn{\innerprod{e_{i - 1}}{p}_w}$. Combining the two sums and then factoring out the $b_i$ yields: 
\begin{gather*}
\sum_{i = 1}^{t} b_i \cdot (\sgn{\innerprod{e_i}{p}_w} - \sgn{\innerprod{e_{i - 1}}{p}_w}) \geq 0
\end{gather*}
To show that this is true, we will just show that each term is non-negative. Consider a fixed $i \in [t]$, then we would like to show that $b_i \cdot (\sgn{\innerprod{e_i}{p}_w} - \sgn{\innerprod{e_{i - 1}}{p}_w}) \geq 0$. This happens if and only if:
\begin{gather*}
    \sgn{b_i} \cdot (\sgn{\innerprod{e_i}{p}_w} - \sgn{\innerprod{e_{i - 1}}{p}_w}) \geq 0 \iff \\
    \sgn{b_i} \cdot \sgn{\innerprod{e_i}{p}_w} \geq \sgn{b_i} \cdot \sgn{\innerprod{e_{i - 1}}{p}_w} \iff \\
    \sgn{b_i \cdot \innerprod{e_i}{p}_w} \geq \sgn{b_i \cdot \innerprod{e_{i - 1}}{p}_w} 
\end{gather*}
Because the sign function is monotonic, it hence suffices to show that $b_i \cdot \innerprod{e_i}{p}_w \geq b_i \cdot \innerprod{e_{i - 1}}{p}_w \iff b_i \cdot \innerprod{e_i - e_{i - 1}}{p}_w \geq 0 \iff b_i \cdot (2 \cdot w_i \cdot p_i) \geq 0$, which is true since $b_i \cdot p_i \geq 0$.
\end{proof}

The general statement that Ostrogorski's paradox does not occur will now follow easily by combining \cref{lemma:anscombe-does-not-occur-for-single-switch} with the following:

\begin{lemma}\label{lem:ansc_ost_relation} Ostrogroski's paradox occurs for an instance $\votinginstance = (\prefprof, w)$ in the external-weights model if and only if Anscombe's paradox occurs on an instance $\votinginstance'$ obtained from $\votinginstance$ by removing (possibly zero) issues from $\votinginstance$ and renormalizing the weights to sum up to 1.
\end{lemma}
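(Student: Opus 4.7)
The plan is to exploit the fact that the weighted collective preference between two proposals depends only on the coordinates where they disagree. Concretely, for any $p, p' \in \Bset^t$ with disagreement set $D := \{j \in [t] \colon p_j \neq p'_j\}$, we have $p_j - p'_j = 0$ for $j \notin D$ and $p_j - p'_j = 2p_j$ for $j \in D$, so
\[
\innerprod{v_i}{p - p'}_w \;=\; 2\sum_{j \in D} w_j\, v_{i,j}\, p_j \;=\; 2 \cdot \Bigl(\sum_{j \in D} w_j\Bigr) \cdot \innerprod{v_i|_D}{p|_D}_{w'},
\]
where $w'$ is the restriction of $w$ to $D$ renormalized to unit sum (if $D = \emptyset$, $p = p'$ trivially). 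Thus $\sgn{\innerprod{v_i}{p-p'}_w} = \sgn{\innerprod{v_i|_D}{p|_D - \overline{p|_D}}_{w'}}$ for each voter $i$, which gives $p \succ_\votinginstance p' \iff p|_D \succ_{\votinginstance'} \overline{p|_D}$, where $\votinginstance' := (\prefprof|_D, w')$.

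\textbf{Ostrogorski $\Rightarrow$ Anscombe on a subinstance.} Assume Ostrogorski's paradox occurs in $\votinginstance$: some IWM proposal $p_{IWM}$ is defeated by another proposal $p'$. Let $D$ be their disagreement set; since $p' \neq p_{IWM}$, $D \neq \emptyset$. Let $\votinginstance'$ be the subinstance obtained by restricting to $D$ and renormalizing. Because the IWM condition is a pointwise condition on each issue (\cref{lemma:condorcet-implies-iwm}'s argument), the restriction $p_{IWM}|_D$ is an IWM proposal for $\votinginstance'$. The key observation above then yields $\overline{p_{IWM}|_D} = p'|_D \succ_{\votinginstance'} p_{IWM}|_D$, so Anscombe's paradox occurs in $\votinginstance'$.

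\textbf{Anscombe on a subinstance $\Rightarrow$ Ostrogorski.} Conversely, suppose Anscombe's paradox occurs in some subinstance $\votinginstance'$ obtained by removing the issues outside some $D \subseteq [t]$ and renormalizing the weights; so there is an IWM proposal $q$ of $\votinginstance'$ with $\overline{q} \succ_{\votinginstance'} q$. Extend $q$ to a full proposal $p_{IWM} \in \Bset^t$ by setting $p_{IWM,j}$ to any majority opinion on each issue $j \notin D$; then $p_{IWM}$ is an IWM proposal of $\votinginstance$. Define $p'$ by $p'_j := -p_{IWM,j}$ for $j \in D$ and $p'_j := p_{IWM,j}$ for $j \notin D$, so the disagreement set of $p_{IWM}$ and $p'$ is exactly $D$ and $p'|_D = \overline{q}$. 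The observation then gives $p' \succ_\votinginstance p_{IWM}$, so Ostrogorski's paradox occurs in $\votinginstance$.

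\textbf{Main obstacle.} The statement is essentially a bookkeeping exercise, so there is no deep obstacle; the only non-trivial point is making sure that (i) renormalizing weights by a positive constant preserves signs (hence individual and collective preferences) and (ii) the IWM property is truly pointwise, so it survives both restriction of issues (forward direction) and extension by arbitrary majority choices on the removed issues (backward direction).
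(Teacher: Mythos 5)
Your proof is correct and follows essentially the same route as the paper's: in both directions you restrict/extend along the disagreement set of the two proposals, use that the IWM property is issue-wise, and note that collective comparison of two proposals depends only (up to a positive renormalization factor) on the issues where they differ. The only difference is cosmetic — you spell out the inner-product identity that the paper leaves implicit as "by construction."
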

\begin{proof} We prove the two directions separately:

$(\Leftarrow)$ Assume Anscombe's paradox occurs on an instance $\votinginstance'$ obtained from $\votinginstance$ by removing a subset of issues $R \subseteq [t]$ from $\votinginstance$ and renormalizing the weights to sum up to 1. Then, by definition, there is an IWM proposal $p_{IWM}'$ for $\votinginstance'$ such that $\overline{p_{IWM}'} \succ_{\votinginstance'} p_{IWM}'$. Complete $p_{IWM}'$ into an IWM proposal $p_{IWM}$ for $\votinginstance$ and define $p_*$ to agree with $p_{IWM}$ in topics in $R$ and disagree in topics in $[t] \setminus R$. Then, Ostrogorski's paradox occurs for $\votinginstance$: proposal $p_{IWM}$ is an IWM and $p_* \succ_{\votinginstance} p_{IWM}$ by construction because $\overline{p_{IWM}'} \succ_{\votinginstance'} p_{IWM}'$.

$(\Rightarrow)$ Assume Ostrogroski's paradox occurs for $\votinginstance$. Let $p_{IWM}$ and $p_*$ be such that $p_{IWM}$ is an IWM proposal for $\votinginstance$ and $p_* \succ_{\votinginstance} p_{IWM}$. Define $R \subseteq [t]$ to be the set of topics in which $p_*$ and $p_{IWM}$ agree, and create $\votinginstance'$ from $\votinginstance$ by removing issues in $R$ and renormalizing the weights to sum up to 1. Moreover, restrict $p_{IWM}$ to topics in $[t] \setminus R$ to get an IWM proposal $p_{IWM}'$ for $\votinginstance'$. Then, Anscombe's paradox occurs for $\votinginstance'$: proposal $p_{IWM}'$ is an IWM and $\overline{p_{IWM}'} \succ_{\votinginstance'} p_{IWM}'$ by construction because $p_* \succ_{\votinginstance} p_{IWM}$.
\end{proof}

\externalswitchcondorcet*
\begin{proof} Assume this was not the case and consider a single-switch instance $\votinginstance$ in the external-weights model for which there exists an issue-wise majority proposal that is not a Condorcet winner (equivalently, Ostrogorski's paradox occurs for $\votinginstance$). By \cref{lem:ansc_ost_relation}, Anscombe's paradox occurs on an instance $\votinginstance'$
obtained from $\votinginstance$ by removing (possibly zero) issues from $\votinginstance$ and renormalizing the weights to sum up to 1. Since $\votinginstance$ is single-switch, so is $\votinginstance'$. Hence, Anscombe's paradox occurs in a single-switch instance, contradicting \cref{lemma:anscombe-does-not-occur-for-single-switch}.
\end{proof}

\subsection{Recognizing Single-Switch Profiles}\label{app:reco-ssw}

In this section, we first prove \cref{lemma:cyclic-shift-and-negate}, restated below for convenience. Then, we delve into the task of recognizing single-switch-no-flips profiles, providing an ample guided discussion of the relation between our simpler $O(nt)$ algorithm and related work, including the task of recognizing single-crossing preferences. As a bonus, we discuss a simpler and, at the same time, more efficient algorithm for recognizing single-crossing preferences, running in time $O(nt \sqrt{\log n})$.

\lemmacyclicshift*
\begin{proof} It is enough to show this for the case where $\prefprof$ has $n = 1$ rows, in which case $c_1, \ldots, c_t \in \Bset$. Let us assume $c_1 = 1$, as the other case is analogous. Since $\prefprof'$ is an SSW presentation of $\prefprof$, let $1 \leq k \leq t$ be such that $\prefprof'$ starts with $k$ ones and the rest are $-1$'s. As a result, $\prefprof'_r$ by definition starts with $k - 1$ ones, and the rest are $-1$'s, implying that $\prefprof'_r$ is an SSW presentation of $\prefprof'$, and hence also of $\prefprof$ by transitivity. 

To get the part about $\prefprof''$, apply the previous reasoning for $\prefprof'_r$ repeatedly, each time taking the first column, negating it, and moving it to the end. Doing so $2t$ times leads back to the original presentation, and along the way, we get the advertised single-switch presentations.
\end{proof}

\textbf{Recognizing single-switch-no-flips profiles.} We now focus on deciding whether a given preference profile $\prefprof$ is single-switch-no-flips. A first way to do so requires rather involved machinery, by reducing to the \emph{Consecutive Ones Problem} (C1P). In the C1P problem, the input is an $n \times t$ matrix with $\pm 1$ entries. The goal is to permute its columns so that the ones on each row form a consecutive interval. Solving C1P on a matrix with a negated copy of itself appended underneath corresponds to requiring a solution for the original matrix where not only ones are consecutive, but also $-1$'s, meaning that ones form a prefix or a suffix on each row. C1P can be solved in $O(n t)$ time \cite{booth_lueker}, hence giving an immediate solution to check whether $\prefprof$ is single-switch-no-flips within the same time. However, linear-time C1P solvers are complicated and notoriously error-prone: most available implementations fail on at least some edge cases \cite{experimental_pq_trees}. Moreover, reducing to C1P does not utilize the additional structure present in our problem and hence does not shed light on the structure of all solutions, as we set out to do. We note that this way of checking whether a profile is single-switch-no-flips has previously appeared in \cite{elkind_lackner_dichotomous}, where the authors use it to solve the equivalent problem of recognizing voter/candidate-extremal-interval preferences.

A second way to recognize single-switch-no-flips profiles reduces to the problem of recognizing \emph{single-crossing preferences}. In this problem, the input consists of a set $A$, of size denoted by $n$, and a list $(\succ_1, \ldots, \succ_t)$ of linear orders over $A$. The goal is to permute this list to obtain a new list $(\succ_1', \ldots, \succ_t')$ such that for any $a, a' \in A$ with $a \neq a'$ the set $\{j \in [t] : a \succ_j' a'\}$ forms a prefix or a suffix. The reduction is not too difficult: start with a preference profile $\prefprof$ and define $A = \cup_{i \in [n]}\{a_i^0, a_i^1\}$ and a list $(\succ_1, \ldots, \succ_t)$ of linear orders over $A$ as follows. For each $j \in [t]$, order $\succ_j$ ranks the elements in $A$ as $\{a_1^0, a_1^1\} \succ_j \{a_2^0, a_2^1\} \succ_j \ldots \succ_j \{a_n^0, a_n^1\}$, breaking the tie inside each bracket as follows: for each $i \in [n]$, rank $a_i^0\succ_j a_i^1$ if $v_{i, j} = 1$, and $a_i^1\succ_j a_i^0$ otherwise. One can check that permutations of the list with the required property correspond to permutations of the columns of $\prefprof$ such that the ones on each row form a prefix or a suffix. Recognizing single-crossing preferences can be achieved in $O(nt \log n))$ time \cite[Algorithm 4]{survey_restricted}, meaning that our problem also can. In contrast to the C1P approach, this algorithm has a reasonable implementation. Moreover, by the standard fact that, when it exists, the single-crossing permutation is unique up to reversal, we get that the SSWNF presentation is unique up to reversal whenever it exists (we will give a self-contained proof later on, so we omit the details here for brevity).

The previous approach can be modified to run in time $O(nt)$ by identifying and adapting its super-linear components. Most prominently, \cite[Algorithm 4]{survey_restricted} computes $O(t)$ times the Kendall Tau distance between certain orders $\succ_i$ and $\succ_j$ with $i, j \in [t]$, which is defined as the number of pairs of elements on which $\succ_i$ and $\succ_j$ disagree: $d_{KT}(\succ_i, \succ_j) := \{(a, a') \in A^2 : a \succ_i a'\text{ and } a' \succ_i a\}|$. This is done in time $O(n \log n)$ by finding the number of inversions of a permutation. However, for our particular construction of the orders $(\succ_1, \ldots, \succ_t)$, disagreements between orders can only occur on pairs of the form $(a_k^0, a_k^1)$ with $k \in [n]$, and the number of such disagreements is precisely the Hamming distance $d_H(c_i, c_j)$, which can be computed in time $O(n)$. The algorithm also sorts a list of $O(t)$ integers with values bounded by $O(n^2)$. To get the right time complexity, this is done depending on whether $n$ or $t$ is larger, either in time $O(t \log t)$ or using Counting Sort in time $O(n^2 + t)$. We note that using Radix Sort would have sufficed to make this $O(n + t)$ without the need for a case distinction.\footnote{With this small modification, and if counting inversions is performed with the $O(n \sqrt{\log n})$ algorithm of \cite{patrascu_inversions}, single-crossing preferences can be recognized in the better time $O(n t \sqrt{\log n})$. We do not give the details here in order not to dilute the message. Instead, we discuss a simpler $O(n t \sqrt{\log n})$ algorithm at the end of the section based on similar ideas.} For our usage, the values in the list are instead bounded by $O(n)$, so Counting Sort suffices directly (this is because Kendall Tau distances can be quadratic in $n$, while Hamming distances only linear in $n$). This completes the required modifications. We note, moreover, that their algorithm proceeds in two stages: first, a candidate list $(\succ_1', \ldots, \succ_t')$ is determined, and then it is checked whether it satisfies the single-crossing condition. Moreover, should a solution exist, the candidate list is the unique one up to reversing the list. The second stage is, perhaps surprisingly, the more difficult one to achieve efficiently, and its correctness proof is the subtler part of the argument. For our purposes, however, the first stage suffices since, given the candidate solution, it is easy to check whether ones on each row form a prefix or a suffix in additional time $O(nt)$.

It is possible to give a self-contained $O(nt)$ algorithm for our problem following the outline above (without going through the reduction to single-crossing preferences). However, the resulting algorithm is still arguably not the simplest. Instead, in the following we present a simpler, $O(nt)$ direct algorithm for recognizing single-switch-no-flips profiles $\prefprof = (c_1, \ldots, c_t)$. The algorithm combines insights from \cite[Section 4.2]{survey_restricted} with a simple new observation. We defer further elaboration on the connection with single-crossing preferences until the end of the section for clarity.
Our algorithm proceeds as follows: First, find an index $x$ maximizing $d_H(c_1, c_x)$. Then, sort (using Counting Sort) the columns based on their Hamming distance from $c_x$ to get a profile $\prefprof' = (c_1', \ldots, c_t')$ where $d_H(c_x, c_i') \leq d_H(c_x, c_{i + 1}')$ for $i \in [t - 1]$ (i.e., ties in Hamming distance can be broken arbitrarily). We claim that either $\prefprof'$ is the unique SSWNF presentation of $\prefprof$ (up to reversing the order of the columns), or there is no such presentation. The algorithm runs in $O(nt)$ time. We prove all required claims in the following theorem:

\recosswnf*

\begin{proof} We first repeat the full algorithm briefly: 
\begin{enumerate}
\item Say $\prefprof = (c_1, \ldots, c_t)$;
\item Find an index $x$ maximizing $d_H(c_1, c_x)$;
\item Sort (using Counting Sort) the columns based on their Hamming distance from $c_x$ to get a profile $\prefprof' = (c_1', \ldots, c_t')$ where $d_H(c_x, c_i') \leq d_H(c_x, c_{i + 1}')$ for $i \in [t - 1]$; 
\item Check whether ones form a prefix or a suffix on each row of $\prefprof'$. If yes, return $\prefprof'$, else return ``not single-switch-no-flips.''
\end{enumerate}

Each step of the algorithm runs in time $O(nt)$, so it attains the required time-bound. It remains to show that: (i) if no solution exists, then this is correctly reported; (ii) if $\prefprof^* = (c_1^*, \ldots, c_t^*)$ is an arbitrary SSWNF presentation of $\prefprof$, then the algorithm returns either $\prefprof^*$ or its reverse (which note further implies that the solution is unique up to reversal). Part (i) follows immediately from the last step in the algorithm. To show part (ii), we will need an auxiliary claim:

\begin{claim}\label{claim:hamming-monot} Assume $1 \leq i < j < k \leq t$, then $d_H(c_i^*, c_j^*) \leq d_H(c_i^*, c_k^*)$ with equality iff $c_j^* = c_k^*$. Similarly, $d_H(c_i^*, c_k^*) \leq d_H(c_j^*, c_k^*)$ with equality iff $c_i^* = c_j^*$. 
\end{claim}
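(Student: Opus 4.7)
The plan is to reduce everything to a row-by-row analysis, exploiting the fact that in any SSWNF presentation each row has a very special shape. Specifically, since the $+1$'s on each row $r$ of $\prefprof^*$ form a prefix or a suffix, the sequence $v_{r,1}^*, v_{r,2}^*, \ldots, v_{r,t}^*$ is monotone as a function of the column index: either non-decreasing (suffix of $+1$'s) or non-increasing (prefix of $+1$'s). I would state this observation first as the sole structural input we need.

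Next, I would rewrite $d_H(c_a^*, c_b^*) = \sum_{r=1}^{n} \mathbb{I}(v_{r,a}^* \neq v_{r,b}^*)$ and prove the inequality $d_H(c_i^*, c_j^*) \leq d_H(c_i^*, c_k^*)$ by showing, row by row, that every row contributing to the left-hand count also contributes to the right-hand count. Concretely, for a fixed row $r$ assumed monotone, suppose $v_{r,i}^* \neq v_{r,j}^*$. Then the $\pm 1$ sequence must have switched its value at some position in $\{i{+}1, \ldots, j\}$, and by monotonicity it stays at the new value forever after, so $v_{r,k}^* = v_{r,j}^* \neq v_{r,i}^*$; hence row $r$ contributes to $d_H(c_i^*, c_k^*)$ as well. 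Summing over $r$ gives the inequality.

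For the equality condition, I would argue that $d_H(c_i^*, c_j^*) = d_H(c_i^*, c_k^*)$ holds precisely when no row witnesses a strict contribution, i.e.\ when there is no row $r$ with $v_{r,i}^* = v_{r,j}^* \neq v_{r,k}^*$. Using monotonicity once more, whenever $v_{r,j}^* \neq v_{r,k}^*$ the direction of the switch is forced, and combined with $j > i$ this forces $v_{r,i}^* = v_{r,j}^*$. Hence the non-existence of such a row is equivalent to $v_{r,j}^* = v_{r,k}^*$ for all $r$, that is, $c_j^* = c_k^*$. The symmetric statement $d_H(c_i^*, c_k^*) \leq d_H(c_j^*, c_k^*)$ with equality iff $c_i^* = c_j^*$ follows by the same argument applied to the monotone sequence read in reverse, or equivalently by a symmetric row-by-row comparison at position $i$ versus $j$.

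The main (very mild) obstacle is simply keeping the prefix/suffix case distinction under control; packaging it as ``each row is a monotone $\pm 1$ sequence in the column index'' removes the need for case splits and makes both the inequality and the equality condition essentially one-liners. No additional machinery from the rest of the paper is needed beyond the definition of SSWNF presentations.
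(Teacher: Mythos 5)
Your proof is correct and follows essentially the same route as the paper's: both decompose the Hamming distances row by row and use the prefix/suffix (i.e., monotone) structure of each row of an SSWNF presentation to compare the indicators termwise, with the equality case coming from tightness in every row. One small remark: your ``read the row in reverse'' argument in fact delivers the second inequality in the form $d_H(c_j^*, c_k^*) \leq d_H(c_i^*, c_k^*)$ with equality iff $c_i^* = c_j^*$, which is the direction the subsequent recognition argument actually needs; the paper likewise dispatches this part as ``analogous.''
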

\begin{proof} We will only prove the first part, as the second follows analogously. First, observe that for any $u, v \in \Bset^n$ we can write $d_H(u, v) = \sum_{\ell = 1}^n \mathbb{I}(u_\ell \neq v_\ell)$. Hence, it suffices to show that for each $1 \leq \ell \leq n$ we have $\mathbb{I}(c_{i, \ell}^* \neq c_{j, \ell}^*) \leq \mathbb{I}(c_{i, \ell}^* \neq c_{k, \ell}^*)$ and then sum up those inequalities. This amounts to showing that $c_{i, \ell}^* \neq c_{j, \ell}^* \implies c_{i, \ell}^* \neq c_{k, \ell}^*$. Assume the contrary: $c_{k, \ell}^* = c_{i, \ell}^* \neq c_{j, \ell}^*$. This means that on row $\ell$ of $\prefprof^*$ columns $i < j < k$ read either $+1, -1, +1$ or $-1, +1, -1$, which either way means that ones do not form a prefix or a suffix on this row, contradicting that $\prefprof^*$ is an SSWNF presentation of $\prefprof$. To get the equality case, note that we have shown $d_H(c_i^*, c_j^*) \leq d_H(c_i^*, c_k^*)$ by summing inequalities, so equality occurs iff all summed inequalities are tight; i.e., for all $1 \leq \ell \leq n$ we have $\mathbb{I}(c_{i, \ell}^* \neq c_{j, \ell}^*) = \mathbb{I}(c_{i, \ell}^* \neq c_{k, \ell}^*)$, which amounts to $c_{i, \ell}^* \neq c_{j, \ell}^* \iff c_{i, \ell}^* \neq c_{k, \ell}^*$, and in turn to $c_{i, \ell}^* = c_{j, \ell}^* \iff c_{i, \ell}^* = c_{k, \ell}^*$, which holds iff $c_{j, \ell}^* = c_{k, \ell}^*$. As a result, equality occurs iff for all $1 \leq \ell \leq n$ we have $c_{j, \ell}^* = c_{k, \ell}^*$, i.e., $c_j^* = c_k^*$.
\end{proof}

Armed as such, we can show that the column $c_x$ maximizing $d_H(c_1, c_x)$ found by the algorithm satisfies $c_x \in \{c_1^*, c_t^*\}$. To see this, let $y$ be such that $c_1 = c_y^*$, then any column $c_x$ that maximizes $d_H(c_1, c_x)$ equals a column $c_z^*$ that maximizes $d_H(c_y^*, c_z^*)$. By \cref{claim:hamming-monot}, it follows that $c_z \in \{c_1^*, c_t^*\}$, so $c_x \in \{c_1^*, c_t^*\}$.

In the following, we will show that if $c_x = c_1^*$, then the output of the algorithm is $\prefprof' = \prefprof^*$, and if $c_x = c_t^*$, then $\prefprof' = \text{the reverse of }\prefprof^*$, completing the proof.

The equality parts of \cref{claim:hamming-monot} give that two columns in $\prefprof^*$ have the same Hamming distance from $c_1^*$ (or $c_t^*$), if and only if they are equal. Since $c_x \in \{c_1^*, c_t^*\}$, this means that two columns in $\prefprof$ have the same Hamming distance from $c_x$ if and only if they are equal.\footnote{Note that this hinges on our assumption that $\prefprof$ admits $\prefprof^*$ as an SSWNF presentation (and would be false in general).}

The algorithm constructs $\prefprof'$ by ordering the columns of $\prefprof$ in non-decreasing order of Hamming distance from $c_x$. By the previous, equal Hamming distances correspond to identical columns, so $\prefprof'$ as defined by us to satisfy $d_H(c_x, c_i') \leq d_H(c_x, c_{i + 1}')$ for $i \in [t - 1]$ is actually unique no matter the tie-breaking for equal distances.\footnote{Fact which again is only true because in this part of the proof we assumed that $\prefprof$ is single-switch-no-flips.}

If $c_x = c_1^*$, note that $(c_1', \ldots, c_t') = (c_1^*, \ldots, c_t^*)$ satisfies $d_H(c_x, c_i') \leq d_H(c_x, c_{i + 1}')$ for $i \in [t - 1]$. By the uniqueness of $(c_1', \ldots, c_t')$, this means that $(c_1^*, \ldots, c_t^*)$ is the output of our algorithm, i.e., $\prefprof' = \prefprof^*$.

If $c_x = c_t^*$, the reasoning is analogous, leading to $\prefprof' =$ the reverse of $\prefprof^*$.
\end{proof}

\textbf{Better recognition for single-crossing preferences.} Our $O(nt)$ algorithm for recognizing single-switch-no-flips profiles can be easily modified to recognize single-crossing preferences: Consider an input consisting of a set $A$ and a list $(\succ_1, \ldots, \succ_t)$ of linear orders over $A$. First, find an index $x$ maximizing $d_{KT}(\succ_1, \succ_x)$. Then, sort (using Radix Sort) the list according to the Kendall Tau distance from $\succ_x$ to get a list $(\succ_1', \ldots, \succ_t')$ where $d_{KT}(\succ_x, \succ_i') \leq d_{KT}(\succ_x, \succ_{i + 1}')$ for $i \in [t - 1]$ (i.e., ties can be broken arbitrarily). The resulting list is either the unique permutation of the input list witnessing the single-crossing property (up to reversal), or there is no such permutation. After identifying this candidate solution, one checks whether the single-crossing property is satisfied using the second stage of \cite[Algorithm 4]{survey_restricted}, namely \cite[Algorithm 2]{survey_restricted}: the solution is valid if and only if $d_{KT}(\succ_1', \succ_i') + d_{KT}(\succ_i', \succ_{i + 1}') = d_{KT}(\succ_1', \succ_{i + 1}')$ for all $2 \leq i < t$. If the $O(n \sqrt{\log n})$ algorithm of \cite{patrascu_inversions} is used for counting permutation inversions, the previous yields a simpler, and at the same time, more efficient algorithm for recognizing single-crossing preferences, running in time $O(nt \sqrt{\log n})$. The proof of correctness of this algorithm follows the same outline as the proof of \cref{th:sswnf-reco}.\footnote{This is no accident: one can reduce from recognizing single-crossing preferences to recognizing single-switch-no-flips profiles by introducing a voter for each pair of distinct elements in $A$. This incurs a quadratic computational cost but suffices to recover correctness.}
We also note that our algorithm is, in fact, a more efficient implementation of \cite[Algorithm 3]{survey_restricted}. The latter tries all $O(t)$ options for $\succ_1'$ and, for each one, sorts by Kendall Tau distance from $\succ_1'$ to get a candidate solution, which is then verified as in our case using \cite[Algorithm 2]{survey_restricted}. Our improvement was to notice that $\succ_1'$ can be determined efficiently without trying out all options for it, hence removing a linear factor from the time complexity. In contrast, instead of efficiently determining $\succ_1'$, \cite[Algorithm 4]{survey_restricted} takes a more intricate approach.

\begin{theorem} \label{th:fast-sc-rec} Single-crossing preferences can be recognized in time $O(nt \sqrt{\log n})$, including producing a witnessing permutation for yes-instances (which is the unique solution up to reversal).
\end{theorem}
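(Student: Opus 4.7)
The plan is to mirror the proof of \cref{th:sswnf-reco} almost verbatim, swapping the Hamming distance for the Kendall tau distance and swapping Counting Sort for Radix Sort. Concretely, I would first compute $d_{KT}(\succ_1, \succ_j)$ for every $j \in [t]$ using the $O(n\sqrt{\log n})$ inversion-counting algorithm of \cite{patrascu_inversions} (total time $O(nt\sqrt{\log n})$), pick an index $x$ maximizing this value, then compute $d_{KT}(\succ_x, \succ_j)$ for every $j$ in the same time bound. Since the resulting $t$ keys are non-negative integers bounded by $\binom{n}{2}$, Radix Sort orders the list as $(\succ_1', \ldots, \succ_t')$ with $d_{KT}(\succ_x, \succ_i') \le d_{KT}(\succ_x, \succ_{i+1}')$ in $O(n + t)$ time. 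Finally, I invoke \cite[Algorithm~2]{survey_restricted} to check the triangle equality $d_{KT}(\succ_1', \succ_i') + d_{KT}(\succ_i', \succ_{i+1}') = d_{KT}(\succ_1', \succ_{i+1}')$ for $2 \le i < t$; all these $O(t)$ distances fit in the same budget.

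The main obstacle, and the only step requiring a real argument, is the analogue of the auxiliary claim in \cref{th:sswnf-reco}: if $(\succ_1^*, \ldots, \succ_t^*)$ is any single-crossing permutation of the input list, then for every $1 \le i < j < k \le t$,
\[
d_{KT}(\succ_i^*, \succ_j^*) \;\le\; d_{KT}(\succ_i^*, \succ_k^*), \qquad d_{KT}(\succ_j^*, \succ_k^*) \;\le\; d_{KT}(\succ_i^*, \succ_k^*),
\]
each with equality iff the two closer orderings coincide. I would prove this by decomposing $d_{KT}(\succ_p^*, \succ_q^*) = \sum_{\{a,a'\}} \mathbb{I}(\succ_p^*, \succ_q^* \text{ disagree on } (a, a'))$ and arguing pairwise: by the single-crossing property, for each unordered pair $\{a, a'\}$ the set of indices ranking $a$ above $a'$ is a prefix or suffix of $[t]$, so it is impossible for $\succ_i^*$ to agree with $\succ_k^*$ on $\{a, a'\}$ while disagreeing with $\succ_j^*$. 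Summing these per-pair inequalities yields the monotonicity, and the equality case shows that tightness forces $\succ_j^* = \succ_k^*$ (resp.\ $\succ_i^* = \succ_j^*$).

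Given this claim, the rest is routine. Writing $\succ_1 = \succ_y^*$ for some $y$, the monotonicity forces any maximizer $\succ_x$ of $d_{KT}(\succ_1, \cdot)$ to satisfy $\succ_x \in \{\succ_1^*, \succ_t^*\}$. The equality case then shows that two inputs have the same distance to $\succ_x$ iff they are the same linear order, so the Radix-sorted list $(\succ_1', \ldots, \succ_t')$ is uniquely determined regardless of tie-breaking and must coincide with $(\succ_1^*, \ldots, \succ_t^*)$ or its reversal. This simultaneously establishes uniqueness up to reversal and the correctness of the candidate; the validity check of \cite[Algorithm~2]{survey_restricted} detects no-instances in the usual way. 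Combined with the time accounting above, this yields the desired $O(nt\sqrt{\log n})$ bound.
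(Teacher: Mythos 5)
Your proposal is correct and follows essentially the same route as the paper: the same candidate-construction (maximize $d_{KT}(\succ_1,\cdot)$, sort by Kendall tau distance from the maximizer via Radix Sort, then verify with \cite[Algorithm~2]{survey_restricted}), with correctness resting on the Kendall-tau analogue of \cref{claim:hamming-monot}. Your per-pair decomposition of $d_{KT}$ is precisely the pairwise argument the paper invokes implicitly (its footnote reduction introducing one voter per pair of elements of $A$), so the two proofs coincide in substance.
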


\subsection{Forbidden Subprofiles Characterization of Single-Switch Preferences}\label{app:forbidden}

In this section, we prove \cref{th:forbidden-ssw-main}, which establishes the forbidden subprofiles characterization of single-switch profiles. Afterward, we give further details about how our recognition algorithm for single-crossing preferences in \cref{th:fast-sc-rec} can be bootstrapped to also produce a forbidden subinstance without sacrificing runtime, similarly to the proof of \cref{th:fast-forbidden-subprofiles-ssw}.

To begin with proving \cref{th:forbidden-ssw-main}, a profile $\prefprof$ is single-switch if and only if $\prefprof'$ is single-switch-no-flips, where $\prefprof'$ is the profile obtained from $\prefprof$ by flipping columns so that the first row is all $-1$'s. As a result, it suffices to understand how short proofs of non-membership look for the class of single-switch-no-flips preferences. To this end, \cite{Terzopoulou_Karpov_Obraztsova_2021} considered the profiles:
\[
\prefprof_1 =\begin{bmatrix}
\tikzmarknode{p111}{\texttt{+1}} & \tikzmarknode{p112}{\texttt{+1}} & \tikzmarknode{p113}{\texttt{-1}} & \tikzmarknode{p114}{\texttt{-1}} \\
\tikzmarknode{p121}{\texttt{+1}} & \tikzmarknode{p122}{\texttt{-1}} & \tikzmarknode{p123}{\texttt{+1}} & \tikzmarknode{p124}{\texttt{-1}}
\end{bmatrix}\quad
\prefprof_2 = \begin{bmatrix}
\tikzmarknode{p211}{\texttt{+1}} & \tikzmarknode{p212}{\texttt{-1}} & \tikzmarknode{p213}{\texttt{-1}} \\
\tikzmarknode{p221}{\texttt{-1}} & \tikzmarknode{p222}{\texttt{+1}} & \tikzmarknode{p223}{\texttt{-1}} \\
\tikzmarknode{p231}{\texttt{-1}} & \tikzmarknode{p232}{\texttt{-1}} & \tikzmarknode{p233}{\texttt{+1}} 
\end{bmatrix}\quad
\]%
\begin{tikzpicture}[overlay,remember picture, shorten >=-3pt, shorten <= -3pt]
\drawcell{p111}{\cone}
\drawcell{p112}{\cone}
\drawcell{p113}{\czero}
\drawcell{p114}{\czero}
\drawcell{p121}{\cone}
\drawcell{p122}{\czero}
\drawcell{p123}{\cone}
\drawcell{p124}{\czero}
\drawcell{p211}{\cone}
\drawcell{p212}{\czero}
\drawcell{p213}{\czero}
\drawcell{p221}{\czero}
\drawcell{p222}{\cone}
\drawcell{p223}{\czero}
\drawcell{p231}{\czero}
\drawcell{p232}{\czero}
\drawcell{p233}{\cone}
\end{tikzpicture}%
and showed that $\prefprof'$ is single-switch-no-flips if and only if it does not contain as subprofiles $\prefprof_1$, $\prefprof_2$ and the profiles obtainable from them by flipping any subset of rows. In total, one can check that this leads to 5 non-equivalent profiles $\prefprof_1, \ldots, \prefprof_5$, where $\prefprof_1, \prefprof_2$ are the ones above and $\prefprof_{2 + i}$ for $1 \leq i \leq 3$ is $\prefprof_2$ with the first $i$ rows flipped. The original presentation lists the 5 profiles explicitly, but since both single-switch and single-switch-no-flips preferences are closed under flipping rows, we find our account cleaner.
For the interested reader, we note that the result of \cite{Terzopoulou_Karpov_Obraztsova_2021} can also be recovered from \cite{bredereck_sc_forbidden_minors}, where it is shown that non-membership to the class of single-crossing preferences is always witnessed by one of two small subinstances (this can be done using as a lens the reduction to single-crossing preferences recognition in \cref{app:reco-ssw}).

\begin{lemma} A profile $\prefprof$ is single-switch if and only if $\prefprof'$ does not contain as a subprofile any of $\prefprof_1, \ldots, \prefprof_5$. Here $\prefprof'$ denotes the profile obtained from $\prefprof$ by flipping columns so that the first row is all $-1$'s.
\end{lemma}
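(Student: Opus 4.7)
The plan is to reduce to the SSWNF forbidden-subprofiles characterization of~\cite{Terzopoulou_Karpov_Obraztsova_2021} cited just above, by establishing two short intermediate equivalences: (i) $\prefprof$ is SSW iff $\prefprof'$ is SSW, and (ii) $\prefprof'$ is SSW iff $\prefprof'$ is SSWNF. Applying the cited result to $\prefprof'$ then closes the argument immediately.

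Step (i) is the easier one. Since $\prefprof'$ is produced from $\prefprof$ by flipping a prescribed set $F_0$ of columns (those where row 1 of $\prefprof$ reads $+1$), and because the SSW definition already permits an arbitrary set of column flips, any SSW presentation of $\prefprof$ using flips $F$ and permutation $\pi$ corresponds bijectively to an SSW presentation of $\prefprof'$ using flips $F \mathbin{\triangle} F_0$ and the same permutation $\pi$. So SSW-ness depends only on the column-flip equivalence class, which $\prefprof$ and $\prefprof'$ share.

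For step (ii), the direction SSWNF $\Rightarrow$ SSW is immediate. For the converse, I would assume $\prefprof'$ admits some SSW presentation $\prefprof^{*}$ and consider its orbit $O_{\prefprof^{*}}$. By \cref{coro:orbit-minus-one} this orbit contains a unique presentation $\prefprof^{**}$ whose first row is all $-1$'s. Since $\prefprof^{**}$ is an SSW presentation of $\prefprof'$, it arises from $\prefprof'$ via some flips $F$ composed with a permutation $\pi$; but because both $\prefprof'$ and $\prefprof^{**}$ have first rows consisting entirely of $-1$'s, any column in $F$ would push a $+1$ into row 1 of $\prefprof^{**}$, which is impossible. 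Hence $F = \emptyset$, so $\prefprof^{**}$ is in fact an SSWNF presentation of $\prefprof'$.

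The only non-routine step is the orbit argument in (ii), where care is needed to track which flips and permutations act on which profile and to identify $\prefprof^{**}$ as the canonical orbit representative compatible with $\prefprof'$'s first-row pattern. Once this is set up, the lemma follows by feeding $\prefprof'$ into the \cite{Terzopoulou_Karpov_Obraztsova_2021} characterization and combining with the two equivalences established above.
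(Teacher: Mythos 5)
Your proposal is correct and follows essentially the same route as the paper: reduce the statement to the equivalence ``$\prefprof$ is SSW iff $\prefprof'$ is SSWNF'' (which you justify via the orbit machinery and \cref{coro:orbit-minus-one}, exactly the argument the paper relies on from its recognition section) and then invoke the forbidden-subprofile characterization of \cite{Terzopoulou_Karpov_Obraztsova_2021} applied to $\prefprof'$. Your step (ii), observing that an orbit's all-$-1$-first-row representative must be realized with an empty flip set, is precisely the missing detail the paper leaves implicit, so nothing further is needed.
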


This already gives short proofs for non-membership, but for our purposes, we would like a characterization 
in terms of the subprofiles of $\prefprof$, not of $\prefprof'$. This can be easily achieved given the previous: consider a non-single-switch profile $\prefprof$, then, by the previous, $\prefprof'$ contains one of $\prefprof_1, \ldots, \prefprof_5$, say $\prefprof_i$. Note that $\prefprof_i$ has no row that is all $-1$'s, so $\prefprof_i$ is, in fact, a subprofile of $\prefprof'$ without its first row (which is all $-1$'s). As a result, if we define $\prefprof_i^a$ to be $\prefprof_i$ with a row of $-1$'s appended to the top, then $\prefprof'$ also contains $\prefprof_i^a$ as a subprofile. Given how $\prefprof'$ was obtained from $\prefprof$ by flipping a subset of columns, $\prefprof$ contains a version of $\prefprof_i^a$ with accordingly-flipped columns. Namely, if we let $\Pi_i^a$ be the set of profiles that can be obtained from $\prefprof_i^a$ by flipping any subset of columns, then $\prefprof$ contains some $\mathcal{X} \in \Pi_i^a$ as a subprofile. Moreover, $\mathcal{X}$ is not single-switch, as $\mathcal{X}' = \prefprof_i^a$ and $\prefprof_i^a$ is not single-switch-no-flips (here $\mathcal{X}'$ denotes $\mathcal{X}$ with columns flipped to make its first row all $-1$'s). As a result, we get the following:

\begin{lemma} A profile $\prefprof$ is single-switch if and only if it contains no profile $\mathcal{X} \in \cup_{i = 1}^5 \Pi_i^a$ as a subprofile.
\end{lemma}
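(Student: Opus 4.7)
The plan is to establish both directions by leveraging the preceding lemma, which characterizes single-switch-ness of $\prefprof$ via single-switch-no-flips-ness of $\prefprof'$, together with the fact that both notions are closed under taking subprofiles (a single-switch or single-switch-no-flips presentation restricts to one on any subset of rows and columns).

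For the easy direction ($\Leftarrow$ by contrapositive), I would argue that each $\mathcal{X} \in \Pi_i^a$ is itself not single-switch, so $\prefprof$ containing it forces $\prefprof$ to be non-single-switch. The key observation is that $\mathcal{X}$ is obtained from $\prefprof_i^a$ by column flips, and the first row of $\prefprof_i^a$ is already all $-1$, so $\mathcal{X}' = \prefprof_i^a$. Applying the preceding lemma to $\mathcal{X}$ reduces the claim to showing that $\prefprof_i^a$ is not single-switch-no-flips. Removing the prepended $-1$-row exhibits $\prefprof_i$ as a subprofile of $\prefprof_i^a$, so by closure of single-switch-no-flips under subprofiles, it suffices to show that $\prefprof_i$ itself is not single-switch-no-flips — immediate from the characterization of \cite{Terzopoulou_Karpov_Obraztsova_2021}, which $\prefprof_i$ violates by containing itself.

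For the other direction ($\Rightarrow$ by contrapositive), assume $\prefprof$ is not single-switch. The preceding lemma gives some $i \in [5]$ such that $\prefprof'$ contains $\prefprof_i$ as a subprofile, say on row set $R$ and column set $C$. The crucial small observation is that none of $\prefprof_1, \ldots, \prefprof_5$ has an all-$-1$ row, so the all-$-1$ first row of $\prefprof'$ cannot be used by this occurrence — i.e., $1 \notin R$. Adjoining row $1$, the subprofile of $\prefprof'$ on rows $\{1\} \cup R$ and columns $C$ is precisely $\prefprof_i^a$ (after placing the $-1$-row on top). Undoing the column flips that turned $\prefprof$ into $\prefprof'$ then produces a subprofile of $\prefprof$ (on the same rows and columns) that equals $\prefprof_i^a$ with some subset of its columns negated, hence a member of $\Pi_i^a \subseteq \cup_{i=1}^5 \Pi_i^a$, as required.

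The only mildly delicate step is the bookkeeping in the backward direction: keeping track that the row/column selection translates verbatim between $\prefprof$ and $\prefprof'$, and that the all-$-1$ row of $\prefprof'$ lies outside $R$ so that it can be freely adjoined. Everything else is a routine composition of the preceding lemma with the two closure properties, so once this correspondence is set up, the argument collapses into a short paragraph per direction.
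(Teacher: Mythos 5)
Your proof is correct and follows essentially the same route as the paper: both directions use the preceding characterization via $\prefprof'$, the observation that none of $\prefprof_1,\ldots,\prefprof_5$ has an all-$(-1)$ row (so the first row of $\prefprof'$ can be adjoined to the occurrence), undoing the column flips to land in $\Pi_i^a$, and for the converse the identity $\mathcal{X}' = \prefprof_i^a$ together with closure of the single-switch(-no-flips) properties under subprofiles. You merely make explicit a couple of bookkeeping points (e.g., $1 \notin R$) that the paper leaves implicit.
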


This result can be compressed into a more elegant form by leveraging the closure of single-switch preferences under flipping rows and columns and the way $\prefprof_3, \ldots, \prefprof_5$ were obtained from $\prefprof_2$, as follows, which is precisely \cref{th:forbidden-ssw-main}, restated below for convenience:

\forbiddensswmain*

\textbf{Finding forbidden subinstances of \emph{single-crossing} preferences.} Our idea to use a fast black-box recognition algorithm to bootstrap a fast algorithm for finding a forbidden subprofile extends beyond our usage for single-switch preferences in 
\cref{th:fast-forbidden-subprofiles-ssw}. The same idea can be used for the class of single-crossing preferences, where non-membership to the class is witnessed by one of two small \emph{forbidden subinstances} with $(t, n) \in \{(4, 4), (3, 6)\}$, as shown in \cite{bredereck_sc_forbidden_minors}. Given our $O(nt\sqrt{\log{n}})$ recognition algorithm in \cref{th:fast-sc-rec}, we can apply very similar reasoning to the proof of 
\cref{th:fast-forbidden-subprofiles-ssw} (except now we need to split into $6 + 1 = 7$ groups) to get the following result, which, to the best of our knowledge, is new, even if we were to replace our improved time bound with that of the previously-known fastest algorithm, namely \cite[Algorithm 4]{survey_restricted}:

\begin{theorem} Given a no-instance of the problem of recognizing single-crossing preferences, a forbidden subinstance can be determined in time $O(nt\sqrt{\log{n}})$.
\end{theorem}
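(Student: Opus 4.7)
The plan is to adapt the black-box bootstrapping approach from \cref{th:fast-forbidden-subprofiles-ssw} to single-crossing preferences. Given a no-instance consisting of a set $A$ of size $n$ and a list $(\succ_1, \ldots, \succ_t)$ of linear orders over $A$, the idea is to iteratively remove alternatives from $A$ (restricting the orders accordingly) and orders from the list, until the remaining instance is inclusion-minimal among no-instances. By the forbidden subinstance characterization of \cite{bredereck_sc_forbidden_minors}, any such minimal no-instance is one of the two advertised forbidden patterns, with $(t,n) \in \{(4,4),(3,6)\}$.

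First I would shrink $A$. Since every forbidden subinstance uses at most $6$ alternatives, I partition $A$ into $7$ roughly equal-size groups. By pigeonhole, any forbidden subinstance contained in the current instance avoids at least one of these groups, so at least one of the $7$ candidate restrictions (delete one group from $A$ and restrict every $\succ_i$ to the remaining alternatives) is still a no-instance. Finding such a group takes at most $7$ calls to the $O(nt\sqrt{\log n})$ recognition algorithm of \cref{th:fast-sc-rec} on instances of size $O(nt)$; afterwards, the instance has at most $\lceil 6n/7 \rceil$ alternatives. Iterating until $n < 7$, and then running a symmetric reduction on the order list using $5$ groups (since every forbidden subinstance uses at most $4$ orders) until $t < 5$, yields a no-instance of constant size, which must be a forbidden subinstance.

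The running time follows the same geometric-series argument as in \cref{th:fast-forbidden-subprofiles-ssw}. Round $k$ of the alternative-reduction phase costs $O((6/7)^k n \cdot t \cdot \sqrt{\log((6/7)^k n)}) \leq O((6/7)^k \cdot nt\sqrt{\log n})$, and summing over $k \geq 0$ gives $O(nt\sqrt{\log n})$; the order-reduction phase is analogous (with ratio $4/5$ instead of $6/7$). The sole new wrinkle compared to single-switch is the extra $\sqrt{\log n}$ factor inherited from the recognition algorithm in \cref{th:fast-sc-rec}, but since $\sqrt{\log n}$ does not grow as $n$ shrinks, it just multiplies the geometric sum by a constant, and the target bound persists.

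I expect this to be essentially the only subtlety worth checking explicitly: the pigeonhole argument, the use of $7$ groups for rows and $5$ for columns, and the correctness of iterative refinement are direct inheritances from the single-switch proof. The main obstacle, if any, is bookkeeping about what ``restricting an instance'' means in the single-crossing setting (removing alternatives shrinks each $\succ_i$, while removing orders just deletes list entries) and confirming that the recognition algorithm indeed runs in $O(n't\sqrt{\log n'})$ on a restricted instance of current size $n' \times t$, so the geometric sum telescopes as claimed.
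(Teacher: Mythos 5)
Your proposal is correct and matches the paper's approach: it bootstraps the black-box $O(nt\sqrt{\log n})$ recognition algorithm of \cref{th:fast-sc-rec} via the same group-removal scheme as in \cref{th:fast-forbidden-subprofiles-ssw}, using $7$ groups of alternatives and $5$ groups of orders, with the geometric series absorbing the (non-increasing) $\sqrt{\log n}$ factor exactly as you note. One cosmetic point: the final constant-size no-instance need not literally \emph{be} a forbidden subinstance, but extracting one from it (e.g., by one-at-a-time removals) costs only constant time, so the bound is unaffected.
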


\section{Anscombe's Paradox}

\subsection{External Weights}\label{app:external_iwm}

This section of the appendix centers on \cref{thm:external_iwm}. In \cref{app:relevant_topics}, we formalize the definition of relevant topics and discuss how to determine the set of relevant topics for a given instance $\mathcal{I} = (\mathcal{P}, w)$ efficiently. \cref{app:existence_external_iwm} then proves \cref{thm:external_iwm}. Throughout this section of the appendix, we use the notation $\w(S) = \sum_{j \in S} \w_j$ for any $S \subseteq [t]$.

\subsubsection{Relevant Topics}\label{app:relevant_topics}

 A subset of the topics, $T \subseteq [t]$, is a \textit{minimal topic group} under weight vector $\w$ if $\w(T) > \frac{1}{2}$ and for all $j \in T$ we have that $\w(T \setminus \{j\}) \leq \frac{1}{2}$. Then we call a topic $j \in [t]$ a \textit{relevant topic} under $\w$ if it is in some minimal topic group under $\w$. We denote the set of all relevant topics under $\w$ as $R_{\w}$.

In proving \cref{thm:te1} (see \cref{app:existence_external_iwm}), we define $B_m = \{p \in \Bset^t \colon \langle p, p_{IWM}\rangle_w > 0\}$ and select a proposal from it uniformly at random. Note that we can equivalently write $B_m = \{p \in \Bset^t \colon d_H(p, p_{IWM}, w) < 1/2\}$. We then claim that the only topics $j \in [t]$ such that $\Pr(p_j = (p_{IWM})_j) - \Pr(p_j \neq (p_{IWM})_j) > 0$ are relevant topics. We prove this claim below.\\

\begin{claim}\label{claim:relevant_probs}
For $p$ selected uniformly at random from $B_m$, $\Pr(p_j = (p_{IWM})_j) > \frac{1}{2}$ if and only if $j \in R_{\w}$.
\end{claim}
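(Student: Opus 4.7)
The plan is to reduce the probabilistic statement to a counting identity about subsets of $[t]$ exceeding weight $\tfrac{1}{2}$. Without loss of generality assume $p_{IWM} = \mathbf{1}$ (flip columns where $p_{IWM}$ is $-1$; this transformation fixes both $B_m$ and the notion of relevance). Identify each proposal $p \in B_m$ with its agreement set $S_p = \{j : p_j = 1\} \subseteq [t]$. The condition $\langle p, p_{IWM}\rangle_w > 0$ translates to $w(S_p) > \tfrac{1}{2}$, so sampling $p$ uniformly from $B_m$ is the same as sampling $S$ uniformly from $\mathcal{F} := \{S \subseteq [t] : w(S) > \tfrac{1}{2}\}$. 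Under this correspondence, $\Pr(p_j = (p_{IWM})_j) > \tfrac{1}{2}$ becomes $|\mathcal{F}_j^+| > |\mathcal{F}_j^-|$, where $\mathcal{F}_j^+ = \{S \in \mathcal{F} : j \in S\}$ and $\mathcal{F}_j^- = \{S \in \mathcal{F} : j \notin S\}$.

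Next I set up the key injection $\phi : \mathcal{F}_j^- \to \mathcal{F}_j^+$ defined by $\phi(S) = S \cup \{j\}$. This is well-defined (adding weight only increases $w$) and injective. Its image consists of those $S \in \mathcal{F}_j^+$ for which $S \setminus \{j\}$ still lies in $\mathcal{F}$, i.e., $w(S \setminus \{j\}) > \tfrac{1}{2}$. Therefore
\[
|\mathcal{F}_j^+| - |\mathcal{F}_j^-| = \bigl|\{S \subseteq [t] : j \in S,\ w(S) > \tfrac{1}{2},\ w(S \setminus \{j\}) \leq \tfrac{1}{2}\}\bigr|.
\]
Hence $\Pr(p_j = (p_{IWM})_j) > \tfrac{1}{2}$ if and only if there exists at least one set $S$ containing $j$ in which $j$ is \emph{pivotal} for reaching the majority weight.

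It remains to show that such a pivotal set exists if and only if $j \in R_w$. The easy direction is immediate: any minimal topic group $T \ni j$ satisfies $w(T) > \tfrac{1}{2}$ and $w(T \setminus \{j\}) \leq \tfrac{1}{2}$ by definition of minimality, providing the required pivotal set. For the converse, I start with a pivotal set $S \ni j$ and greedily shrink it to a minimal topic group: while there is some $k \in S$ with $k \neq j$ such that $w(S \setminus \{k\}) > \tfrac{1}{2}$, remove $k$. By construction the invariants $w(S) > \tfrac{1}{2}$, $j \in S$, and $w(S \setminus \{j\}) \leq \tfrac{1}{2}$ are preserved (the last because $w(S)$ only decreases, so $w(S \setminus \{j\})$ only decreases as well, ensuring $j$ is never removed by the rule). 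The process terminates with a set $T$ in which no element can be removed while keeping weight above $\tfrac{1}{2}$, so $T$ is a minimal topic group containing $j$, witnessing $j \in R_w$.

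The only step needing care is the shrinking argument; the main pitfall is ensuring that $j$ itself is never eligible for removal so that the final minimal group still contains it. This is secured by the pivotality hypothesis $w(S \setminus \{j\}) \leq \tfrac{1}{2}$, which, being monotone under further deletions, is preserved throughout. All other steps are routine bijective counting.
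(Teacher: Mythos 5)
Your proof is correct and follows essentially the same approach as the paper: both hinge on the coordinate-$j$ flip pairing of proposals in $B_m$ (your injection $\phi$ is the paper's bijection $q$ in set language), with the surplus identified as the proposals/sets where $j$ is pivotal, and the equivalence between pivotality and membership in $R_w$ obtained by exhibiting a minimal topic group directly (forward direction) and by a minimality/shrinking argument (converse, where the paper picks a witness with fewest agreeing indices while you greedily delete elements). Your exact identity $|\mathcal{F}_j^+| - |\mathcal{F}_j^-| = |\{S \ni j : w(S) > \tfrac{1}{2},\ w(S\setminus\{j\}) \leq \tfrac{1}{2}\}|$ is a slightly cleaner packaging of the paper's case analysis, but the underlying argument is the same.
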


\begin{proof}
    Fix some $j \in [t]$. We denote the number of proposals with $+(p_{IWM})_j$ for $j$ and $-(p_{IWM})_j$ for $j$ in $B_m$ by $N_+, N_-$ respectively.\\
    
     $(\Leftarrow)$ Assume $j \in R_w$. Consider the bijection $q: \Bset^t \rightarrow \Bset^t$ that flips the $j$th entry of the proposal. If $d_H(p, p_{IWM}, w) < 1/2 - w_j$ and $p_j = (p_{IWM})_j$ then $p \in B_m$ and so is $q(p)$, so these cancel each other out when comparing $N_+ - N_-$. Note too that if $p_j = -(p_{IWM})_j$ and $p \in B_m$ then certainly $q(p)$ is as well. So we only have to consider the case where $p = (p_{IWM})_j, p \in B_m$ but $d_H(p, p_{IWM}, w) \geq 1/2 - w_j$. We know such a case must exist because $j \in R_w$. So, this means $j$ is in some minimal topic group, $T$. Let $p$ be the proposal where all topics in $T$ are set to their value in $p_{IWM}$, and all remaining topics are set to their value in $-p_{IWM}$. By definition of minimal topic group, we have that $w(T) \leq 1/2 + w_j$, so $d_H(p, p_{IWM}, w) = 1 - w(T) \geq 1/2 - w_j$. Then $d_H(q(p), p_{IWM}, w) \geq 1/2$, so $q(p) \not \in B_m$. So we have that $N_+ > N_-$. Hence $\Pr(p_j = (p_{IWM})_j) > 1/2$.\\

    $(\Rightarrow)$ Assume $\Pr(p_j = ((p_{IWM})_j) > 1/2$. We know that all $p \in B_m$ with $p_j = -(p_{IWM})_j$ have $q(p) \in B_m$, so this means there must exist some $p \not \in B_m$ with $p_j = -(p_{IWM})_j$ such that $q(p) \in B_m$. Find the $p$ with the fewest indices agreeing with $p_{IWM}$ such that this holds. Then the set of topics that $q(p)$ agrees with $p_{IWM}$ on forms a minimal topic group for $\w$. To see this, let $T$ be the set of topics that $q(p)$ agrees with $p_{IWM}$ on. We know that $w(T) > 1/2$ and $w(T \setminus \{j\}) \leq 1/2$ because $p \not \in B_m, q(p) \in B_m$. Take any other $k \in T, k \neq j$. Assume for sake of contradiction that $W(T \setminus {k}) > 1/2$. Then we could have taken $p' \not \in B_m$ to be equal to $p$ at all indices except flipped for $k$. Then $p' \not \in B_m$ and $q(p') \in B_m$, but $p'$ has one fewer index agreeing with $p_{IWM}$ than $p$. This contradicts the minimality in our selection of $p$. Hence $T$ is a valid minimal topic group, and thus $j \in R_{\w}$.
\end{proof}

As \cref{thm:external_iwm} shows, relevant topics are useful for determining whether a proposal with majority support and with weighted agreement of $> 1/2$ with the issue-wise majority exists. How, then can we discern whether specific topics are relevant? This is essentially equivalent to determining whether or not a voter is a dummy voter in a weighted majority game. While this can be shown to be NP-hard with respect to general input $w$~\cite{matsui2000survey}, we have the constraint that all elements of $w$ are bounded in $[0, 1]$. We say that $w$ has \emph{polynomial precision} if all of its elements can be expressed as rational numbers with a common denominator that is polynomial in $n$ and $t$. If additionally we assume that $w$ has polynomial precision then we can give a polynomial time algorithm for determining the set of relevant topics. In order to do this we show the following claim:\\

\begin{claim}[Relevance Monotonicity]\label{claim:rel_mono}
    Being a relevant topic is a monotonic property with respect to topic weight.
\end{claim}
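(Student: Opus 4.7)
The plan is to interpret the claim in the sense needed by the ensuing discussion: if $j \in R_w$ and $w_k \geq w_j$ for some $k \neq j$, then $k \in R_w$. (Together with the fact that weights can be sorted, this is precisely what justifies a polynomial-time threshold-based procedure for identifying $R_w$.)

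First, I would replace the definition of relevance with a cleaner equivalent characterization: $j \in R_w$ if and only if there exists $S \subseteq [t] \setminus \{j\}$ with $w(S) \leq 1/2 < w(S) + w_j$, i.e., $j$ can ``swing'' some non-winning coalition into a winning one. The forward direction is immediate: given any minimal topic group $T \ni j$, set $S = T \setminus \{j\}$; then $w(S) \leq 1/2$ by the minimality condition applied to $j$, while $w(S) + w_j = w(T) > 1/2$. For the reverse direction, starting from such an $S$, I would form $T_0 = S \cup \{j\}$ and iteratively remove elements from $T_0 \setminus \{j\}$ as long as the remaining total weight stays strictly above $1/2$. The resulting set $T \subseteq S \cup \{j\}$ is a minimal topic group containing $j$: removing $j$ leaves weight at most $w(S) \leq 1/2$, and removing any other surviving element drops the weight to $\leq 1/2$ by the termination condition.

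With this characterization in hand, I would finish by a single exchange argument. Suppose $j \in R_w$ and $w_k \geq w_j$ with $k \neq j$, and fix a witness $S$ for $j$'s relevance. If $k \notin S$, then $S$ itself witnesses $k$: $w(S) \leq 1/2$ and $w(S) + w_k \geq w(S) + w_j > 1/2$. If instead $k \in S$, define $S' := (S \setminus \{k\}) \cup \{j\}$; then $k \notin S'$, while $w(S') = w(S) - w_k + w_j \leq w(S) \leq 1/2$ and $w(S') + w_k = w(S) + w_j > 1/2$, so $S'$ witnesses $k \in R_w$.

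I do not anticipate any substantive obstacle: once the swing-coalition characterization is written down, the monotonicity reduces to swapping $j$ for $k$ inside a candidate coalition. The only point requiring some care is the exact form of the strict and weak inequalities at $1/2$, which must line up with the original definition of a minimal topic group in the two directions of the characterization.
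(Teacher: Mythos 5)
Your proof is correct and follows essentially the same route as the paper's: the case split on whether the heavier topic lies in the witness set and the swap $S' := (S \setminus \{k\}) \cup \{j\}$ are exactly the paper's exchange argument. The only difference is that you explicitly prove the reverse direction of the swing-set characterization (trimming $S \cup \{j\}$ down to a minimal topic group), a step the paper uses implicitly, which makes your write-up slightly more self-contained without changing the method.
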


\begin{proof}
    Fix topics $i,j$ such that $w_j \geq w_i$. We want to show that if $i$ is a relevant topic then $j$ is also a relevant topic. Assume $i$ is relevant, then there exists some $S \subseteq [t] \setminus \{i\}$ such that $w(S) \leq 1/2$ but $w(S \cup \{i\}) > 1/2$. We have two cases:\begin{enumerate}
        \item $\mathbf{j \not \in S}.$ In this case we can reuse $S$ as proof of $j$'s relevance. We have that $w(S) \leq 1/2$ and $w(S \cup \{j\}) = w(S) + w(j) \geq w(S) + w(i) = w(S \cup \{i\}) > 1/2$.
        \item $\mathbf{j \in S}.$ In this case we define $S' = (S \setminus \{j\}) \cup \{i\}$ (we just swap in $i$ for $j$). Then we have that $w(S') = w(S) - w(j) + w(i) \leq w(S) \leq 1/2$. We also can see that $w(S' \cup \{j\}) = w(S \cup \{i\}) > 1/2$.
    \end{enumerate}
    Therefore, we have shown that if $i$ is relevant, then $j$ is also relevant. This is equivalent to saying that if $j$ is irrelevant, $i$ is irrelevant as well.
\end{proof}

\cref{claim:rel_mono} implies that there is a ``lowest weight relevant topic'' such that all topics with weight less than it are irrelevant, and all topics with greater weight must be relevant. In order to find the lowest weight relevant topic we can run binary search over the topics sorted by weight. For each topic choice $j$, we can run knapsack on the remaining topics to look for a set such that $\frac{1}{2} - w_j < w(S) \leq 1/2$ as proof of its relevance. Assuming that all weights are integral, which we can achieve by temporarily scaling all of the weights and the respective bounds up, then there is an exact algorithm for knapsack that runs in polynomial time with respect to $|w| = t$ and the max weight item. As we only have to scale up the weights a polynomial factor with respect to $nt$ (due to our assumption on their precision), the max weight item is also polynomial in $nt$. Therefore, the whole algorithm together is polynomial in $nt$. As any preference profile input has size at least $n \times t$ (just considering number of entries), this is a polynomial size operation with respect to the size of our overall problem input.

\subsubsection{Existence of Representative Non-Losing Proposals in the External Weights Setting}\label{app:existence_external_iwm}

In this section, we prove \cref{thm:external_iwm}, restated here:
\externaliwm*

At a high level, our proof structure is the following (the same structure as used in~\cite{constantinescu2023computing}): For every voter, we define two bijective relations between proposals. Then, we piece these together to construct a third relation that ``swaps'' a proposal's weighted similarity to the voter with its weighted similarity to $p_{IWM}$. We then define two quantities that we take the expectation of. One can easily be shown to be non-negative, and we use the defined relations for each voter to show the equality of these expectations. The non-negativity of the second expectation then implies the existence of a majority-supported proposal, $p \in \Bset^t$, with $\hamdist(p_{IWM}, p, w) < 1/2$.

\subsubsection{Structure-Preserving Maps}
We define three different proposal transformations for each given voter. Fix some voter $v$ and let $B^* \subseteq \mathbb{B}^t$ be the set of proposals $p$ such that $\langle p_{IWM}, p \rangle_w \neq 0$ and $\langle v, p \rangle_w \neq 0$. We partition proposals in $B^*$ into four categories: $T_{i, j}$ where $i, j \in \{-1, +1\}$ and $i = sgn(\langle p, p_{IWM} \rangle_w)$ and $j = sgn(\langle v, p \rangle_w)$. We construct the following ``mask'' for use in our transformations: $v \odot p_{IWM}$ where $\odot$ is the elementwise (Hadamard) product. This mask has $+1$ for topics on which $v$ and $p_{IWM}$ agree, and $-1$ for topics on which they disagree. Then we define two transformations $f_v^+, f_v^- \colon B^* \rightarrow B^*$ as follows: \[
    f_v^+(p) = p \odot (v \odot p_{IWM}),\ \  f_v^-(p) = p \odot -(v \odot p_{IWM})
\] 
$f_v^+$ flips a proposal on indices where $v$ and $p_{IWM}$ disagree, while $f_v^-$ flips a proposal on indices where $v$ and $p_{IWM}$ agree. Note that the Hadamard product is commutative and for any vector $m \in \mathbb{B}^t, m \odot m = \mathbf{1}$. Additionally, for any vectors $a, b, c \in \mathbb{B}^t$ we have that $\langle a, b \odot c\rangle_w  = \langle a \odot b, c \rangle_w$. From the first two properties, we can immediately see that $f_v^+, f_v^-$ are both self-inverse and hence bijective.

\begin{lemma}\label{lem:fv0}
    $f_v^+$ maps proposals of type $T_{i,j}$ to proposals of type $T_{j,i}$ for $i,j \in \{0, 1\}$. Moreover, for any $p \in B^*$ we have that $\langle v, f_v^+(p) \rangle_w = \langle p, p_{IWM}\rangle_w$.
\end{lemma}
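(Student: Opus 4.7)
The plan is to prove both claims by directly computing $\langle v, f_v^+(p)\rangle_w$ and $\langle p_{IWM}, f_v^+(p)\rangle_w$ via the two algebraic facts stated just before the lemma: the Hadamard shift identity $\langle a, b\odot c\rangle_w = \langle a\odot b, c\rangle_w$, and the self-inverse property $m\odot m = \mathbf{1}$ for any $m\in\Bset^t$. Everything else is book-keeping.

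First I would verify the second claim. Starting from the definition,
\[
\langle v, f_v^+(p)\rangle_w = \langle v,\; p\odot(v\odot p_{IWM})\rangle_w.
\]
Applying the shift identity once pulls $v$ inside on the left, giving $\langle v\odot p,\; v\odot p_{IWM}\rangle_w$; applying it again (or just commuting and collecting the two copies of $v$) yields $\langle p\odot(v\odot v),\; p_{IWM}\rangle_w = \langle p, p_{IWM}\rangle_w$, which proves the second assertion of the lemma.

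Next I would repeat the identical manipulation with $p_{IWM}$ in the first slot:
\[
\langle p_{IWM}, f_v^+(p)\rangle_w = \langle p_{IWM},\; p\odot(v\odot p_{IWM})\rangle_w = \langle p\odot(p_{IWM}\odot p_{IWM}),\; v\rangle_w = \langle p, v\rangle_w.
\]
Thus if $p\in T_{i,j}$, meaning $\sgn{\langle p, p_{IWM}\rangle_w}=i$ and $\sgn{\langle v, p\rangle_w}=j$, then the two computations above give $\sgn{\langle p_{IWM}, f_v^+(p)\rangle_w}=j$ and $\sgn{\langle v, f_v^+(p)\rangle_w}=i$, so $f_v^+(p)\in T_{j,i}$. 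In particular both inner products are nonzero, so $f_v^+(p)\in B^*$, showing that $f_v^+$ indeed restricts to the claimed map between type classes.

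I do not foresee any real obstacle; the only thing to be careful about is the order of the Hadamard products and that $\odot$ is commutative and associative, so the two shifts can be applied without ambiguity. The statement as written contains a minor typo ($i,j\in\{0,1\}$ in the lemma body versus $i,j\in\{-1,+1\}$ in the definition of $T_{i,j}$); I would silently read it as $\{-1,+1\}$ in the proof.
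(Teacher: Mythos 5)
Your proof is correct and follows essentially the same route as the paper: the key computation $\langle v, f_v^+(p)\rangle_w = \langle p, p_{IWM}\rangle_w$ via the Hadamard shift identity is identical, and you then obtain $\langle p_{IWM}, f_v^+(p)\rangle_w = \langle v, p\rangle_w$ by a direct second computation, whereas the paper derives it by applying the first identity to $f_v^+(p)$ together with self-inverseness — a negligible difference (your version even sidesteps having to assert $f_v^+(p)\in B^*$ before concluding). You are also right that the $i,j\in\{0,1\}$ in the lemma statement is a typo for $\{-1,+1\}$.
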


\begin{proof}
    Fix some $p \in B^*$. Then:
    \begin{align*}
        \langle v, f_v^+(p) \rangle_w &= \langle v, p \odot v \odot p_{IWM} \rangle_w = \langle v \odot p \odot v, p_{IWM} \rangle\\
        &= \langle p, p_{IWM} \rangle_w
    \end{align*}

    As $f_v^+(p) \in B^*$, using the shown equality with one more application of $f_v^+$ and the fact that $f_v^+$ is self-inverse gives us that $\langle f_v^+(p), p_{IWM} \rangle_w = \langle v, f_v^+(f_v^+(p)) \rangle_w = \langle v, p\rangle_w$. Hence, applying $f_v^+$ has the effect of swapping a proposal's weighted agreement with $v$ with its weighted agreement with $p_{IWM}$. Therefore, $f_v^+$ maps $T_{i,j}$ to $T_{j,i}$.
\end{proof}

\begin{lemma}\label{lem:fv1}
    $f_v^-$ maps proposals of type $T_{i,j}$ to proposals of type $T_{(-j),(-i)}$ for $i,j \in \{-1, +1\}$. For any $p \in B^*$ we have that $\langle v, f_v^-(p)\rangle_w = -\langle p_{IWM}, p\rangle_w$.
\end{lemma}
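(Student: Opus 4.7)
The plan is to mirror the proof of \cref{lem:fv0} almost verbatim, exploiting the definition $f_v^-(p) = p \odot -(v \odot p_{IWM})$, the swap identity $\langle a, b \odot c\rangle_w = \langle a \odot b, c\rangle_w$, and the fact that $v \odot v = p_{IWM} \odot p_{IWM} = \mathbf{1}$. The only extra ingredient is tracking the overall minus sign pulled out of the second argument.

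First I would compute directly:
\begin{align*}
\langle v, f_v^-(p)\rangle_w
&= \langle v, p \odot -(v \odot p_{IWM})\rangle_w \\
&= -\langle v \odot p \odot v, p_{IWM}\rangle_w = -\langle p, p_{IWM}\rangle_w,
\end{align*}
which gives the stated equation (after commuting the inner product arguments, which is valid since $\langle \cdot,\cdot\rangle_w$ is symmetric). This immediately yields $\mathrm{sgn}(\langle v, f_v^-(p)\rangle_w) = -\mathrm{sgn}(\langle p_{IWM}, p\rangle_w) = -i$, which takes care of one of the two sign conditions defining the target type.

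For the other sign, I would note that $f_v^-$ is self-inverse (since $-(v \odot p_{IWM}) \odot -(v \odot p_{IWM}) = \mathbf{1}$) and that $f_v^-(p) \in B^*$, so applying the just-derived identity once more to $f_v^-(p)$ in place of $p$ yields $\langle v, p\rangle_w = -\langle f_v^-(p), p_{IWM}\rangle_w$, i.e.\ $\langle p_{IWM}, f_v^-(p)\rangle_w = -\langle v, p\rangle_w$. Hence $\mathrm{sgn}(\langle p_{IWM}, f_v^-(p)\rangle_w) = -j$. Combining the two sign computations, if $p \in T_{i,j}$ then $f_v^-(p) \in T_{-j,-i}$, as claimed.

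There is no real obstacle here; the only mild subtlety is keeping track of the fact that \emph{the first slot} of the inner product corresponds to the second index of $T_{\cdot,\cdot}$ and vice versa, so that the swap plus the two minus signs land in the correct order $(-j,-i)$ rather than $(-i,-j)$. I would emphasize this ordering explicitly to avoid a bookkeeping slip, and would rely throughout on the already-used fact that $\langle \cdot,\cdot\rangle_w$ is bilinear, symmetric, and compatible with the Hadamard product via the shift identity above.
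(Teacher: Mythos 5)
Your proof is correct and follows essentially the same route as the paper's: the direct Hadamard-product computation using the shift identity gives $\langle v, f_v^-(p)\rangle_w = -\langle p, p_{IWM}\rangle_w$, and the second sign condition is obtained exactly as in the paper by applying this identity to $f_v^-(p)$ together with self-inverseness. Your explicit care about which inner product corresponds to which index of $T_{\cdot,\cdot}$ is a harmless (and welcome) bit of extra bookkeeping, not a deviation.
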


\begin{proof}
    Fix some $p \in B^*$. Then:
    \begin{align*}
        \langle v, f_v^-(p)\rangle_w &= \langle v, p \odot -(v \odot p_{IWM})\rangle_w = \langle v \odot p \odot -v, p_{IWM}\rangle_w\\
        &= -\langle p, p_{IWM}\rangle_w
    \end{align*}
    Then again, as $f_v^-(p) \in B^*$, using the shown equality with one more application of $f_v^-$ and the fact that its self-inverse gives us that $-\langle f_v^-(p), p_{IWM}\rangle_w = \langle v, f_v^-(f_v^-(p))\rangle_w = \langle v, p \rangle_w$.
Hence we have that $f^-$ swaps and negates a proposal's weights agreement with $v$ with its weighted agreement with $p_{IWM}$. Therefore it indeed maps $T_{i,j}$ to $T_{(-j), (-i)}$.
\end{proof}

\medskip

Now we combine our two bijective maps into a single map $f_v \colon B^* \rightarrow B^*$ defined as follows: \begin{align*}
    f_v(p) =\begin{cases}
        f_v^+(p) &\text{if $p$ is of type $T_{-1,-1}$ or $T_{+1,+1}$}\\
        f_v^-(p) &\text{if $p$ is of type $T_{-1,+1}$ or $T_{+1,-1}$}
    \end{cases}
\end{align*}
It follows from $f_v^+$ and $f_v^-$ being self-inverse and \cref{lem:fv0,lem:fv1} that $f_v$ is also self-inverse.\\

\begin{corollary}\label{cor:fv}
    $f_v$ maps proposals of type $T_{i,j}$ to proposals of type $T_{i,j}$. For any proposal $p$ of type $T_{-1,-1}$ or $T_{+1,+1}$ we have $\langle v, f_v(p)\rangle_w = \langle p, p_{IWM} \rangle_w$ and for any proposal $p$ of type $T_{-1,+1}$ or $T_{+1,-1}$ we have that $\langle v, f_v(p) \rangle_w = -\langle p, p_{IWM} \rangle_w$.
\end{corollary}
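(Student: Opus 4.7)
The plan is to obtain Corollary~\ref{cor:fv} as a near-immediate case analysis combining the piecewise definition of $f_v$ with Lemmas~\ref{lem:fv0} and~\ref{lem:fv1}. All the nontrivial content (bijectivity of $f_v^+,f_v^-$, the type-permutations each induces, and the inner-product identities) has already been done; what remains is to check that the piecewise definition has been arranged exactly so that the ``bad'' type-swaps each sub-map performs cancel out, leaving the type invariant.

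I would split on the type $T_{i,j}$ of $p$. First, suppose $i=j$, so $p \in T_{-1,-1} \cup T_{+1,+1}$. Then by definition $f_v(p) = f_v^+(p)$, and by Lemma~\ref{lem:fv0} we have $f_v^+(p) \in T_{j,i} = T_{i,i} = T_{i,j}$, and also $\langle v, f_v(p)\rangle_w = \langle v, f_v^+(p)\rangle_w = \langle p, p_{IWM}\rangle_w$, which is exactly the first identity claimed. Second, suppose $i=-j$, so $p \in T_{-1,+1}\cup T_{+1,-1}$. Then $f_v(p) = f_v^-(p)$, and by Lemma~\ref{lem:fv1} we have $f_v^-(p) \in T_{-j,-i}$; substituting $-j = i$ and $-i = j$ gives $T_{-j,-i} = T_{i,j}$, so the type is preserved. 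The same lemma yields $\langle v, f_v(p)\rangle_w = \langle v, f_v^-(p)\rangle_w = -\langle p, p_{IWM}\rangle_w$, which is the second identity.

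There is no real obstacle: the whole point of splitting the definition of $f_v$ along the diagonal $i=j$ versus off-diagonal $i=-j$ is precisely that $f_v^+$ reflects types across the diagonal (fixing diagonal types) while $f_v^-$ reflects them through the origin (fixing anti-diagonal types). The only thing to verify is this index bookkeeping, which the above two sentences do. Bijectivity of $f_v$ is then immediate from the fact that both $f_v^+$ and $f_v^-$ are self-inverse and that $f_v$ preserves the partition $\{T_{-1,-1}\cup T_{+1,+1},\ T_{-1,+1}\cup T_{+1,-1}\}$ on which the piecewise rule is applied.
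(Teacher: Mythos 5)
Your proof is correct and takes essentially the same route as the paper, whose proof of \cref{cor:fv} simply states that it follows directly from the definition of $f_v$ together with \cref{lem:fv0,lem:fv1}; you merely spell out the diagonal/anti-diagonal case analysis that the paper leaves implicit. The index bookkeeping ($T_{j,i}=T_{i,j}$ when $i=j$, and $T_{-j,-i}=T_{i,j}$ when $i=-j$) and the two inner-product identities are exactly as needed, so nothing is missing.
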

\begin{proof}
    This follows directly from the definition of $f_v$ as well as \cref{lem:fv0,lem:fv1}.
\end{proof}

\subsubsection{Thought Experiments}
Now we will detail our two quantities of interest through two thought experiments and show that their expectations are equivalent and non-negative. We define a subset of $\Bset^t$ denoted as $B_m = \{p \in \Bset^t \colon \langle p, p_{IWM}\rangle_w > 0\}$. Equivalently, for all $p \in B_m$, $d_H(p, p_{IWM}, w) < 1/2$. As $w$ is externally fixed, this is some known set. We also note that the number of +1s minus the number of -1s for a given topic $j$ in the preference profile can be written as $n \cdot m_j - n(1-m_j) = n(2m_j - 1)$. Finally, let $R_w$ be the set of all relevant topics under weight vector $w$ (for a discussion of relevant topics, see \cref{app:relevant_topics}).\\

\noindent \textbf{Thought Experiment 1.} Our first thought experiment keeps a global counter $X$ initialized to 0 and samples a proposal $p \in B_m$ uniformly at random. For each voter $i \in [n]$, we add $\langle v_i, p \rangle_w$ to $X$ --- equivalently, for each voter we go topic by topic and add $w_j$ to $X$ for each topic $j$ voter $i$ agrees with $p$ on, and subtract $w_j$ from $X$ for each topic $j$ voter $i$ disagrees with $p$ on. 

\begin{theorem}\label{thm:te1}
    $\Ex[X] \geq 0$. If there exists some $j \in R_{\w}$ such that $m_j > 0.5$ then $\Ex[X] > 0$.
\end{theorem}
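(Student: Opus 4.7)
The natural approach is to swap the order of summation and reduce the problem to analyzing each coordinate of $p$ separately. Expanding the inner products gives
\[
X \;=\; \sum_{i=1}^{n}\innerprod{v_i}{p}_w \;=\; \sum_{j=1}^{t} w_j\, p_j \sum_{i=1}^{n} v_{i,j} \;=\; \sum_{j=1}^{t} w_j\, b_j\, p_j,
\]
where $b_j$ is the $\pm 1$-balance of column $j$. By linearity of expectation,
\[
\Ex[X] \;=\; \sum_{j=1}^{t} w_j\, b_j\, \Ex[p_j].
\]
Since we assumed without loss of generality that $+1$ is a majority opinion on every topic, $(p_{IWM})_j = +1$ and therefore $b_j \ge 0$, with $b_j > 0$ iff $m_j > 1/2$. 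So the task reduces to pinning down the sign of $\Ex[p_j]$ for each $j$.

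The sign is governed by \cref{claim:relevant_probs}, which tells us that $\Pr(p_j = (p_{IWM})_j) > 1/2$ iff $j \in R_w$; equivalently, $\Ex[p_j] > 0$ iff $j \in R_w$. I still need the weaker inequality $\Ex[p_j] \ge 0$ for $j \notin R_w$, and this follows from a short bijection argument: the bit-flip $q$ on coordinate $j$ sends every $p \in B_m$ with $p_j = -(p_{IWM})_j$ to some $q(p) \in B_m$, because flipping $p_j$ towards $(p_{IWM})_j$ can only decrease $d_H(p, p_{IWM}, w)$. This injection shows that at least as many proposals in $B_m$ have $p_j = (p_{IWM})_j$ as do not, so $\Ex[p_j] \ge 0$ for every $j$, with strict inequality iff $j \in R_w$.

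Combining these ingredients, each summand $w_j b_j \Ex[p_j]$ is a product of three non-negative factors, proving $\Ex[X] \ge 0$. For the strict part, suppose some $j^\star \in R_w$ satisfies $m_{j^\star} > 1/2$. Then $b_{j^\star} > 0$ and $\Ex[p_{j^\star}] > 0$, and moreover $w_{j^\star} > 0$ (a topic with zero weight cannot lie in any minimal topic group, so cannot be relevant); hence the $j^\star$-term is strictly positive, yielding $\Ex[X] > 0$. The only real subtlety is justifying $\Ex[p_j] \ge 0$ for $j \notin R_w$, which is a one-line supplement to \cref{claim:relevant_probs} rather than a genuine obstacle; the rest is essentially bookkeeping on top of linearity of expectation.
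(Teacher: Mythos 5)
Your proposal is correct and follows essentially the same route as the paper's own proof: the same coordinate-wise decomposition $\Ex[X] = \sum_j w_j b_j \Ex[p_j]$ (the paper writes $b_j$ as $n(2m_j-1)$), the same single-bit-flip injection to get the weak inequality $\Pr(p_j = (p_{IWM})_j) \geq \Pr(p_j \neq (p_{IWM})_j)$, and the same appeal to \cref{claim:relevant_probs} together with $w_j > 0$ for relevant topics to obtain strict positivity. The only cosmetic difference is that the paper drops the $m_j = 0.5$ terms explicitly before identifying $+1$ with $(p_{IWM})_j$, which your argument also handles implicitly since those terms have $b_j = 0$.
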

\begin{proof} We can write $X$ in terms of variables $X_{i,j}$ that take on 1 if $v_{ij} = p_j$ and -1 otherwise as $X = \sum_{i=1}^n \sum_{j=1}^t w_j X_{i,j}$. Then we can evaluate $\Ex[X]$ more easily:\begin{align*}
    \Ex[X] &= \sum_{i=1}^n \sum_{j=1}^t w_j \Ex[X_{i,j}]
    = \sum_{j=1}^t w_j \sum_{i=1}^n \Ex[X_{i,j}]\\
    &= \sum_{j=1}^t w_j \cdot n(2m_j - 1) (\Pr(p_j=+1) - \Pr(p_j=-1))
\end{align*}
We arrive at the last line because if $p_j = +1$ then $\sum_{i=1}^n X_{i,j} = \sum_{i \in [n] \colon v_{i,j} = +1} 1 + \sum_{i \in [n] \colon v_{i,j} = -1} -1 = n(2m_j - 1)$ and similarly if $p_j = -1$ then $\sum_{i=1}^n X_{i,j} = -n(2m_j - 1)$. Note that we can rewrite the sum with only the terms where there is a strict majority of +1 (so $m_j > 0.5$), because all terms where $m_j = 0.5$ evaluate to 0: \begin{align*}
\Ex[X] &= \sum_{\substack{j\in [t] \\ m_j > 0.5}} w_j n(2m_j - 1) (\Pr(p_j=+1) - \Pr(p_j=-1))\\
&= \sum_{\substack{j\in [t] \\ m_j > 0.5}} w_j n(2m_j - 1) (\Pr(p_j=(p_{IWM})_j) - \Pr(p_j \neq (p_{IWM})_j))
\end{align*}
We arrive at the last line by noting that for all $j$ with $m_j > 0.5$, $(p_{IWM})_j$ must be $+1$ as it is the unique majority for that issue.

Now we observe that $\Pr(p_j = (p_{IWM})_j) \geq \Pr(p_j \neq (p_{IWM})_j)$ for all $j \in [t]$. Fix some $j \in [t]$. Then for any $p \in B_m$ such that $p_j \neq (p_{IWM})_j$, we know that there exists a $p' \in B_m$ that matches $p$ on all entries except $j$, where $p'_j = (p_{IWM})_j$. $p'$ is indeed in $B_m$ because $\langle p', p_{IWM} \rangle_w = \langle p, p_{IWM} \rangle_w + 2w_j > 1/2$. Note that this mapping from $p$ to $p'$ is injective, so we have that there are at least as many proposals in $B_m$ with a $(p_{IWM})_j$ for issue $j$ as there are with a $-(p_{IWM})_j$. Hence, when selecting a proposal from $B_m$ uniformly at random, the probability that it has a $(p_{IWM})_j$ for issue $j$ is at least the probability that it has a $-(p_{IWM})_j$.

We also claim that $\Pr(p_j = (p_{IWM})_j) > \Pr(p_j \neq (p_{IWM})_j)$ if and only if $j \in R_w$. To see a proof of this claim, please refer to \cref{app:relevant_topics} \cref{claim:relevant_probs}. Therefore, we know that for all $j \not \in R_w$, $\Pr(p_j = (p_{IWM})_j) - \Pr(p_j \neq (p_{IWM})_j) = 0$. Then we can rewrite our expectation only in terms of the relevant topics:\begin{align*}
    \Ex[X] &= \sum_{\substack{j\in R_w\\ m_j > 0.5}} w_j n(2m_j - 1) (\Pr(p_j=(p_{IWM})_j) - \Pr(p_j \neq (p_{IWM})_j))
\end{align*}
For any relevant topic $j$, $w_j$ is strictly positive (otherwise $j$ would not be in a minimal topic group) and as just mentioned $\Pr(p_j = (p_{IWM})_j) - \Pr(p_j \neq (p_{IWM})_j)$ is strictly positive as well. $n(2m_j - 1)$ must be positive in all of our terms because we only consider $j \in R_w$ such that $m_j > 0.5$. Therefore, every term in the sum is positive. If no relevant topic has $m_j > 0.5$ then $\Ex[X] = 0$. Otherwise, we have that there exists at least one term left in the sum, and $\Ex[X] > 0$.
\end{proof}

\noindent \textbf{Thought Experiment 2.} For our second thought experiment we again sample $p \in B_m$ uniformly at random and maintain a global counter $Y$ initialized to 0. Each voter $i$ compares $v_i$ with $p$. If they approve of $p$ then they add $\langle p, p_{IWM} \rangle_w$ to $Y$, and if they disapprove of $p$ then they subtract $\langle p, p_{IWM} \rangle_w$ from $Y$. If they are neutral (so $\hamdist(v_i, p, w) = 1/2$) then they leave $Y$ unchanged.\\

\begin{theorem}\label{thm:te1_is_te2}
    $\Ex[X] = \Ex[Y]$
\end{theorem}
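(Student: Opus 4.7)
The plan is to exploit linearity of expectation to reduce the statement to a per-voter identity and then apply the bijection $f_v$ from \cref{cor:fv}. By linearity, $\mathbb{E}[X] = \sum_{i=1}^n \mathbb{E}_{p}[\langle v_i, p\rangle_w]$ and $\mathbb{E}[Y] = \sum_{i=1}^n \mathbb{E}_{p}[\sgn{\innerprod{v_i}{p}_w} \cdot \innerprod{p}{p_{IWM}}_w]$, where in both cases $p$ is drawn uniformly from $B_m$. So it suffices to establish the per-voter identity
\begin{equation*}
\sum_{p \in B_m} \innerprod{v}{p}_w \;=\; \sum_{p \in B_m} \sgn{\innerprod{v}{p}_w}\cdot \innerprod{p}{p_{IWM}}_w
\end{equation*}
for each fixed voter $v$, as scaling by $1/|B_m|$ and summing over voters yields the theorem.

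Next, I would partition $B_m$ according to the sign of $\innerprod{v}{p}_w$. Define $B^*_{++} := \{p \in B_m : \innerprod{v}{p}_w > 0\}$, $B^*_{+-} := \{p \in B_m : \innerprod{v}{p}_w < 0\}$, and the neutral set $B^0 := \{p \in B_m : \innerprod{v}{p}_w = 0\}$. Proposals in $B^0$ contribute $0$ to both sides of the per-voter identity (to the left side because $\innerprod{v}{p}_w=0$, and to the right side because $\sgn{0}=0$), so we may ignore them. Since every $p \in B_m$ satisfies $\innerprod{p}{p_{IWM}}_w > 0$, the proposals in $B^*_{++}$ are exactly those of type $T_{+1,+1}$ lying in $B_m$, and those in $B^*_{+-}$ are exactly the ones of type $T_{+1,-1}$ in $B_m$. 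In particular, $B^*_{++} \cup B^*_{+-} \subseteq B^*$, so the map $f_v$ is defined on them.

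Now I would invoke \cref{cor:fv}, which states that $f_v$ preserves type and is self-inverse, hence a bijection on each $T_{i,j}$. Restricted to $B^*_{++}$, $f_v$ is a bijection with $\innerprod{v}{f_v(p)}_w = \innerprod{p}{p_{IWM}}_w$, so reindexing the sum gives $\sum_{p \in B^*_{++}} \innerprod{v}{p}_w = \sum_{p \in B^*_{++}} \innerprod{v}{f_v(p)}_w = \sum_{p \in B^*_{++}} \innerprod{p}{p_{IWM}}_w$. Restricted to $B^*_{+-}$, $f_v$ is a bijection with $\innerprod{v}{f_v(p)}_w = -\innerprod{p}{p_{IWM}}_w$, so analogously $\sum_{p \in B^*_{+-}} \innerprod{v}{p}_w = -\sum_{p \in B^*_{+-}} \innerprod{p}{p_{IWM}}_w$. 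Adding the two identities and folding in the trivially-zero contribution from $B^0$ gives exactly the per-voter identity displayed above, completing the proof.

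The only subtlety is the bookkeeping between $B_m$, $B^*$, and the type classes $T_{i,j}$: one has to check that $f_v$ actually restricts to a bijection on $B^*_{++}$ and on $B^*_{+-}$ (rather than merely on $B^*$), but this is immediate since type-preservation together with $p \in B_m \iff \innerprod{p}{p_{IWM}}_w > 0$ forces $f_v(B^*_{++}) = B^*_{++}$ and $f_v(B^*_{+-}) = B^*_{+-}$. No further structural lemmas are needed beyond \cref{cor:fv}.
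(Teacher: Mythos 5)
Your proof is correct and follows essentially the same route as the paper's: linearity of expectation reduces the claim to a per-voter identity, proposals are split by the voter's approval/disapproval/indifference (i.e., the sign of $\innerprod{v}{p}_w$, matching the paper's $B_+$ and $B_-$), and the self-inverse, type-preserving map $f_v$ from \cref{cor:fv} is used to reindex the sums and swap $\innerprod{v}{p}_w$ with $\pm\innerprod{p}{p_{IWM}}_w$. The only cosmetic difference is that the paper writes the reindexing via a double application $f_{v_i}(f_{v_i}(p))$ rather than directly substituting over the bijection.
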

\begin{proof}
    First we write $Y = \sum_{i \in [n]} Y_i$ where $Y_i$ is $\langle p, p_{IWM} \rangle_w$ if voter $i$ approves of $p$, $-\langle p, p_{IWM} \rangle_w$ if voter $i$ disapproves of $p$, and 0 if voter $i$ is neutral. Then $\Ex[Y] = \sum_{i \in [n]} \Ex[Y_i]$ by linearity of expectation. From the first thought experiment, we have that $\Ex[X] = \sum_{i \in [n]} \Ex[\langle v_i, p \rangle_w]$ by definition of $X$. Hence, to show that $\Ex[X] = \Ex[Y]$, it suffices to show that for all $i \in [n]$ we have that $\Ex[Y_i] = \Ex[\langle v_i, p \rangle_w]$.

    Fix some voter $i \in [n]$. Let $B_+$ and $B_-$ be the sets of proposals in $B_m$ that voter $i$ approves of and disapproves of, respectively. Note that if $i$ is neutral about $B_m$ then $\langle v_i, p\rangle_w = 0$. Then we have that:\begin{align*}
        \Ex[\langle v_i, p\rangle_w] &= |B_m|^{-1} \sum_{p \in B_m} \langle v_i, p\rangle_w\\
        &= |B_m|^{-1} \left(\sum_{p \in B_+} \langle v_i, p\rangle_w + \sum_{p \in B_-} \langle v_i, p\rangle_w\right)
    \end{align*}
    Then, because $f_{v_i}$ is self-inverse, we can write:
    \begin{align*}
        \Ex[\langle v_i, p\rangle_w] &= |B_m|^{-1} \left(\sum_{p \in B_+} \langle v_i, f_{v_i}(f_{v_i}(p))\rangle_w + \sum_{p \in B_-} \langle v_i, f_{v_i}(f_{v_i}(p))\rangle_w\right)\\
        &= |B_m|^{-1} \left(\sum_{p \in B_+} \langle f_{v_i}(p), p_{IWM} \rangle_w - \sum_{p \in B_-} \langle f_{v_i}(p), p_{IWM} \rangle_w\right)
    \end{align*}
    In the last line we use the fact that for $p \in B_+$, $p \in T_{+1,+1}$ because $p \in B_m$ so $\langle p, p_{IWM} \rangle_w > 0$ and $p \in B_+$ implies that $\langle v_i, p \rangle_w > 0$. By \cref{cor:fv}, we know that $f_{v_i}(p)$ is then also in $T_{+1,+1}$, and hence that $\langle v_i, f_{v_i}(f_{v_i}(p))\rangle_w = \langle f_{v_i}(p), p_{IWM} \rangle_w$. Similarly, we have that for $p \in B_-$, $p \in T_{+1,-1}$ because $p \in B_m$ so $\langle p, p_{IWM}\rangle_w > 0$ and $p \in B_-$ implies that $\langle v_i, p\rangle_w < 0$. Then $f_{v_i}(p)$ is also in $T_{+1,-1}$, so $\langle v_i, f_{v_i}(f_{v_i}(p))\rangle_w = -\langle f_{v_i}(p), p_{IWM} \rangle_w$. As $f_{v_i}$ is a bijection on both $B_+$ and $B_-$, summing over a function of $f_{v_i}(p)$ for all $p \in B_+ (B_-)$ is equivalent to summing over that function of $p$ for all $p \in B_+ (B_-)$:\begin{align*}
        \Ex[\langle v_i, p\rangle_w] &= |B_m|^{-1} \left(\sum_{p \in B_+} \langle p, p_{IWM} \rangle_w - \sum_{p \in B_-} \langle p, p_{IWM} \rangle_w\right)\\
        &= \Ex[Y_i]
    \end{align*}
    Therefore, we have shown that $\Ex[X] = \Ex[Y]$.
\end{proof}

\noindent Finally, we can prove~\cref{thm:external_iwm}:
\begin{proof}
    By \cref{thm:te1_is_te2}, we know that $\Ex[Y] = \Ex[X]$, and by \cref{thm:te1} we have that this quantity is always non-negative and is strictly positive when in there exists some relevant topic $j$ such that $m_j > 0.5$. We call this condition the strict majority guarantee. Then there exists some proposal $p \in B_m$ such that $Y \geq 0$, or $Y >0$ if we have the strict majority guarantee.

    We now rewrite $Y$ in a way that makes the connection to majority-supported proposals more explicit. We can write $Y = \langle p, p_{IWM} \rangle_w \cdot ((\text{\# voters supporting } p) - (\text{\# voters opposing } p))$. As $p \in B_m$, we know that $\langle p, p_{IWM} \rangle_w > 0$. Hence we have that:\begin{align*}
       (\text{\# voters supporting } p) - (\text{\# voters opposing } p) &= \frac{Y}{\langle p, p_{IWM}\rangle}\\
        \geq 0
    \end{align*}
    where the inequality is strict if we have the strict majority guarantee. As the number of approving voters is at least the number of disapproving voters, $p$  is weakly majority-supported. With the strict majority guarantee, $Y> 0$ so we have that the number of approving voters is strictly greater than the number of disapproving voters, and $p$ is strictly majority-supported.
\end{proof}

\subsection{Internal Weights: Non-Losing Proposals}\label{app:internal_iwm}

In this section of the appendix, we provide proofs and further discussion of the results presented in \cref{subsec:anscombe_internal}. In \cref{app:positive_internal_iwm} we provide intuition and proof for our upper bounds on $g_{\ell}$ in \cref{thm:indiv_weights_positive}, and in \cref{app:negative_internal_iwm} we provide proof for our lower bounds on $g_{\ell}$ in \cref{thm:indiv_weights_negative}. Finally in \cref{app:wagner} we prove the generalized Rule of Three-Fourths. Throughout this section of the appendix, we use the notation that $w(S) = \sum_{j \in S} w_j$ for any $S \subseteq [t]$.

\subsubsection{Efficiently Finding Reasonable Non-Losing Proposals in Individual Weight Setting}\label{app:positive_internal_iwm}

As discussed in the body, for any proposal $p \in \mathbb{B}^t$, at least one of $p, \overline{p}$ is weakly majority-supported. Therefore, one way to get a simple upper bound on $g_{\ell}$ is to construct $p$ given any $p_{IWM}$ and take $\max\{d_H(p, p_{IWM}, \tilde{w}), d_H(\overline{p}, p_{IWM}, \tilde{w})\}$. Hence, to get a tighter upper bound, we would like to construct $p$ such that this quantity is small. As $d_H(\overline{p}, p_{IWM}, \tilde{w}) = 1 - d_H(p, p_{IWM}, \tilde{w})$, this problem corresponds to the partition problem, because we want to partition the topics into two sets such that their average weight sums are as close as possible. We formalize this intuition below.

\internaliwmpos*

\begin{proof}
    Fix some instance $\votinginstance = (\prefprof, W)$ with average weight vector $\tilde{w}_{\max}$. We will case on $\tilde{w}_{max}$ and construct a subset of topics, $T$, for later use in creating the proposal complement pair of interest.
    \begin{itemize}
    \item If $\tilde{w}_{\max} \in (0, 1/3)$, construct a set $S \subseteq [t]$ by starting with $S = [t]$ and then removing topics until we cannot remove anymore without the total weight going below $1/2$. We know that $\tilde{w}(S) \leq 1/2 + \tilde{w}_{max}$, as otherwise we could remove another topic without the weight dipping below $1/2$ (as all topics have weight at most $\tilde{w}_{max}$). If  $\tilde{w}(S) \leq 1/2 + \tilde{w}_{max}/2$ then let $T = S$. Otherwise, we have that $\tilde{w}(S) > 1/2 + \tilde{w}_{max}/2$. Take any topic $j \in S$ and let $S' := S \setminus \{j\}$ and $S'' = [t] \setminus S'$. Then $\tilde{w}(S') = \tilde{w}(S) - \tilde{w}_j$ and $\tilde{w}(S'') = t- \tilde{w}(S')$. By construction of $S$ we know that $\tilde{w}(S') < 1/2$. We can also lower bound it by using our lower bound on $\tilde{w}(S)$ and upper bounding $\tilde{w}_j$ by $\tilde{w}_{max}$: $\tilde{w}(S') \geq 1/2 + \tilde{w}_{max}/2 - \tilde{w}_{max} = 1/2 - \tilde{w}_{max}/2$. Then we have that $\tilde{w}(S'') \geq 1/2$ and $\tilde{w}(S'') \leq 1/2 + \tilde{w}_{max}/2$. Then let $T = S''$. Hence, in all cases $1/2 \leq \tilde{w}(T) \leq 1/2 + \tilde{w}_{max}/2$.

    \item If $\tilde{w}_{max} \in [1/3, 1/2]$, let $T = [t] \setminus \{j_{max}\}$ where $j_{max} \in [t]$ is the index of some maximum weight topic. Then $1/2 \leq \tilde{w}(T) = 1-\tilde{w}_{max}$.

    \item If $\tilde{w}_{max} \in (1/2, 1)$ then let $T = \{j_{max}\}$. Then $1/2 \leq \tilde{w}(T) =\tilde{w}_{max}$.
    \end{itemize}
     We construct a proposal $p$ as follows: for all $j \in T$, set $p_j = (p_{IWM})_j$, and for all $j \in [t] \setminus T$ set $p_j = -(p_{IWM})_j$. Then we have that $p$ agrees with $p_{IWM}$ on all topics in $T$ and disagrees with $p_{IWM}$ on all other topics. Then $d_H(p, p_{IWM}, \tilde{w}) = 1-\tilde{w}(T)$ and $d_H(\overline{p}, p_{IWM}, \tilde{w}) = \tilde{w}(T)$. We know that at least one of these is weakly majority-supported. Hence, \begin{align*}
        \min_{p \text{ weakly majority-supported}} d_H(p, p_{IWM}, \tilde{w}) &\leq \max\{\tilde{w}(T), 1-\tilde{w}(T)\}\\
        &= \tilde{w}(T)
    \end{align*}
    We arrive at the second line because $\tilde{w}(T) \geq 1/2$. This bound holds for any selection of $\votinginstance$ with the same maximum average weight and for any selection of $p_{IWM}$ for $\votinginstance$. Hence, $g_{\tilde{w}_{max}} \leq \tilde{w}(T)$, which gives us the desired bounds. Additionally, either $p$ or $\overline{p}$ are weakly majority-supported and have the desired distance from $p_{IWM}$. Determining $T$, constructing $p, \overline{p}$, and checking their support takes time $O(nt)$ altogether.  
\end{proof}

We note that if one wants to find a weakly majority-supported proposal with distance as close as possible to $1/2$ from a designated $p_{IWM}$ using the proposal complement pair technique, this can also be done in polynomial time with an additional assumption on the average weight vector. Specifically, we assume that $W$ has \emph{polynomial precision}, so all of its elements can be expressed as rational numbers with a common denominator that is polynomial in $n$ and $t$. We first scale up all elements of $\tilde{w}$ by the common denominator, so that they are all integers. By our assumption, we know that their scaled-up counterparts are polynomial in $n$ and $t$. We then use the standard reduction from partition to knapsack and run the pseudo-polynomial DP for knapsack. Finally, we check which one of $p$ or $\overline{p}$ (or both) has weak majority support.

\subsubsection{Arbitrarily Low Agreement with the IWM}\label{app:negative_internal_iwm}

We provide the full proof for \cref{thm:indiv_weights_negative} here:
\internalnegative*
\begin{proof}
We will actually show something slightly stronger than just upper bounding $g_{\ell}$ in this proof. In both ranges of $\ell$, our constructions will be such that $\mathbf{1}$ is the $\emph{unique}$ IWM for the instance. Hence, the fact that any non-losing proposal is far away from $\mathbf{1}$ not only implies the bound on $g_{\ell}$, it also implies that there are instances where any non-losing proposal is far away from \emph{any} IWM for that instance.

 \textbf{Small $\boldsymbol{\ell}$.} Fix some $k \in \mathbb{N}^+$ and let $\ell = 1/(2k+1)$. We illustrate the construction below for $k = 1$ for clarity and then describe the generalization to larger $k$.

 \[
\prefprof =\begin{matrix}5\  \times \\ 5 \ \times  \\ 5 \ \times \\ 4 \ \times \end{matrix}\begin{bmatrix}
\tikzmarknode{p411}{\texttt{-1}} & \tikzmarknode{p412}{\texttt{+1}} & \tikzmarknode{p413}{\texttt{+1}} \\
\tikzmarknode{p421}{\texttt{+1}} & \tikzmarknode{p422}{\texttt{-1}} & \tikzmarknode{p423}{\texttt{+1}} \\
\tikzmarknode{p431}{\texttt{+1}} & \tikzmarknode{p432}{\texttt{+1}} & \tikzmarknode{p433}{\texttt{-1}} \\
\tikzmarknode{p441}{\texttt{+1}} & \tikzmarknode{p442}{\texttt{+1}} & \tikzmarknode{p443}{\texttt{+1}} \\
\end{bmatrix} \quad
\mathcal{W} =\begin{matrix}5\  \times \\ 5 \ \times  \\ 5 \ \times \\ 4 \ \times \end{matrix}\begin{bmatrix}
 \tikzmarknode{p411w}{3/5} & \tikzmarknode{p412w}{1/5} & \tikzmarknode{p413w}{1/5}\\
 \tikzmarknode{p421w}{1/5} & \tikzmarknode{p422w}{3/5} & \tikzmarknode{p423w}{1/5}\\
 \tikzmarknode{p431w}{1/5} & \tikzmarknode{p432w}{1/5} & \tikzmarknode{p433w}{3/5}\\
 \tikzmarknode{p441w}{1/3} & \tikzmarknode{p442w}{1/3} & \tikzmarknode{p414w}{1/3}
\end{bmatrix}
\]%
\begin{tikzpicture}[overlay,remember picture, shorten >=-3pt, shorten <= -3pt]
\drawcell{p411}{\czero}
\drawcell{p412}{\cone}
\drawcell{p413}{\cone}
\drawcell{p421}{\cone}
\drawcell{p422}{\czero}
\drawcell{p423}{\cone}
\drawcell{p431}{\cone}
\drawcell{p432}{\cone}
\drawcell{p433}{\czero}
\drawcell{p441}{\cone}
\drawcell{p442}{\cone}
\drawcell{p443}{\cone}
\end{tikzpicture}%

The generalization is as follows: we have $t = 2k+1$ topics and $t+1$ types of voters. Denote type $i$'s preference and weight vectors as $v^i$ and $w^i$ respectively. There are $2t-1$ copies of each of the first $t$ types, and $t+1$ copies of the last type of voter. The voters of type $t+1$ prefer the all-ones vector and consider all issues to be equally important: $v^{t+1} = \boldsymbol{+1}$ and $w^{t+1} = (1/t, 1/t, \ldots, 1/t)$. Voters of type $i \in \{1, \ldots, t\}$ are single-issue voters on issue $i$ and prefer the negative outcome, although they do place some importance on the other issues:\[
v^i_j = \begin{cases}
    -1 &\text{if } i = j\\
    +1 &\text{o.w.}
\end{cases} \quad w^i_j = \begin{cases}
\frac{t}{2t-1} &\text{if } i = j\\
\frac{1}{2t-1} &\text{o.w.}
\end{cases}
\]
We say that they are single-issue voters because they vote for a proposal if and only if it agrees with their position on that issue. Note that by symmetry of the weight matrix, all topics have the same weight in the average weight vector. As there are $t = 2k+1$ topics, $\tilde{w}_j = 1/(2k+1) = \ell$ for all $j \in [t]$. Therefore, $\ell$ is indeed $\tilde{w}_{max}$ of this profile. Additionally, all of the weights are nonnegative, and every weight vector type sums to 1, as $\frac{t}{2t-1} + (t-1) \cdot \frac{1}{2t-1} = 1$.

Now we show that $\mathbf{+1}$ is the unique IWM for this preference and weight profile. For any given topic, the total weight on $+1$ is $(t-1) \cdot (2t-1) \cdot \frac{1}{2t-1} + (t+1) \cdot \frac{1}{t}$, and the weight on $-1$ is $(2t-1) \cdot \frac{t}{2t-1} = t$. We have that $(t-1) \cdot (2t-1) \cdot \frac{1}{2t-1} + (t+1) \cdot \frac{1}{t} = t + \frac{1}{t} > t$. Therefore, $+1$ is the strict majority opinion on all topics.

Fix any $p \in \mathbb{B}^t$ such that $d_H(p, \mathbf{1}, \tilde{w}) < 1/2 + \ell/2$. Given our construction, this means that at least $k+1$ of the $2k+1$ topics are set to $+1$ in the proposal. Note that for all $i \in [t]$ such that $p_i = +1$, all voters of type $i$ vote no on $p$. As there are at least $k+1$ of these indices, we have that at least $k+1$ types of voters vote against $p$. This corresponds to at least $(2t-1) \cdot (k+1)$ voters. The proposal can then get at most all of the remaining votes, which amount to $(2t-1) \cdot k + (t+1)$ votes. We have that $(2t-1)(k+1) = (2t-1)k + (2t-1) > (2t-1)k + (t+1)$ because $t \geq 3$. Therefore, $p$ receives strictly more votes against it than for it. Hence, any non-losing proposal must have distance at least $1/2 + \ell/2$ from $\mathbf{1}$. 
 
 \textbf{Big $\boldsymbol{\ell}$.} Fix some $\ell \in (1/2, 1)$. We will construct a preference and weight profile, $\prefprof, \mathcal{W}$, such that $\tilde{w}_{max} = \ell$, $\mathbf{1}$ is the sole IWM, and any non-losing proposal $p$ has $d_H(p, \mathbf{1}, \tilde{w}) \geq \ell$. Let $x \in \mathbb{N}^+$ be such that $x > \max \left\{ \frac{\ell}{1-\ell}, \frac{1}{2\ell - 1}\right\}$. Note that both of the denominators are strictly positive because of our bounds on $\ell$.
\[
\prefprof =\begin{matrix}x\  \times \\ x+1 \ \times  \end{matrix}\begin{bmatrix}
\tikzmarknode{p311}{\texttt{+1}} & \tikzmarknode{p312}{\texttt{+1}} \\
\tikzmarknode{p321}{\texttt{-1}} & \tikzmarknode{p322}{\texttt{+1}} 
\end{bmatrix} \quad
\mathcal{W} =\begin{matrix}x \ \times \\ x+1 \  \times \end{matrix}\begin{bmatrix}
 \tikzmarknode{p311w}{\frac{x+1}{x} \cdot \ell} & \tikzmarknode{p312w}{1 - \frac{x+1}{x} \cdot \ell}\\
 \tikzmarknode{p321w}{\frac{x}{x+1} \cdot \ell} & \tikzmarknode{p322w}{1 - \frac{x}{x+1} \cdot \ell} 
\end{bmatrix}
\]%
\begin{tikzpicture}[overlay,remember picture, shorten >=-3pt, shorten <= -3pt]
\drawcell{p311}{\cone}
\drawcell{p312}{\cone}
\drawcell{p321}{\czero}
\drawcell{p322}{\cone}
\end{tikzpicture}%

 First we show that all elements of $\mathcal{W}$ are in $[0,1]$. It suffices just to show that this is true for the weights on the first issue, as every row sums to 1 and there are only two issues. More specifically, we will show all voters place weight in the range $(0.5, 1)$ on the first topic. First we upper bound:\begin{align*}
 \frac{\ell(x+1)}{x} < 1 \iff \ell x + \ell < x \iff \ell < x(1-\ell) \iff\frac{\ell}{1-\ell} < x
 \end{align*}
where the final inequality holds by our definition of $x$. Additionally,\[
\frac{\ell x}{x+1} > 0.5 \iff 2\ell x > x+1 \iff x > \frac{1}{2\ell - 1}
\]
where again the final inequality holds by our definition of $x$. Putting everything together we have that: \[
0.5 < \frac{\ell x}{x+1} \leq \frac{\ell (x+1)}{x} < 1
\]
Not only does this confirm that $\mathcal{W}$ is a valid weight profile, it also informs us that both types of voters in this scenario are single issue voters on the first issue --- they vote for a proposal if and only if it agrees with their preference on the first issue. Additionally, it gives us that $\tilde{w}_1 = \tilde{w}_{max}$. We have that \begin{align*}
\tilde{w}_1 &= \left(x \cdot \frac{\ell(x+1)}{x} + (x+1) \cdot \frac{\ell x}{x+1}\right) \frac{1}{2x+1}\\
&= \frac{\ell (x+1) + \ell x}{2x+1} = \ell
\end{align*}
Hence, $\tilde{w}_{max} = \ell$ as desired.

Next, we show that $\mathbf{1}$ is the sole IWM for this profile. Clearly $+1$ is the unanimous majority opinion for the second topic. For the first topic, we have that $+1$ is the unique majority opinion if there is strictly more total weight on $+1$ than on $-1$ for that topic:\[
x \cdot \frac{\ell(x+1)}{x} > (x+1) \cdot \frac{\ell x}{x+1} \iff x+1 > x
\]
Therefore, $+1$ is the strict majority for both issues and as such $\mathbf{1}$ is the unique IWM for this profile. 

Finally, we claim that any non-losing proposal $p$ must have $p_1 = -1$. To see this, recall that all voters are single-issue voters on the first topic. All $x+1$ voters with $-1$ as their preference for the first topic would vote against $p$ if $p_1 = +1$. As they form a majority of voters, the proposal would lose. Therefore, $d_H(p, \mathbf{1}, \tilde{w}) \geq \tilde{w}_1 = \ell$.

It is also worth noting 
that this preference profile is single-switch, but Ostrogorski's paradox happens: $(-1, +1) \succ_\votinginstance (+1, +1)$. This instance highlights that the single-switch condition \emph{does not} help for the internal weights setting.
\end{proof}

\subsection{Internal Weights: Condition Precluding Anscombe's Paradox}\label{app:wagner}

Here we include the full proof for our generalized Rule of Three-Fourths, \cref{thm:three-fourths}:
\threefourths*
\begin{proof}
    Fix an instance $\votinginstance = (\prefprof, W)$ in the internal weights setting such that $\tilde{m} \geq 3/4$. Notice if $W$ has identical rows, then this is equivalent to the external weights setting. We assume without loss of generality that the IWM proposal we are interested in verifying gets weak majority support is $\mathbf{1}$. This is indeed without loss of generality because if our original $p_{IWM}$ has a $-1$ for some topic, $j$, we know that $m_j = 0.5$ (as we assume for all of \cref{sec:anscombe} that $m_j \geq 0.5$, so for $-1$ to be a majority, the column must be exactly split). We can then flip all entries in that column of the preference profile --- this is equivalent to having voters express their preferences on the negated version of the issue. Then $(p_{IWM})_j = +1$ as well because its decision on the \emph{negated} version of issue $j$ is the opposite of its former decision. Moreover, the fraction of weight on $+1$ in that column is still $0.5$. So all issue majorities are unchanged, and hence the average majority is also unchanged. Therefore, assume $p_{IWM} = \mathbf{1}$.
    
    We first define a variable $\wones$ counting the total weight placed on $+1$ in a preference profile, and show that if $\wones \geq \frac{3n}{4}$ then $\mathbf{1}$ is weakly majority-supported (hence Anscombe's paradox does not occur). Then we will show that $\wones = n \cdot \avgmaj$.\\

    Let $\wones := \sum_{i=1}^n 1 - d_H(v_i, \mathbf{1}, w_i)$. We claim that if \begin{align*}
        \wones \geq \left(\left\lfloor \frac{n}{2}\right\rfloor + 1\right)\left(\frac{1}{2}\right) + \left\lfloor\frac{n-1}{2}\right\rfloor
    \end{align*}
    then $\mathbf{1}$ is weakly majority-supported. Assume for the sake of contradiction that this is not the case. For any voter $i$ that opposes $\mathbf{1}$, we have that $d_H(v_i, \mathbf{1}, w_i) > 1/2$. As $\mathbf{1}$ is not even weakly majority-supported, we know that more than half of the voters (at least $\lfloor n/2 \rfloor + 1$) oppose $\mathbf{1}$. The remaining at most $n - \lfloor n/2 \rfloor -1 = \lfloor (n-1)/2 \rfloor$ voters still must have a non-negative distance from $\mathbf{1}$. Hence we can upper bound $\wones$:\begin{align*}
        W_{ones} &= \sum_{\substack{i \in [n] \\ i \text{ opposes } \mathbf{1}}} 1-d_H(v_i, \mathbf{1}, w_i) + \sum_{\substack{i \in [n] \\ i \text{ supports } \mathbf{1}}} 1-d_H(v_i, \mathbf{1}, w_i)\\
        &< \sum_{\substack{i \in [n] \\ i \text{ opposes } \mathbf{1}}} 1/2 + \sum_{\substack{i \in [n] \\ i \text{ supports } \mathbf{1}}} 1\\
        &\leq \left(\left\lfloor \frac{n}{2}\right\rfloor + 1\right)\left(\frac{1}{2}\right) + \left\lfloor\frac{n-1}{2}\right\rfloor
    \end{align*}
    
    Hence, we have that $\wones < \left(\left\lfloor \frac{n}{2}\right\rfloor + 1\right)\left(\frac{1}{2}\right) + \left\lfloor\frac{n-1}{2}\right\rfloor$, a contradiction. We now upper bound the RHS as follows:\begin{align*}
        \left(\left\lfloor \frac{n}{2}\right\rfloor + 1\right)\left(\frac{1}{2}\right) + \left\lfloor\frac{n-1}{2}\right\rfloor &\leq \left(\frac{n}{2}+1\right)\frac{1}{2} + \left(\frac{n-1}{2}\right) = \frac{3n}{4}
    \end{align*}
    Hence, if $\wones \geq \frac{3n}{4}$ then our previous condition is satisfied, and the issue-wise majority is non-losing.\\

    Now we'll show the claimed relationship between $\avgmaj$ and $\wones$:\begin{align*}
        n\cdot \avgmaj &= n \sum_{j=1}^t \avgweight_j m_j = n\sum_{j=1}^t \avgweight_j \left(\frac{1}{n\avgweight_j}\sum_{i=1}^n w_{i,j} \cdot \mathbb{I}(v_{i,j} = +1)\right)\\
        &= \sum_{j=1}^t \sum_{i=1}^n w_{i,j} \cdot \mathbb{I}(v_{i,j} = +1) = \wones
    \end{align*}
    Therefore we have that $\wones \geq \frac{3n}{4} \iff \avgmaj \geq \frac{3}{4}$.

    To prove the second claim of the theorem, we fix a new instance $\mathcal{I'} = (\mathcal{P}, w)$ in the external weights model such that $m_j \geq 3/4$ for all $j \in [t]$. Therefore, $\boldsymbol{1}$ is the unique IWM. Assume for sake of contradiction that Ostrogorski's paradox occurs, so there is some proposal $p \neq \mathbf{1}$ such that $p \succ_{\votinginstance'} \mathbf{1}$. Then we know by \cref{lem:ansc_ost_relation} that there exists a sub-instance $\votinginstance''$ in which Anscombe's paradox occurs, where $\votinginstance''$ is obtained by restricting $\mathcal{I'}$ to some subset of issues $T \subseteq [t]$ and renormalizing the external weight vector. Note that the majorities on the topics in $T$ are unchanged from the original profile. Hence, the average majority in $\votinginstance'$ is at least $3/4$, because each individual issue majority is at least $3/4$. This is a contradiction to the claim proven above. Therefore, Ostragorski's paradox does not occur in $\mathcal{I'}$.
\end{proof}

\end{document}